\title{\textbf{Average-radius list-recoverability \\ of random linear codes}\\ \vspace{.3cm}
 \Large It really ties the room together}
\author{Atri Rudra\thanks{\texttt{atri@buffalo.edu}.  University at Buffalo, SUNY. Research supported in part by NSF grant\# CCF-1319402.} \and Mary Wootters\thanks{\texttt{marykw@stanford.edu}.  Stanford University.}}
\definecolor{DarkGreen}{rgb}{0.1,0.5,0.1}
\definecolor{DarkRed}{rgb}{0.5,0.1,0.1}
\definecolor{DarkBlue}{rgb}{0.1,0.1,0.5}
\newcommand{\cC}{\ensuremath{\mathcal{C}}}
\newcommand{\cE}{\ensuremath{\mathcal{E}}}
\newcommand{\F}{{\mathbb F}}
\newcommand{\bX}{\mathbf{X}}
\newcommand{\PR}[1]{{\mathbb{P}}\left\{ #1\right\}}
\newcommand{\EE}{\mathbb{E}}
\newcommand{\inabs}[1]{\left|#1\right|}
\newcommand{\ip}[2]{\ensuremath{\left\langle #1,#2\right\rangle}}
\newcommand{\inset}[1]{\left\{#1\right\}}
\newcommand{\inparen}[1]{\left(#1\right)}
\newcommand{\suchthat}{\,:\,}
\newcommand{\pl}{\mathrm{pl}}
\newcommand{\cols}{\mathrm{cols}}
\newcommand{\argmax}{\mathrm{argmax}}
\newcommand{\polylog}{\mathrm{polylog}}
\newcommand{\poly}{\mathrm{poly}}
\newcommand{\vol}{\ensuremath{\operatorname{Vol}}}
\newcommand{\spn}{\ensuremath{\operatorname{span}}}
\newcommand{\dist}{\ensuremath{\operatorname{dist}}}
\newcommand{\ind}[1]{\ensuremath{\mathbf{1}_{#1}}}
\newcommand{\eps}{\varepsilon} 
\renewcommand{\epsilon}{\varepsilon}
\newcommand{\ar}[1]{{\footnotesize  [\textbf{\textcolor{blue}{#1}} \textcolor{blue!60!black}{--Atri}]\normalsize}}
\newcommand{\TODO}[1]{\textcolor{red}{\textbf{TODO: #1}}}
\newtheorem{theorem}{Theorem} [section]
\newtheorem{lemma}[theorem]{Lemma} 
\newtheorem{definition}[theorem]{Definition}
\newtheorem{corollary}[theorem]{Corollary} 
\newtheorem{remark}{Remark}
\newtheorem{claim}[theorem]{Claim}
\newtheorem{subclaim}[theorem]{Sub-Claim}
\newtheorem{proposition}[theorem]{Proposition}
\newtheorem{question}[theorem]{Question}
\newcommand{\nmrk}[1]{\textcolor{red}{{#1}}}
\renewcommand{\nmrk}{}
\newcommand{\arm}[1]{\textcolor{red}{{#1}}}
\renewcommand{\arm}{}
\newcommand{\mrk}[1]{#1}
\newcommand{\tp}{\mathrm{top}}
\newcommand{\argtp}{\mathrm{argtop}}
\newcommand{\ml}{\mrk{,\ell}}
\newcommand{\dmax}{d_{j_{\max}}}
\newcommand{\barmu}{\nmrk{\overline{\mu}}}
\newcommand{\centermu}{\nmrk{\frac{\ell}{q}}}
\newcommand{\onebyq}{\nmrk{\frac{1}{q}}}
\begin{document}
\maketitle

\setcounter{page}{0}
\thispagestyle{empty}

\begin{abstract}
	We analyze the list-decodability, and related notions, of random linear codes.  This has been studied extensively before: there are many different parameter regimes and many different variants.  Previous works have used complementary styles of arguments---which each work in their own parameter regimes but not in others---and moreover have left some gaps in our understanding of the list-decodability of random linear codes.  In particular, 
none of 
these arguments work well for list-\emph{recovery}, a generalization of list-decoding that has been useful in a variety of settings.

	In this work, we present a new approach, which works across parameter regimes and further generalizes to list-recovery.
	This argument unifies the landscape of this problem, and can 
	establish the following results about list-decoding and list-recovery:
\begin{itemize}
	\item \textbf{Better results for list-decoding and list-recovery over large fields.}  We show that random linear codes over large fields are list-recoverable and list-decodable up to near-optimal rates (within a multiplicative factor of $0.99$), with list sizes that depend quasi-polynomially on the gap-to-capacity.  Previous results for list-decoding had extraneous log factors and could not even guarantee a constant rate for super-constant field sizes; moreover they did not extend well to list-recovery.
	\item \textbf{Quasipolynomial  list sizes for high-rate list-recovery of random linear codes.}  While we know of several constructions of high-rate list-recoverable codes with small list sizes, to the best of our knowledge, none of them are linear.   Linearity is desirable, as such codes are used as building blocks in other coding-theoretic constructions.  Previous results could only guarantee list sizes that were exponential in the parameters of interest; our approach can obtain quasipolynomial list sizes.
	\item \textbf{Optimal-rate average-radius list-decoding over constant alphabets.} 
Using the same proof as for our other results, for a large range of parameters, we can match the optimal rate guarantees (albeit with a quasipolynomial dependence in the gap-to-capacity in the list size instead of polynomial) implied by the results of Guruswami, H{\aa}stad, and Kopparty for average-radius list-decoding.  This does not yield any new results, but it illustrates the generality of our approach.
\end{itemize}

\end{abstract}
\renewcommand{\arraystretch}{1.4}
	\clearpage
\section{Introduction}\label{sec:intro}
The \em list-decodability \em of random linear codes, and related notions, has been studied for decades in a variety of parameter regimes~\cite{ZP, elias91, GHK11, cgv2012, woot2013, RW14}.  

One reason that this has been studied so much is motivations throughout coding, complexity, and cryptography.  For example, in coding theory, list-decodability and list-recoverability of random linear codes is useful as a building block for other coding-theoretic constructions~\cite{GI01,HW15}, and ideas from these explorations have been useful in answering other coding-theoretic questions, for example the list-decodability of Reed-Solomon codes~\cite{RW14,RW15}.  In complexity theory, list-decodability and related notions are tied to notions in pseudorandomness like extractors, expanders, and pseudorandom generators~\cite{vadhan2011}; in cryptography the task of algorithmically list-decoding random linear codes is related to learning with errors (LWE) and learning parities with noise (LPN) and is assumed to be a hard problem.

Another reason that this line of questioning has attracted so much study is that, although couched in the language of coding theory, these questions are getting at a much deeper mathematical question about the geometry of random subspaces:
\begin{question}\label{q:bigq}
To what extent does a random subspace over a finite field behave (combinatorially speaking) like a completely random set of the same size?
\end{question}
Question~\ref{q:bigq} can be seen as a fundamental question about derandomization.  Completely random sets are, generally speaking, combinatorially very well-behaved, and seeing this usually amounts to using independence between the points to obtain concentration and taking a union bound.  The points of a random subspace are pairwise independent but are not even three-wise independent, and such approaches generally fail.

To state Question~\ref{q:bigq} more precisely, let $\cC \subset \F_q^n$ be a random $k$-dimensional subspace in $\F_q^n$.  In coding-theoretic language, $\cC$ is a \em random linear code \em of \em rate \em $R = k/n$.  

The first question one might ask is about pairwise distances: what is the minimum distance between any two points in a random subspace of dimension $k$ in $\F_q^n$?  How does this compare with the minimum distance between any two points of a random set of size $q^k$?  In this case the answer is well-understood.  A random subspace (aka, a random linear code) lies on the \em Gilbert-Varshamov bound, \em and the pairwise distances are similar to that of a (slightly modified) completely random set.  

The next question one might ask is about \em sets \em of points: what is the largest radius of any  set of points of size $L$ in a random subspace?  Phrased differently, how many points of a random subspace lie in any Hamming ball of a particular radius?  Does this quantity behave like it does for a completely random set of size $q^k$?  In coding-theoretic language, this is the \em list-decoding \em question.  
There are many related notions: how many points lie in any combinatorial cube?  This is the question of \em list-recovery. \em  How clustered are any set of $L$ points?  This is the question of \em average-radius list-decoding. \em

  Our primary motivation for this work is to further our understanding of Question~\ref{q:bigq}, which underlies all of these coding theoretic questions.

\subsection{Basic definitions and context}
We discuss related work in more detail when we present our results below in Section~\ref{sec:results}, but first we briefly survey what is known about Question~\ref{q:bigq}.  The most basic notion (after pairwise distance) that we might study is \em list decoding. \em
\begin{definition}
A code $\cC \subseteq \F^n$ is \em $(\rho, L)$-list-decodable, if for all sets $\Omega \subset \cC$ of size $|\Omega| \geq L$, we have, for all $z \in \F^n$,
\[ \max_{c \in \Omega} \dist(c, z) \geq \rho. \]
\end{definition}
Above, $\dist(x,y) := \frac{1}{n} \sum_{i=1}^n \ind{x_i \neq y_i}$ is relative Hamming distance.
Notice that an equivalent definition is that $\cC$ is $(\rho, L)$ list-decodable if for all $z$, the \em list size \em $\inabs{ \inset{ c \in \cC \suchthat \dist(c, z) < \rho } }$ is strictly less than $L$.

List-decodability was introduced by Elias and Wozencraft~\cite{elias,wozencraft} in the 1950's; one motivation is that it turns out that there are $(\rho, L)$-list-decodable codes for very large $\rho$ (approaching $1$) and for very small $L$ (constant).  This is somewhat surprising: it means that moderately reliable communication is possible even when an adversary is allowed to corrupt nearly $100\%$ of the transmitted symbols.  More precisely, the \em list-decoding capacity theorem \em says that there are codes of rate $R = k/n = R^* - \delta$, where \[R^* = 1 - H_q(\rho),\] which are $(\rho, L)$ list-decodable, where $L = \poly(1/\delta)$ does not depend on $n$, and $H_q(x)$ is the $q$-ary entropy (defined in~\eqref{eq:entropydef}).  Moreover, this rate $R^*$ is optimal, in the sense that $L$ must be exponentially large in $n$ for any code of rate  significantly larger than $R^*$.
The existence proof is in fact a random coding argument:  with high probability, a completely random set $\cC \subset \F_q^n$ of size $q^k$, for $k/n = 1 - H_q(\rho) - \delta$ will be $(\rho, O(1/\delta))$-list-decodable.

Returning to Question~\ref{q:bigq}, it is natural to ask whether a random subspace $\cC \subset \F_q^n$ of dimension $k$ (and hence size $q^k$) does just as well as a completely random subset.  On the one hand, one might expect this: after all, such a code $\cC$ is easily seen to have good distance, why not good list-decodability?  On the other hand, the argument for completely random codes crucially uses the independence of the elements of $\cC$, which a random subspace does not have.  In 1981, Zyablov and Pinsker gave an argument which shows that $k/n$ can be as large as $R^* - \delta$, but with a list-size exponentially large in $\delta$.  Since then there have been many other works attempting to pin down exactly what trade-offs a random subspace achieves for this problem.

Along the way in the study of list-decodability, several related notions have surfaced.  The first is \em average-radius list-decoding. \em 
\begin{definition}
A code $\cC \subseteq \F^n$ is $(\rho, L)$-average-radius list-decodable if for all sets $\Omega \subseteq \cC$ of size $|\Omega| \geq L$, we have, for all $z \in \F^n$,
\[ \frac{1}{|\Omega|} \sum_{c \in \Omega} d(c, z) \geq \rho. \]
\end{definition}
Notice that average-radius list-decodability is stronger than standard list-decodability, as the average is no larger than the maximum.  However, it turns out to be an extremely useful strengthening.  First, because it is stronger, it is easier to prove lower bounds on average-radius list-decodability than on standard list-decodability, and this is the approach taken in~\cite{gurnar2013}.  Second and more surprisingly, it turns out that it is actually often more natural to prove \em upper \em bounds on average-radius list-decodability, as the sum is nicer to work with than the maximum.  This was the approach taken in~\cite{cgv2012,woot2013,RW14,RW15}.  Further, most natural bounds (both upper and lower) for list-decoding apply to average-radius list-decoding, notably the list-decoding capacity theorem mentioned above.  So, once again we know that, with high probability, a completely random set $\cC \subset \F_q^n$ of size $q^k$ is $(\rho, O(1/\delta))$-average-radius list-decodable with rate $k/n = R^* - \delta$.  The question is whether this applies to random subspaces as well.

Another notion that has been explored recently, both in coding theory and beyond, is that of \em list-recovery. \em 
\begin{definition}
Suppose $\ell \leq |\F|$.
A code $\cC \subset \F^n$ is $(\alpha, \ell, L)$-list-recoverable if for all sets $\Omega \subseteq \cC$ of size $|\Omega| \geq L$ and for all collections of sets $S_1 ,\ldots, S_n \subseteq \F$ of size $|S_i| \leq \ell$, 
\[ \min_{c \in \Omega} \frac{1}{n}\sum_{i =1}^n \ind{ c_i \in S_i } \leq \alpha. \]
\end{definition}
List-recoverability is a generalization list-decodability, where the target $z$ is replaced with a list of sets $S_1,\ldots,S_n$, and distance is defined by $\frac{1}{n} \sum_{i=1}^n \ind{ c_i \not\in S_i}$ rather than by $\frac{1}{n} \sum_{i=1}^n \ind{c_i \neq z_i}$.  That is, $(\alpha, 1, L)$-list-recoverability is the same as $(1 - \alpha, L)$-list-decodability.\footnote{In the literature, it is common for the first parameter in $(\alpha, \ell, L)$-list-recoverability to be agreement, while the first parameter in $(\rho, L)$-list-decodability is disagreement.  We stick with this convention for the introduction; in the technical body of the paper, we will always consider average-radius list-recovery, which captures all of these notions, and we will use the convention that the first parameter is agreement.}
Notice that the question is still interesting when $\alpha = 1$; we refer to this $(1, \ell, L)$-list-recovery as \em $(\ell, L)$ (zero-error) list recovery.  \em  Geometrically, $(\ell, L)$ zero-error list-recovery guarantees that fewer than $L$ points of $\cC$ lie in any $\ell \times \ell \times \cdots \times \ell$ combinatorial rectangle in $\F^n$.
When $\alpha < 1$, the question is the same, but about combinatorial rectangles that have been ``puffed up" by a Hamming ball.

List-recovery was first introduced in the context of list-decoding; many classical list-decoding algorithms (for example, the Guruswami-Sudan algorithm for Reed-Solomon codes~\cite{GS99}) are also list-recovery algorithms, and list-recovery algorithms can be used as building blocks to obtain list-decoding algorithms~\cite{GI02,GI03,GI04a}.  However, list-recovery has since become interesting on its own, and has found numerous applications outside of coding theory~\cite{GI04a,NPR11,INR10, GNPRS13}.
As with list-decoding, there is a list-recovery capacity theorem, and a completely random code $\cC \subseteq \F^n$ achieves it; in this context, the optimal rate is
\[ R^* = \arm{1 - H_{q/\ell}(1 - \alpha) - \log_q(\ell)}. \]  
The natural question, following Question~\ref{q:bigq}, is whether a random linear code of rate approaching $R^*$ can be $(\alpha, \ell, L)$-list-recoverable with reasonable list-size $L$.    In fact, answering this question has immediate applications in coding theory, where the existence of linear list-recoverable codes (over constant or logarithmically-sized alphabet) is a frequent building block: in~\cite{GI01,venkat-thesis}, the authors were not able to obtain a good enough result for the list-decodability of random linear codes, and instead used a family of quasi-linear codes.  In~\cite{HW15}, the authors did use an off-the-shelf result for random linear codes, and this resulted in an exponential increase in their output list size.

\vspace{.5cm}

So far we have discussed several instantiations of Question~\ref{q:bigq}.  We would like some sort of unified answer to all of these questions: not only are the questions interesting and important in coding theory in their own right, but understanding them improves our understanding of Question~\ref{q:bigq}.    All of these questions have been studied before, but there are still gaps in our knowledge.  Moreover, as we discuss below, the approaches that have worked so far are quite varied and do not easily port from one problem to the next.

\paragraph{A disjointed landscape.}
In addition to many different problems (list-decoding, average-radius list-decoding, list-recovery), there are many different parameter regimes.  We may ask for $\rho \to 0$,  which necessitates $R \to 1$ (which is the typical setup in traditional communication setting).  Or we may ask for $\rho \to 1$ and $R \to 0$ (which is useful in computational complexity applications), or anything in between.  We may consider finite fields of size $2$ or $\omega(1)$ or even $\omega(2^n)$.

Broadly speaking, there are three sorts of approaches that have been applied.  The most straightforward, in~\cite{ZP,venkat-thesis}, uses linear independence to mimic the argument in the completely random case.  More formally, if $v_1,\ldots,v_L$ are linearly independent vectors in $\F^k$, and $\mathbf{G} \in \F^{n \times k}$ is a random full-rank matrix, then the vectors $\mathbf{G}v_1,\ldots,\mathbf{G}v_L$ (which are points in our random subspace) are independent, uniformly distributed vectors.  In other words, we may apply the straightforward logic of random codes to the random linear case.  However, the downside of these approaches is that they result in exponentially large list sizes: to guarantee the existence of $L$ linearly independent vectors in a set $\Lambda \subseteq \F^k$, this set $\Lambda$ must have size at least $q^L$.   

A second approach is that of~\cite{GHK11}, who used the linear-algebraic structure of these sets $\Lambda \subseteq \F^k$ in order to get a handle on things.  In a bit more detail, their main technical lemma states that if $v_1,\ldots,v_L$ are chosen at random from a Hamming ball, then their span is unlikely to be contained in that ball.   This is a beautiful argument, but it works only for (standard) list-decoding: it is not clear (to us) how to make it work for list-recovery.  Moreover, it only works in some parameter regimes: where $q$ is constant, and where the disagreement $\rho$ is bounded away from $1 - 1/q$.   As $\rho$ approaches $1 - 1/q$, say $\rho = 1 - 1/q - \delta$, then list size guaranteed by the approach depends exponentially on $\delta$, while a completely random code would obtain a polynomial dependence.

A final line of work uses ideas from high-dimensional probability theory~\cite{cgv2012,woot2013,RW14}.  These arguments bring powerful tools from stochastic processes to bear in coding theory, but again they have their drawbacks.  First, these tools seem too coarse to narrow in on the precise rate: the rate guarantees are off by at best a large constant factor, which makes these results only interesting when the rate is very small.\footnote{That is, the difference between $\delta$ and $\delta/100$ is not as important, if $\delta \to 0$ and we care about the dependence on $\delta$ elsewhere; however, the difference between $1 - H_q(\rho)$ and $(1 - H_q(\rho))/100$ is much more important when $\rho$ is constant and, say, $1 - H_q(\rho) = 1/2$.} 
When the target rate $R^*$ is very small, then the disagreement $\rho$ should be very large, like $\rho = 1 - 1/q - \delta$.  Thus, this approach is effective in a disjoint parameter regime from that of~\cite{GHK11} discussed above.  Second, again these tools work best for small $q$~\cite{cgv2012,woot2013}, although there are results for large $q$~\cite{RW14}.  Third, although these tools do say something for list-recoverability, they do not give the correct dependence on the input list size $\ell$.  

Thus, these three approaches work in fundamentally different ways.  The first works in all parameter regimes, but always results in a list size that is exponential in the gap-to-capacity (that is, the difference between the rate and $R^*$).  The second can alleviate this exponential dependence, but only when the radius $\rho$ is bounded away from $1 - 1/q$.  The third can also alleviate this dependence, but only when $\rho$ approaches $1 - 1/q$.  Moreover, the last two approaches do not extend well to list-recovery.

\paragraph{This work: tying it together.}
In this work, we present a \em single argument \em  that works in all of the aforementioned parameter regimes, and additionally works for list-recovery and average-radius list-decoding.  In fact, we introduce a new concept, \em average-radius list-recovery, \em which generalizes both.
The main downside with our approach is that it results in list sizes which, while independent of $n$, are quasipolynomial in the parameters of interest, rather than polynomial (which would be optimal).  However, our approach is still able to answer several instantiations of Question~\ref{q:bigq}.  We discuss three such applications in detail in Section~\ref{sec:results}.  They include:
\begin{itemize}
	\item \textbf{Improved results for list-decoding and list-recovery of low-rate codes over large alphabets.}
	In this setting, previous achievable rates were off from the optimal $R^*$ by polylogarthmic factors, and depended on $q$ and $\ell$.  Our new results are off by only a factor of $0.99$ (or more generally by $1 -\gamma$ for any constant $\gamma > 0$), and have no dependence on $q$ or $\ell$.
	\item \textbf{Improved list sizes for high-rate list-recovery of random linear codes.}  Previous results for high rate random linear codes could only establish list sizes exponential in $\ell$; our results can reduce this dependence to quasipolynomial.  To the best of our knowledge, our result gives the first \em linear \em high-rate codes which are list-recoverable with output list size that depends sub-exponentially on the input list size $\ell$.
	\item \textbf{Average-radius list-decoding for constant-rate codes over small alphabets.}  By combining the results of~\cite{GHK11} and \cite{RW15}, it was already known that codes of rate $R^* - \delta$ were $(\rho, L)$-average-radius list-decodable with small list sizes, when $\rho$ is bounded away from $1 - 1/q$.  Our argument can match this result (albeit with slightly worse list sizes).  While we do not establish any new results here, we highlight this case to demonstrate that our argument is effective in many different parameter regimes: the approach of~\cite{GHK11} does not seem to generalize to the previous two settings.
\end{itemize}

The outline of the argument is very simple, and we sketch it below in Section~\ref{sec:outline}.  First, we close this section with a roadmap and some basic notation that we will need.  

\paragraph{Outline.}
In Section~\ref{sec:outline} below, we give the high-level idea of our approach.  This approach results in a broad main theorem, which we can specialize in many different ways.
Before going into the technical details of the proof or the main theorem, we discuss in Section~\ref{sec:results} these specializations, and how they relate to previous work.
Finally, to get into the technical meat, we lay the formal groundwork for our approach in Section~\ref{sec:notation} and prove a few basic lemmas that we will need.  
We begin with a warm-up argument in Section~\ref{sec:easy}.  This section presents a special case of our argument for zero-error list-recovery.  This statement is subsumed by our more general theorem, but it highlights the structure of the argument; we encourage the reader to look at this section first for intuition.  Finally, in Section~\ref{sec:hard}, we prove our more general theorem about average-radius list-recoverability.  This requires a bit more care and a lot more notation than the argument in Section~\ref{sec:easy}, but the basic insight is the same.

\paragraph{Notation.}  We briefly define some notation here, which we will use throughout the paper.  (We will set up additional notation that we will need for our proofs in Section~\ref{sec:notation}).  We use $\lg()$ and $\log()$ to denote the logarithm base $2$ and $e$ respectively, and $\exp_q(x) := q^x$.
For a prime power $q$, we will denote by $\F_q$ the finite field with $q$ elements; when the size of the field is clear or does not matter, we will refer to this is $\F$.  As usual, $\EE$ and $\mathbb{P}$ will denote expectation, and probability respectively.
  For a set $X \subset \F^d$, $\spn(X)$ denotes the linear span $\inset{ \sum_{x \in X} \alpha_x x \suchthat \alpha_x \in \F \forall x }$ and for a subspace $V \subseteq \F^d$, $\dim(V)$ refers to the dimension of $V$.   
The volume of the Hamming ball
$\inset{ x \in \F^n \suchthat \dist(x,0) \leq \rho }$ is denoted by $\vol_{|\F|}^n(\rho)$.
Throughout, $C,C',C''$ and so on will denote constant factors, independent of all the relevant parameters unless otherwise stated.  We will overload these constants $C, C', C''...$ periodically, although not within a given scope.
We will use $H_q(x)$ to denote the $q$-ary entropy,
\begin{equation}
\label{eq:entropydef}
 H_q(x) := x \log_q(q - 1) - x\log_q(x) - (1 - x)\log_q(1 - x).
\end{equation}

\section{Overview of approach}\label{sec:outline}
In this section, we give a brief overview of our approach.  
Suppose that $\cC\subseteq \F_q^n$ is a random linear code of dimension $k$; we may write
\begin{equation}\label{eq:linearcode}
 \cC= \inset{ \mathbf{G} v \suchthat v \in \F_q^k }
\end{equation}
for a random matrix $\mathbf{G} \in \F_q^{n \times k}$.\footnote{Strictly speaking, this isn't quite a random subspace of dimension $k$, since the rank of $\mathbf{G}$ might be less than $k$.  However, this event is unlikely and we will ignore it for now. Also note that our generator matrix $\mathbf{G}$ has dimension $n\times k$ instead of the $k\times n$ which is traditional in some communities.}
Recall that we want to understand, broadly speaking, how ``clustered" any set of $L$ elements of $\cC$ is.  For list-decoding, we want to know if any $L$ lie in a ball of a particular radius; for zero-error list-recovery, if any $L$ lie in a combinatorial rectangle; for average-radius list-decoding, how ``close" are any $L$ to a central point $z$, in some averaged sense.   We will define a notion called \em average-radius list-recovery, \em which subsumes all of these notions.  
\begin{definition}\label{def:avgradlr}
Suppose that $\ell < q$ are integers, and $q$ is a prime power.
A code $\cC \subset \F_q^n$ is $(\eps, \ell, L)$-average-radius list-recoverable if, for all sets $\Omega \subset \cC$ with $|\Omega| \geq L$ and for all sets $S_1,\ldots,S_n \subset \F_q$ with $|S_i| \leq \ell$,
\[ \frac{1}{|\Omega|} \sum_{c \in \Omega} \frac{1}{n} \sum_{i=1}^n \ind{ c_i \in S_i } \leq \eps. \]
\end{definition}
It is easy to see that $(\eps, \ell, L)$-average-radius list-recoverability implies $(\eps, \ell, L)$-list-recoverability, and that $(\eps, 1, L)$-average-radius list-recoverability implies $(1 - \eps, L)$-average-radius list-decodability and hence $(1 - \eps, L)$-list-decodability.  Our main theorem establishes average-radius list-recoverability, and hence can establish all of these other notions.

Suppose that $\cC$ has the form \eqref{eq:linearcode}, so that a codeword $c \in \cC$ has the form $c = \mathbf{G}v$ for some $v \in \F_q^k$, and the $i$'th symbol of $c$ is $c_i = \ip{g_i}{v}$, where $g_i$ is the $i$'th row of $\mathbf{G}$.  In this case, the condition in Definition~\ref{def:avgradlr} reads that for all $S_1,\ldots, S_n \subset \F_q$ of size at most $\ell$, for all sets 
$\Lambda \subseteq \F_q^k$ of size $L$,
\begin{equation}
\label{eq:goodthing}
 \sum_{v \in \Lambda} \sum_{i = 1}^n \ind{\ip{ g_i }{ v } \in S_i } \leq \eps L n,
\end{equation}
That is, there should be no sets of messages $\Lambda$ of size $L$ that agree too much, on average, with the lists $S_i$.  
Thus, our goal will be to show that \eqref{eq:goodthing} holds for all $S_1,\ldots,S_n$, and for all $\Lambda$.

Our starting point is the approach of Zyablov and Pinsker~\cite{ZP} discussed above.  Their observation was 
 that if the underlying message vectors $\Lambda = \inset{v_1,\ldots, v_L}$ happen to be linearly independent, then there is no problem: the encodings $\mathbf{G}v_1,\ldots, \mathbf{G}v_L$ under a random linear map $\mathbf{G}$ are independent random variables, and things work as expected.  At this point their argument ended: if the list size $|\Lambda|$ is exponentially large, then $\Lambda$ must contain some large linearly independent subset, and we are done.
 
 However, we wish to avoid the large list size.  We know from the discussion above that if the underlying set of messages $\Lambda$ is high-dimensional, then we are okay, and \eqref{eq:goodthing} holds.  The problem arises if $\Lambda$ contains many linear dependencies.   
 
 Our main idea is to take advantage of such dependencies to run a recursive argument: if indeed $\Lambda$ is ``problematic," meaning that \eqref{eq:goodthing} fails to hold for some sets $S_1,\ldots,S_n$, then $\Lambda$ has many linear dependencies.  We show that, when formalized in the correct way, this means that $\Lambda$ contains a large, low-dimensional subset $\Lambda'$.   Consider projecting $\Lambda$ onto the space spanned by $\Lambda'$, and also projecting the rows $g_i$ of the matrix $\mathbf{G}$ onto this space to obtain rows $g_i'$.  
 The sets of inner products 
 \[ \inset{\ip{ g_i }{ v } \suchthat v \in \Lambda } \]
 for $i = 1, \ldots, n$ are related to the sets
 \[ \inset{ \ip{g_i'}{v'} \suchthat v' \in \Lambda' }. \]
 Indeed, the latter are just subsets of the former, if the projection is done in the right way.  This suggests that if $\Lambda$ was problematic, then $\Lambda'$ is also problematic, since the condition \eqref{eq:goodthing} depends only on these inner products.  This transitivity of problematic-ness is not obvious, but turns out to be true.  
 
 At this point, we may recurse: if $\Lambda'$ is problematic, then just as with $\Lambda$, it has some linear structure at fault, and we may exploit this to find a large, even lower-dimensional subset $\Lambda''$.   We continue this way until we eventually find a subset $\Gamma$ of dimension $d$ and of size larger than $q^d$.  This is a contradiction, as we are always dealing with sets, rather than multisets.  So we conclude that there was no problematic $\Lambda$ to begin with.
 
 The main technical challenge in implementing the above idea is finding the right way to quantify ``problematic" and ``linear structure."  We will use a quantity which we call $\sigma_p(\Lambda)$:
 \[ \sigma_p(\Lambda) := \EE_{v_1,\ldots, v_p \in \Lambda} q^{-\dim(v_1,\ldots,v_p)}, \]
 where above the expectation is over $v_1,\ldots, v_p$ drawn uniformly at random, with replacement, from $\Lambda$.  
 Intuitively, this is some measure of the linear dependencies in $\Lambda$.  If $\Lambda$ were very low-dimensional, or contained a large low-dimensional subset, then we would likely draw many linearly dependent vectors, and $q^{-\dim(v_1,\ldots,v_p)}$ would be not too small.  On the other hand, if $\Lambda$ had very little linear structure, then we might expect to usually draw $p$ linearly independent vectors, and this quantity would be very small.  
 
 At first glance, this may seem like a strange measure of linear dependency; we briefly explain why it is helpful.  Consider \eqref{eq:goodthing}.  By switching the order of the sums, we may write this as
 \[ \sum_{v \in \Lambda} \sum_{i=1}^n \ind{\ip{g_i}{v} \in S_i } 
 = \sum_{i=1}^n X_i,
 \]
 where
 \[ X_i = \sum_{v \in \Lambda} \ind{ \ip{g_i}{v} \in S_i}.\]
 Notice that, if the $g_i$ are rows of a random generator matrix $\mathbf{G}$, then the $X_i$ are all independent random variables.  This means that the quantity we are trying to control is the sum of independent random variables.  This immediately suggests some sort of Chernoff bound; unfortunately, the naive application of such a bound will fail, because the $X_i$ themselves may not be very concentrated.
 
 The crux of the argument is that, if there is not much linear structure in $\Lambda$---that is, if $\sigma_p(\Lambda)$ is small for all $p$---then in fact the $X_i$ are reasonably concentrated, and the Chernoff approach will work.  To see this, consider the $p$'th moments of the $X_i$.  For one $X$ (we drop the $i$ subscript for clarity) we have
{\allowdisplaybreaks
 \begin{align*}
 \EE_g X^p &= \EE_g \inparen{\sum_{v \in \Lambda} \ind{ \ip{g}{v} \in S}}^p \\
 &= \EE_g \sum_{v_1,\ldots,v_p \in \Lambda} \prod_{j} \ind{\ip{g}{v_j} \in S_i} \\
&= \sum_{v_1,\ldots,v_p \in \Lambda} \mathbb{P}_g\inset{\ip{g}{v_j} \in S \ \forall j \in [p]},
 \end{align*}
}
 where in the final line we have switched the order of the sum and the expectation.  Now, this probability $\mathbb{P}_g\inset{ \ip{g}{v_j} \in S\  \forall j}$ is related to the dimension of $\inset{v_1,\ldots,v_p}$.  More precisely, suppose that $S = \inset{0}$; then this would just be
 \[ \sum_{v_1,\ldots,v_p \in \Lambda} \mathbb{P}_g \inset{\ip{g}{v} = 0 \ \forall j} =  \sum_{v_1,\ldots,v_p \in \Lambda} q^{-\dim(v_1,\ldots,v_p)} = L^p \sigma_p(\Lambda). \]
 In the case where $S \neq \inset{0}$, we must adjust the calculation slightly, but hopefully it is now clear that the $\sigma_p(\Lambda)$ play a natural role in bounding the moments of the $X_i$.  
 
 Thus, picking up our earlier line of thought, we see that if there is not too much linear structure in $\Lambda$---so $\sigma_p(\Lambda)$ is small for all $p$---then the moments of the $X_i$ are all small.  This means that we can apply a Chernoff-like analysis to bound $\sum_i X_i$ above, and conclude that $\Lambda$ will not pose a problem for establishing \eqref{eq:goodthing}.  The final ingredient is a (straightforward) argument that if $\sigma_p(\Lambda)$ is large for some $p$, then we can extract a large, low-dimensional subset $\Lambda' \subseteq \Lambda$.  Then we may carry out the argument as described above.

\arm{We note that our argument has the same structure as an {\em energy increment argument}, first used by Roth~\cite{roth}.}

 While the outline of the argument is quite simple, there are several complications.  The first is in the quantitative details in the Chernoff-like argument, which requires reasonably careful control of the moments.  The second is the issue we alluded to earlier, that it is not immediately obvious that our definition of ``problematic" translates from $\Lambda$ to the large, low-dimensional subset $\Lambda'$.  This again requires a somewhat delicate argument.  
 
 These complications seem necessary for our general results, but they can both be significantly simplified in the case of zero-error list-recovery.  In this case, it turns out that the Chernoff-like bound can be replaced with a simple application of Markov's inequality.   Moreover, the transitivity of ``problematic-ness" does turn out to be obvious when our definition of ``problematic" is given by zero-error list-recovery, rather than average-radius list-recovery.  While this is a restrictive case, this allows for the argument to go through much more cleanly.  To that end, we give a much easier ``warm-up" argument covering this case in Section~\ref{sec:easy}.  This argument is subsumed by our final argument in Section~\ref{sec:hard}, but it highlights all of the main ideas.  We recommend that the reader interested in the technical details begin with the warm-up argument before diving into the full argument.
 
 Before we say any more about the argument, however, in the next section we outline three corollaries of this approach, and survey how they relate to the literature on list-decoding and list-recovery.

\section{Results and related work}\label{sec:results}

Our main theorem (Theorem~\ref{thm:avgrad}) is general and can be specified to different problems in list-decoding and list-recovery.  We defer the statement of Theorem~\ref{thm:avgrad} to Section~\ref{sec:hard}, but here we give three parameter regimes in which special cases of Theorem~\ref{thm:avgrad} are interesting, and state these results as Corollaries~\ref{cor:largeq}, \ref{cor:highratelr}, and \ref{cor:constantagr}.  The proofs of these corollaries appear in Section~\ref{ssec:cors}, after the statement of Theorem~\ref{thm:avgrad}.

\subsection{List-decoding and list-recovery over large alphabets}
We first consider the setting where the rate of the code is small, and the alphabet size is large.  This is the setting studied in~\cite{cgv2012,woot2013,RW14} and is relevant for related notions in pseudorandomness, like expanders, extractors, and hardness amplification~\cite{vadhan2011}.
More precisely, for a large alphabet size $q$, we consider the problem of $(1 - \eps,L)$-list-decodability, or $(\eps,\ell,L)$-list-recovery, when $\eps = 1/q + \delta$, for some small $\delta$ and for $q \gg delta^{-1}$.  Here, the optimal rate (given by the list-decoding capacity theorem) is on the order of $\delta$ for both list-decoding and list-recovery (when $q \geq \poly(\ell)$ is sufficiently large) and the ideal list size is polynomial in $1/\delta$ or $\ell/\delta$, respectively. 

In this parameter regime, the best results are given by~\cite{RW14}.  While the list sizes in that work are about right, the rates are suboptimal.  For list-decoding, the rate given in~\cite{RW14} is on the order of $\delta / (\polylog(1/\delta)\log(q))$, instead of $\Omega(\delta)$.  When $q$ is growing, this tends to zero; and even when $q$ is constant, the dependence on $\delta$ (not to mention the leading constants) are not correct.
For list-recovery, the situation is even worse.  It is easy to see that any result which guarantees a rate $R$ code for average-radius list-decoding can also guarantee a rate $R/\ell$ code for list-recovery.  Thus,
the work of~\cite{RW14} implies that a random linear code of rate $\tilde{O}(\delta/\ell)$ is $(1/q + \delta, \ell, L)$-average radius list-recoverable---however, as mentioned above, the highest achievable rate should not depend on $\ell$.
To summarize, the major open question in this regime is:
\begin{question}
Over large alphabets,
is a random rate $\Omega(\delta)$ linear code  $(1 - 1/q - \delta, L)$-average-radius list-decodable, or $(1/q + \delta, \ell, L)$-average-radius list-recoverable, for $L = \poly(\ell/\delta)$, with high probability?
\end{question}

Theorem~\ref{thm:avgrad} answers this question---moreover, not only can we get rate $\Omega(\delta)$ (removing the dependence on $\ell$ and $q$, as well as the extra logarithmic factors) but we may make the constant inside the $\Omega(\cdot)$ arbitrarily close to the optimal constant. 
More precisely, we can prove the following corollary.
\begin{corollary}[List-decoding and list-recovery over large alphabets]
	\label{cor:largeq} 
	Let $\ell\ge 1$ be an integer.
	For every sufficiently small constant $\gamma > 0$, there are constants $C, C'$ (which depend on $\gamma$) so that the following holds.
	Choose $\delta > 0$ sufficiently small.
	Let $q \geq \max \inset{C(\ell/\delta)^2, \ell^{C/\arm{\delta}}}$.
	Suppose that
	\[ R \leq \inparen{1 - H_{q/\ell}\inparen{1 - \frac{\ell}{q} - \delta} - \log_q(\ell)}(1 - \gamma). \]
	Then a random linear code of rate $R$ over $\F_q$ is $(\eps, \ell, L)$-average-radius list-recoverable with high probability, for
	\[ \eps = \frac{\ell}{q} + \delta \]
	and 
	\[ L  \leq q^{ C' \log^2(\ell/\delta) } . \]
\end{corollary}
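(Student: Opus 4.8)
The plan is to obtain Corollary~\ref{cor:largeq} as a specialization of Theorem~\ref{thm:avgrad}, so that essentially all of the work is parameter estimation in the large-$q$, small-$\delta$ regime. Concretely, I would instantiate the main theorem with input list size $\ell$, agreement parameter $\eps = \ell/q + \delta$, and alphabet size $q$, and read off the rate and output list size it produces. I expect that theorem to guarantee, with high probability, that a random linear code of rate $R$ over $\F_q$ is $(\eps,\ell,L)$-average-radius list-recoverable provided $R \le (1-\eta)\,R^\star - \eta'$, where
\[ R^\star \;=\; 1 - H_{q/\ell}\!\left(1 - \tfrac{\ell}{q} - \delta\right) - \log_q(\ell) \]
is the list-recovery capacity appearing in the corollary, $\eta$ is a multiplicative slack arising from the Chernoff / $p$-th-moment control of $\sum_i X_i$ in the analysis of \eqref{eq:goodthing}, and $\eta'$ is an additive slack collecting the union-bound losses (chiefly over the lists $S_1,\dots,S_n$, plus a contribution that scales with the depth of the energy-increment recursion); the guaranteed output list size then has the form $L \le q^{O(t)}$, with $t$ that recursion depth. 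It therefore suffices to show $\eta \le \gamma/2$, $\eta' \le (\gamma/2)\,R^\star$, and $t = O(\log^2(\ell/\delta))$: the first two give the claimed rate $R \le (1-\gamma)R^\star$, and the third gives $L \le q^{C'\log^2(\ell/\delta)}$.

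First I would pin down the order of $R^\star$. Because $q \ge C(\ell/\delta)^2$, the term $\ell/q$ is at most $\delta^2/(C\ell)$ and hence negligible beside $\delta$, so the entropy argument $1 - \ell/q - \delta$ is essentially $1-\delta$; expanding $H_{q/\ell}(1-\delta)$ directly --- rather than around the peak $1 - \tfrac{\ell}{q}$, which is the wrong regime here since $\delta$ is a fixed constant while $\ell/q$ is tiny --- gives $1 - H_{q/\ell}(1-\delta) = \delta - O\!\left(\frac{\delta\log(1/\delta)}{\log q}\right)$, and then using $q \ge \ell^{C/\delta}$ to bound $\log_q(\ell) = \log\ell/\log q \le \delta/C$ one obtains $R^\star = (1 - O(1/C))\,\delta$, so $R^\star \ge c\,\delta$ for a constant $c = c(\gamma)$ close to $1$. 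I would track all of these error terms explicitly, not asymptotically, since every slack bound below is a comparison against $\gamma R^\star = \Theta(\gamma\delta)$.

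Next --- the heart of the argument --- I would bound the two slack terms, and this is exactly where the two hypotheses on $q$ earn their keep. The additive slack $\eta'$ is dominated by $\log_q(\ell)$-type terms and by $(\text{recursion depth})/\log q$; the bound $q \ge \ell^{C/\delta}$ makes $\log_q(\ell) \le \delta/C$ and makes $\log q$ large enough to absorb the recursion-depth contribution as well, so choosing $C = C(\gamma)$ large enough gives $\eta' \le \delta/C \le (\gamma/2)\,c\,\delta \le (\gamma/2)R^\star$. The multiplicative slack $\eta$ coming from the moment estimate scales, in this regime, like a fixed power of $\ell/(\delta\sqrt q)$ --- the relative fluctuation of the $X_i$ against the target deviation $\delta$ --- together with further $\log_q(\ell)$-type corrections, so $q \ge C(\ell/\delta)^2$ handles the first and $q \ge \ell^{C/\delta}$ the second, giving $\eta \le \gamma/2$. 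For the recursion depth, one bounds $t$ by an absolute function of the remaining parameters: the recursion halts once it has produced a subset $\Gamma$ with $\dim(\Gamma) = d$ and $|\Gamma| > q^d$, forcing $d \le \log_q L$, while $\sigma_p(\cdot)$ stays in $[0,1]$ and increases by a fixed multiplicative factor at each step, capping the number of steps; one then checks that the self-consistent choice $\log_q L = O(\log^2(\ell/\delta))$ is valid, giving $t = O(\log^2(\ell/\delta))$ and hence $L \le q^{C'\log^2(\ell/\delta)}$. Assembling these estimates, Theorem~\ref{thm:avgrad} applies and delivers $(\eps,\ell,L)$-average-radius list-recoverability with high probability, which is the assertion of the corollary --- and which, by the implications recorded after Definition~\ref{def:avgradlr}, also yields ordinary $(\eps,\ell,L)$-list-recoverability and, when $\ell = 1$, $(1-\eps,L)$-average-radius list-decodability.

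I expect the main obstacle to be this last step, and within it the tension between the two lower bounds on $q$: since $R^\star$ is only $\Theta(\delta)$ in the relevant regime, every additive slack term must be driven below $\gamma\delta$ rather than merely below $\gamma$, which is precisely why $q$ must be super-polynomially large in $\ell/\delta$ (the $\ell^{C/\delta}$ requirement) and not merely polynomial, and why the constants $C$ and $C'$ are permitted to depend on $\gamma$; ensuring these dependencies involve only $\gamma$, and not $\ell$, $\delta$, or $q$, is what forces the non-asymptotic care in the entropy estimate and in the recursion-depth bound. A secondary subtlety is the mild circularity between the recursion depth $t$ and the output list size $L$, each naturally bounded in terms of the other, which is resolved by first bounding $t$ by a function of the remaining parameters and only afterwards solving for $L$.
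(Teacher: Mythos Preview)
Your high-level plan---specialize Theorem~\ref{thm:avgrad} with $\eps=\ell/q+\delta$, then do entropy estimates to show $R^\star=\Theta(\delta)$ and absorb all slack into the factor $(1-\gamma)$---is exactly the paper's route, and your two-sided bound $c\delta \le R^\star \le \delta$ is indeed the crux (the paper proves it carefully from the expansion~\eqref{eq:entropyinf}).  Where your plan diverges is in its picture of what the theorem hands you and hence what must be checked.

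The theorem (in its streamlined form, Corollary~\ref{cor:avgrad}) does not output a rate of the shape $(1-\eta)R^\star-\eta'$ with slack terms to be bounded ex post; it has three \emph{free} parameters $\zeta,\eta,\xi$ that you choose in advance (subject to $q\ge\ell^{2/\zeta}$), and then gives (a) a rate bound which is the \emph{minimum} of two explicit expressions---one linear in $\eps$, namely $(\eps-\ell/q)(1-6\zeta)-\eta$, and one equal to $1-H_{q/\ell}(1-\eps+\eta)-\log_q(\ell)-\xi$---and (b) a closed-form list size in $\zeta,\eta,\xi,\eps,\ell,q$.  The paper simply sets $\zeta=C''\gamma$ and $\eta=\xi=\zeta\delta$, then checks four things: the hypothesis~\eqref{eq:eps-condn-cor}; the linear term is $\ge\delta(1-\gamma)$; the entropy term is $\ge(1-\gamma)R^\star$ and is the smaller of the two (this is where the two-sided entropy estimate is used); and the list-size formula reads $q^{O(\log^2(\ell/\delta))}$ after substitution.

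Two specific points in your plan don't match this interface.  First, the multiplicative slack is simply the factor $(1-O(\zeta))$, controlled by the free choice $\zeta=\Theta(\gamma)$; there is no ``$\ell/(\delta\sqrt q)$'' scaling to tame---the hypothesis $q\ge C(\ell/\delta)^2$ is used instead to make $\ell/q$ negligible beside $\delta$ in the linear rate term and in the entropy bounds.  Second, you do not need to re-derive the recursion depth or run a self-consistency argument between $t$ and $L$: the list size is already an explicit function of the chosen parameters, and bounding it is a one-line substitution, not a separate analysis.
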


We summarize existing results, as well as our results, for both list-decoding and list-recovery in this parameter regime, in Table~\ref{table:largeq}.
\begin{table}[ht!]
\begin{center}
\begin{tabular}{|c|c|c|}
\hline
Source & $R$ & $L$ \\
\hline
\hline
Uniformly random & $1 - H_q( 1 - 1/q - \delta ) - \xi$ & $O(1/\xi)$ \\
\hline
\cite{ZP} & $1 - H_q(1 - 1/q -\delta) - \xi$ & $q^{O(1/\xi)}$ \\
\hline
\cite{RW14} & $\Omega\inparen{ \frac{ \delta }{ \log^5(1/\delta) \log(q) } }$ & $O(1/\delta)$ \\
\hline
This work & $0.99 \cdot \inparen{1 - H_q(1 - 1/q - \delta)}$ & $q^{O(\log^2(1/\delta))}$ \\
\hline
\multicolumn{3}{c}{List Decoding}
\end{tabular}
\hspace{.1cm}
\begin{tabular}{|c|c|c|}
\hline
Source & $R$ & $L$ \\
\hline
\hline
Uniformly random & $\arm{1 - H_{q/\ell}(1 - \ell/q - \delta)  - \log_q(\ell) } - \xi $  & $O(\ell/\xi)$ \\
\hline
\cite{venkat-thesis} &  $\arm{1 - H_{q/\ell}(1 - \ell/q - \delta)  - \log_q(\ell) } - \xi $ &  $\inparen{\frac{q}{\ell}}^{O(\ell/\xi)}$ \\
\hline
\cite{RW14} & $\Omega\inparen{ \frac{1}{\ell} \cdot \frac{ \delta }{ \log^5(\ell/\delta) \log(q) } }$ & $O(\ell/\delta)$ \\
\hline
This work & $0.99 \cdot  \inparen{1 - H_{q/\ell}(1 - \ell/q - \delta) - \log_q(\ell) }$ & $q^{ O(\log^2(\ell/\delta) ) } $ \\
\hline
\multicolumn{3}{c}{List Recovery}
\end{tabular}

\end{center}
\caption{Results on list-decodability and list-recoverability for large $q$, when the fraction of agreement $\eps$ is $\ell/q + \delta$ for small $\delta$.  The first table shows results for list-decoding when $q = \Omega(\delta^{-2})$; the second table shows results for list-recovery when $q \arm{\ge \poly\inparen{\ell^{1/\delta}} }$.  For all $\ell \geq 1$ (that is, both list-decoding and list-recovery) in these parameter regimes we have $H_{q/\ell}(1 - \ell/q - \delta) = \Theta(\delta)$.  In both tables, $\xi > 0$ is a parameter indicating how much we back off from the optimal rate; $\xi$ must be $\xi = O(\delta)$ for the rate to be positive.  Our bounds (and those of~\cite{RW14}) do not have a $\xi$ in them---for these, it is implicitly required that $\xi = \Omega(\delta)$.  Note that the bounds of~\cite{ZP,venkat-thesis} are exponential in $1/\delta$ in this setting, so Corollary~\ref{cor:largeq} gives an improvement from exponential to quasipolynomial.
} 
\label{table:largeq}
\end{table}
Our bounds on the rate are nearly tight.  The weakness is that we might hope for the rate to be of the form 
\begin{equation}
\label{eq:rate4}
\arm{1 - H_{q/\ell}(1 - \ell/q - \delta)- \log_q(\ell) - \xi}
\end{equation}
where $\xi \to 0$, possibly much faster than $\delta$.  However, our results in Corollary~\ref{cor:largeq} would require $\xi = \Omega(\delta)$ in \eqref{eq:rate4}.  As we will see in Section~\ref{subsec:ghk-recover}, we can actually attain this sort of dependence when $q$ is constant and the agreement $\eps$ is bounded away from $1/q$.

\subsection{High-rate list recovery}
Our second corollary is for high-rate list-recovery of random linear codes.
This setting is interesting because (to the best of our knowledge) we do not know any \em explicit \em constructions of high-rate, linear, list-recoverable codes.  There are many constructions of high-rate list-recoverable codes (e.g., \cite{GR08, GW13, Kop15, GX13} to name a few).  However, none of these constructions are linear: they all rely on manipulating the code alphabet (folding, adding derivatives, and so on), which destroys linearity.  To that end, when a high-rate linear list-recoverable code is needed (for example, in the constructions of \cite{GI01,HW15}), a random linear code is the best we can do.  In these constructions, such a code is used as an ``inner code" of constant size, so the fact that it is not explicit does not matter.  However, the previous best result on the list-recoverability of random linear codes, which appears in \cite{venkat-thesis}, requires a very large list size, exponentially large in $\ell$.  To that end, \cite{GI01} introduced a family of ``pseudo-linear" codes, which they could show have smaller list size---these had the linearity properties that they needed, although they are not linear.  This was not enough for~\cite{HW15}, who needed high-rate linear list-recoverable codes to instantiate an expander-code construction; in that work, they took the exponential hit in the list size and used the result of~\cite{venkat-thesis}.

Our approach is able to establish better list-recovery of high-rate random linear codes (which immediately improves the list sizes in~\cite{HW15}).  This also improves the state-of-the-art for for \em any \em high rate linear code, as random linear codes are so far our only avenue of attack.  In Corollary~\ref{cor:highratelr}, we are able to reduce the list size of~\cite{venkat-thesis} from exponential in $\ell$ to quasipolynomial.  We summarize the state of the literature in Table~\ref{tab:highratelr}.

\begin{corollary}[High-rate list-recovery] \label{cor:highratelr} 
	There are constants $C, \gamma_0$ so that the following holds.
	Choose $0 < \gamma < \gamma_0$ sufficiently small and let $\ell > 1$ be an integer.
	Suppose that $q \geq  \ell^{C/\gamma}$, and let $\cC$ be a random linear code of rate $R$ for some $R$ satisfying
	\[ R \leq 1 - \gamma. \]
	Then with high probability, $\cC$ is $(1 - \gamma/10, \ell, L)$-list-recoverable, for some
	\[ L \leq \inparen{ \frac{q\ell}{\gamma}}^{\log(\ell)/\gamma} \cdot \exp\inparen{\frac{\log^2(\ell)}{\gamma^3}}. \]
\end{corollary}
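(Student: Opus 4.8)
The plan is to derive Corollary~\ref{cor:highratelr} directly from the main theorem, Theorem~\ref{thm:avgrad}, exactly as Corollary~\ref{cor:largeq} is, but in the complementary regime where the agreement $\eps$ is close to $1$ instead of close to $\ell/q$. Since $(\eps,\ell,L)$-average-radius list-recoverability implies $(\eps,\ell,L)$-list-recoverability (the observation just after Definition~\ref{def:avgradlr}), it is enough to invoke Theorem~\ref{thm:avgrad} with agreement set to $\eps = 1 - \gamma/10$ and then check two things: (i) that rate $R = 1 - \gamma$ is below the rate Theorem~\ref{thm:avgrad} guarantees at this $\eps$, and (ii) that the list size Theorem~\ref{thm:avgrad} produces at the chosen parameters simplifies to $(q\ell/\gamma)^{\log(\ell)/\gamma}\exp(\log^2(\ell)/\gamma^3)$. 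The ``with high probability'' conclusion is then inherited verbatim.

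Step (i) is the heart of the argument and amounts to a short entropy estimate. The list-recovery capacity at $\eps = 1 - \gamma/10$ is $1 - H_{q/\ell}(\gamma/10) - \log_q(\ell)$. From \eqref{eq:entropydef}, for any prime power $Q$ and $x\in(0,1)$ one has $H_Q(x) = x\log_Q(Q-1) + H_2(x)/\log_2(Q) \le x + H_2(x)/\log_2(Q)$, with $H_2$ the binary entropy. Taking $Q = q/\ell$ and $x = \gamma/10$, and using $q \ge \ell^{C/\gamma}$ (hence $\log_2(q/\ell) \ge (C/\gamma - 1)\log_2(\ell) \ge C/\gamma - 1$ and $\log_q(\ell) \le \gamma/C$), one gets $1 - H_{q/\ell}(\gamma/10) - \log_q(\ell) \ge 1 - \gamma/10 - O(\gamma^2\log(1/\gamma)/C) - \gamma/C \ge 1 - \gamma/2$, provided the absolute constant $C$ is large enough and $\gamma < \gamma_0$ is small. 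So $R = 1-\gamma$ lies below capacity with a multiplicative slack of $\Theta(\gamma)$.

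For (ii), I would apply Theorem~\ref{thm:avgrad} with the parameter controlling the gap to capacity set to $\Theta(\gamma)$ — the analogue of the fixed constant appearing in Corollary~\ref{cor:largeq}, except that here it must be taken proportional to $\gamma$. By the previous paragraph this makes $R \le 1-\gamma$ admissible, and $q \ge \ell^{C/\gamma}$ also discharges whatever internal lower bounds on $q$ (polynomial in $\ell$, or depending on $\eps$ and the gap) Theorem~\ref{thm:avgrad} imposes. It then remains to substitute $\eps = 1 - \gamma/10$ and gap $\Theta(\gamma)$ into the theorem's list-size formula and simplify: the exponent of $q$ becomes $O(\log(\ell)/\gamma)$ and the remaining factors collapse to $\exp(O(\log^2(\ell)/\gamma^3))$, which yields the claimed $L$ after absorbing constants. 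This is routine bookkeeping of exponents rather than a real difficulty.

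The one delicate point — the step I expect to be the main obstacle — is calibrating the gap-to-capacity parameter. It must be large enough ($\Theta(\gamma)$, with the right constant) that $R = 1-\gamma$ really is below the guaranteed rate given the capacity bound above together with every side condition of Theorem~\ref{thm:avgrad} (in particular $\ell < q$ and $q$ above the theorem's threshold), and yet the same choice, pushed through the theorem's list-size bound, must not blow the exponents past $\log(\ell)/\gamma$ in $q$ and $\log^2(\ell)/\gamma^3$ in the exponential factor. Choosing the constant $C$ in $q \ge \ell^{C/\gamma}$ so that it serves both the entropy estimate and the theorem's hypotheses simultaneously is where the care is needed; the rest is substitution.
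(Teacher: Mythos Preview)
Your proposal is correct and follows essentially the same approach as the paper: set $\eps = 1 - \gamma/10$, take all the slack parameters $\zeta,\eta,\xi$ proportional to $\gamma$ (the paper uses $\gamma/20$), use $q \ge \ell^{C/\gamma}$ both to satisfy the theorem's lower bound on $q$ and to make $\log_q(\ell)$ and $H_{q/\ell}(\Theta(\gamma))$ small enough that the entropy rate term exceeds $1-\gamma$, then read off the list size. The only cosmetic difference is that the paper routes through the cleaned-up Corollary~\ref{cor:avgrad} rather than Theorem~\ref{thm:avgrad} directly, and explicitly checks both terms in the rate minimum (you only spell out the entropy term, but the linear term $(\eps - O(\ell/q))(1 - O(\zeta)) - \eta$ is the easier one here since $\eps \approx 1$ and $\ell/q$ is negligibly small).
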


\begin{table}[h]
	\begin{center}
\begin{tabular}{|c|c|c|}
\hline
	Source &  Rate & List size  \\ 
	\hline\hline
	Uniformly random code & $1 - \gamma$& $O(\ell/\gamma)$  \\ 
	\hline
Random pseudo-linear code \cite{GI01} & $1 - \gamma$ & $O\inparen{\frac{\ell \log(\ell)}{\gamma^2}}$ \\ 
\hline
Random linear code \cite{venkat-thesis} & $1 - \gamma$ & $\ell^{O(\ell/\gamma^2)}$ \\ 
\hline
Random linear code (this work) & $1 - \gamma$ &  $\inparen{ \frac{q\ell}{\gamma} }^{ \log(\ell)/\gamma^3}$\\ 
\hline
\end{tabular}
\caption{ List-recoverability results for high-rate codes.  While there are several explicit constructions of non-linear high-rate codes, for several applications linear codes are desirable.  We note that the first two rows of the table are not linear codes.   All results listed are for $(1 - O(\gamma), \ell, L)$-list-recovery, and have alphabet size $\ell^{O(1/\gamma)}$.}
\label{tab:highratelr}
\end{center}
\end{table}
\subsection{Average-radius list-decoding over constant alphabets}
\label{subsec:ghk-recover}
The final parameter regime we consider is in some sense the opposite of the first one, where the agreement fraction $\eps$ is bounded away from $1/q$, and where the alphabet size $q$ is constant.
In this setting, we consider (average-radius) list-decoding.
One goal, in the spirit of Question~\ref{q:bigq}, is to show that a random linear code meets the list-decoding capacity theorem.
More precisely, for $\eps > 1/q$ (and bounded away from $1/q$), 
we would like to show that a random linear code of rate
\[ R = 1 - H(1 - \eps) - \delta =: R^* - \delta \]
is $(1 - \eps, L)$-list-decodable for reasonable-sized $L$, as $\delta \to 0$.

As mentioned in the previous section, an argument of Zyablov and Pinsker~\cite{ZP} shows that a random linear code of rate $R$ approaching $R^*$ with list size $L = q^{1/\delta}$.  However, a completely random code of rate nearly $R^*$ achieves $L = O(1/\delta^2)$, and reducing the list size to something sub-exponential in $\delta$ remained open for nearly three decades.  In~\cite{GHK11}, Guruswami, H{\aa}stad, and Kopparty achieved the correct rate and list size, for list-decoding.  
Their proof only holds for (standard) list-decoding, rather than average-radius list-decoding, but by applying an observation made in~\cite{RW15} one can convert their result to average-radius list decoding by taking a polynomial hit in the list size: together, these results establish $(1 - \eps, L)$-average-radius list-decodability for rates $R = R^* - \delta$ and $L = O(1/\delta^3)$.
Cheraghchi, Guruswami and Velingker~\cite{cgv2012} and Wootters~\cite{woot2013} obtained the correct list size of $O(1/\delta^2)$, for average-radius list-decoding, but lost a constant factor in the rate, requiring $R \leq R^*/C$ for some constant $C$; it does not seem like these techniques can be tightened to obtain $C = 1$.  (As mentioned above, these results are most interesting when $\eps \to 1 / q$).




Our approach can recover the optimal rates obtained in~\cite{GHK11} combined with~\cite{RW15}, although with a worse list size, for a large range of parameters.  More precisely, for constant alphabet sizes, we show that a random linear code of rate $R^* - \delta$ is average-radius list-decodable, with only a quasipolynomial dependence on $\delta$. 
This does not give any improvement over the state-of-the-art; however, we include this example (Corollary~\ref{cor:constantagr} below) because it highlights the flexibility of our approach.  The techniques of~\cite{GHK11} do not seem to extend to the parameter regime where the agreement $\eps$ approaches $1/q$.  On the other hand, the techniques of~\cite{cgv2012, woot2013}, which do work in that parameter regime, cannot obtain the current rate for $\eps$ bounded away from $1/q$.  Corollaries~\ref{cor:largeq} and \ref{cor:constantagr} together show that Theorem~\ref{thm:avgrad} can be adapted to do both at once.  We summarize the state of the literature in Table~\ref{table:cor1}.

\begin{corollary}
[Average-radius list-decoding over constant alphabets]
\label{cor:constantagr}
Choose any constant $q \geq 2$.  There are constants $C, \delta_0$, and $\eps_0 \lneq \eps_1$ (which depend on $q$) so that the following is true.
For all $\delta \in (0, \delta_0)$, and for all $\eps \in (\eps_0, \eps_1)$, a random linear code of rate
\[ R = 1 - H_q(1 - \eps) - \delta \]
is $(\eps, L)$-average-radius list-decodable with list size
\[ L \leq \inparen{ \frac{q}{\delta} }^{C \log^2(1/\delta)}. \]

Moreover, for $q = 2$, we may take $\eps_0 = 0.51$ and $\eps_1 = 0.8$, and in general we may take
\[ \eps_0 = 1/q + 1/q^2 , \qquad  \eps_1 = \max \inset{ 0.8,  1 - \inparen{ \frac{ 1.1 \cdot \ln(\arm{q+1}) }{q} }}. \]
\end{corollary}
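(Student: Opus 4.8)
The plan is to derive Corollary~\ref{cor:constantagr} as a specialization of the main theorem (Theorem~\ref{thm:avgrad}) to the case $\ell = 1$. First I would record the two facts that make this specialization legitimate. (i)~Since a linear code over $\F_q$ presupposes that $q$ is a prime power, we may assume $q$ is a prime power, so that Theorem~\ref{thm:avgrad} applies as stated. (ii)~With $\ell = 1$ the sets $S_i$ in Definition~\ref{def:avgradlr} are forced to be singletons $S_i = \{z_i\}$, so $(\eps, 1, L)$-average-radius list-recoverability is literally $(\eps, L)$-average-radius list-decodability in the agreement convention; moreover $\log_q(\ell) = 0$ and $H_{q/\ell}(\cdot) = H_q(\cdot)$, so the list-recovery capacity quantity appearing in Theorem~\ref{thm:avgrad} collapses to $1 - H_q(1 - \eps) = R^*$.

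With these identifications in place, the entire proof reduces to a parameter check: substitute $\ell = 1$ into the hypothesis of Theorem~\ref{thm:avgrad} and verify that, in the regime $q = O(1)$ and $\eps \in (\eps_0, \eps_1)$, the theorem's rate hypothesis is implied by $R \le R^* - \delta$ while its list-size conclusion is at most $(q/\delta)^{C \log^2(1/\delta)}$. The rate bound of Theorem~\ref{thm:avgrad} has the shape ``capacity minus an error term'' (equivalently, in the packaging used for Corollary~\ref{cor:largeq}, capacity times a back-off factor $1 - \gamma$), where the error term is controlled by quantities such as $\eps - 1/q$, $1 - \eps$, a relative-entropy exponent comparing $\eps$ to $1/q$, the recursion depth of the argument, and $1/\log q$. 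When $q$ is a fixed constant and $\eps$ is bounded away from both $1/q$ and $1$, each of these quantities is itself a constant bounded away from its degenerate value, so the error term is $O(\delta)$ while the list-size exponent stays at $O(\log^2(1/\delta))$ with an absolute constant; absorbing the $O(\delta)$ loss into a renamed $\delta$ yields the stated rate $R = 1 - H_q(1 - \eps) - \delta$. The point of this regime (as opposed to the large-$q$, $\eps \to 1/q$ regime of Corollary~\ref{cor:largeq}) is exactly that, with $q$ and $\eps$ fixed, the ``effective back-off'' is a constant, so the constant $C$ in the list-size exponent does not degrade as $\delta \to 0$.

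The one genuinely computational step is pinning down the explicit thresholds $\eps_0 = 1/q + 1/q^2$ and $\eps_1 = \max\{0.8,\, 1 - 1.1\ln(q+1)/q\}$. For $\eps_0$: the argument degenerates as $\eps \downarrow 1/q$ — the capacity $R^*$ tends to $0$ and the deviation the Chernoff step must establish shrinks — so one needs $\eps - 1/q$ bounded below, and $\eps - 1/q \ge 1/q^2$ is a safe explicit margin (a constant, since $q$ is constant). For $\eps_1$: the error term in Theorem~\ref{thm:avgrad} also deteriorates as $\eps \uparrow 1$, which is where its list-size guarantee would cross over from quasipolynomial to the exponential bound of Corollary~\ref{cor:highratelr}, so one needs the disagreement $1 - \eps$ bounded below; tracking constants through the proof shows $1 - \eps \ge 1.1\ln(q+1)/q$ suffices, with the explicit floor $0.8$ taking over for the small values of $q$ (notably $q = 2$) where $1.1\ln(q+1)/q$ is not small. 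Verifying that these particular constants — the factor $1.1$ and the exponent $2$ on $q$ — are valid amounts to a routine but careful estimate using the standard bounds on $H_q$ together with the explicit form of the error term coming out of the proof of Theorem~\ref{thm:avgrad}.

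The main obstacle is thus not conceptual — Theorem~\ref{thm:avgrad} does all the real work — but bookkeeping: one must re-package the theorem's rate guarantee (which is naturally phrased for the large-alphabet, near-$1/q$ setting) into a clean $R^* - \delta$ statement valid for every fixed $q$, check that the quasipolynomial list size survives the re-packaging, and confirm that the closed-form $\eps_0$ and $\eps_1$ above are legitimate choices.
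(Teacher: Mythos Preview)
Your proposal misses the central structural point, and as a result your explanation of $\eps_0,\eps_1$ is off target. The rate hypothesis of Theorem~\ref{thm:avgrad} is not ``capacity minus an $O(\delta)$ error term''; it is a \emph{minimum of two qualitatively different expressions},
\[
R \le \min\Bigl\{\underbrace{(\eps - \beta/q - \barmu)(1 - 5\zeta) - \eta}_{=:R_0\text{ (linear in }\eps)},\ \underbrace{1 - H_q(1-\eps+\eta) - \xi}_{=:R_1\text{ (capacity-like, convex in }\eps)}\Bigr\}.
\]
The corollary wants to assert the rate $R^* - \delta = 1 - H_q(1-\eps) - \delta$, which is $R_1$ after setting $\eta,\xi = \Theta(\delta)$. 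This is only what the theorem gives you when $R_1$ is the smaller of the two, i.e.\ when $R_1 < R_0$. The interval $(\eps_0,\eps_1)$ is \emph{precisely} the set of $\eps$ where the convex curve $R_1(\eps)$ lies below the line $R_0(\eps)$; outside this interval the binding constraint is $R_0$, not capacity, and the corollary's conclusion simply does not follow. The paper makes this explicit: it fixes $\zeta$ depending only on $q$ (so that the list-size exponent stays independent of $\delta$), then uses the Taylor expansions \eqref{eq:entropyexp} and \eqref{eq:entropyinf} to verify $R_1 < R_0 - \delta$ at $\eps = 1/q + 1/q^2$ and at $\eps = 1 - 1.1\ln(q+1)/q$, with convexity handling the interval in between.

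Your stated rationales for the thresholds are therefore wrong: $\eps_0$ is not chosen because ``the capacity tends to $0$'' (both $R_0$ and $R_1$ tend to $0$ as $\eps\to 1/q$; what matters is their \emph{ratio}), and $\eps_1$ has nothing to do with a crossover in the list-size guarantee. The computation you would actually need to carry out is the two-point comparison $R_1(\eps_i) < R_0(\eps_i)$ via entropy estimates, which your proposal does not attempt.
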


\begin{table}
\begin{center}
\begin{tabular}{|c|c|c|c|}
\hline
Source &  $R$ & $L$ \\
\hline
\hline
Uniformly random code &  $R^* - \delta $ & $O(1/\delta^2)$ \\
\hline
\cite{ZP} &  $R^* - \delta$ & $\exp(1/\delta)$ \\
\hline
\cite{GHK11}+~\cite{RW15}  & $R^* - \delta$  & $O(1/\delta^3)$ \\ 
\hline
\cite{woot2013} &  $R^*/C$ & $O(1/\delta^2)$ \\
\hline
This work &  $R^* - \delta$ & $q^{C \log^2(1/\delta)}$ \\
\hline
\end{tabular}
\caption{Previous results for $(1 - \eps, L)$ average-radius list-decodability of random linear codes of rate $R$ (with high probability).  Here, $R^* = 1 - H_q(1 - \eps)$ is the optimal rate.  Our results, as well as the results of \cite{GHK11}, require the agreement fraction $\eps$ to be bounded away from $1/q$ by a constant.  Our results additionally require that $\eps$ be bounded away from $1$ by a constant.}
\label{table:cor1}
\end{center}
\end{table}

\section{Preliminaries}
\label{sec:notation}
Throughout, we will be interested in \em linear codes \em $\cC$ of length $n$ over a field $\F$ of size $q$.  That is, $\cC \subseteq \F^n$ is a linear subspace of $\F^n$.  A natural definition of a \em random linear code \em is a random subspace of $\F^n$.  However, as with other works, we choose a slight tweak on this definition which is a bit easier to work with.
%
\begin{definition}\label{def:moreconvenient}
	For $R \in (0,1)$ and $n \in \mathbb{N}$ so that $Rn \in \mathbb{N}$, a \em random linear code \em $\cC \subset \F^n$ of rate $R$ is the set
	\[ \inset{ \bX \cdot v \suchthat v \in \F^{Rn} }, \]
	where $\bX$ is a matrix whose rows are chosen independently and uniformly at random from $\F^{Rn}$.
\end{definition}
We note that, when defined this way, 
a ``random linear code of rate $R$" might not have rate $R$, in the sense that $\dim(\cC)$ may be less than $Rn$ if the matrix $\bX$ is not full rank.
However, the probability that $\bX$ is not full rank is exponentially small in $n$, and so for all of the results in this work (which hold with high probability), the distinction does not matter.  In particular, all results hold for the definition of a random linear code as a uniformly random $k$-dimensional subspace of $\F_q$.

For some $d \leq \dim(\cC)$, let $X = \inset{x_1,\ldots, x_n} \subseteq \F^d$ be a full-rank set of size $n$.
We will use $\cols(X) \subseteq \F^n$ to refer to the set
\[ \cols(X) = \inset{ ( (x_1)_j, (x_2)_j, \ldots, (x_n)_j ) \suchthat j \in [d] }. \]

Suppose that $\cC$ is a linear code, and consider a set $X$ so that $\cols(X) \subseteq \cC$.  Because $\cC$ is linear, any linear combination of the vectors in $\cols(X)$ will lie in $\cC$ as well.  Another way to write this is to say that for all $v \in \F^d$, the vector
\[ c =  ( \ip{x_1}{v}, \ip{x_2}{v} , \ldots, \ip{x_n}{v} ) \]
has $c \in \cC$.  Given a set $X$, a set $\Lambda \subseteq \F^d$ of size $L$ specifies a set of codewords
\[\inset{  ( \ip{x_1}{v}, \ip{x_2}{v} , \ldots, \ip{x_n}{v} )\suchthat v \in \Lambda}. \]
In this work, we will rely heavily on the linear structure of such sets $X$ and $\Lambda$, using the above correspondence between sets of codewords and pairs $(X,\Lambda)$.

Throughout this paper, a set $X$ will always be a full-rank set of size $n$, and a pair of $X$ and $\Lambda$ should always be thought of in the context above: that is, $(X,\Lambda)$ is as a description of a set of codewords.  Since we are interested in list-recoverability, we will be interested in controlling ``bad" sets of codewords; this amount to controlling ``bad" pairs of $(X,\Lambda)$. 

More precisely, 
(as is also done in~\cite{RW14}), we consider the \em relative pluralities \em of a set $\Lambda$ with respect to a vector $x \in \F^d$ (which should be thought of as an element of $X$).  For a vector $x \in \F^d$ and a set $\Lambda \subseteq \F^d$, we write
\[ \pl_x(\Lambda) := \frac{1}{|\Lambda|} \max_{\alpha \in \F}\inabs{ \inset{ v \in \Lambda \suchthat \ip{x}{v}=\alpha } }. \]
In particular, if $x = x_i$ is the $i$'th element of some set $X$, and $\Lambda$ is any set, then $\pl_{x_i}(\Lambda)$ is the share of the most frequently-occurring symbol in the $i$'th position in codewords represented by $X$ and $\Lambda$.
It is not hard to see (and we formalize this later in Proposition~\ref{prop:plrs}) that average-radius list-decodability of a code whose generator matrix $\bX$ has rows $X$ is equivalent to the condition that $\sum_{x \in X} \pl_x(\Lambda)$ is small for all sets $\Lambda$.

In this work, we study average-radius \em list-recovery \em (Definition~\ref{def:avgradlr}) which as we noted above is stronger than both list-recovery and average-radius list-decodability.
To reason about average-radius list-recovery, we introduce the notion of the {\em top-$\ell$ relative plurality}. 
 For a set $\Lambda \subset \F^d$, an integer $\ell\ge 1$, and a vector $x \in \F^d$, we define the  \mrk{top-$\ell$} relative plurality  of $\Lambda$ with respect to $x$ to be
\[ \mrk{\pl^{(\ell)}_x}( \Lambda ) := \frac{1}{|\Lambda|} \mrk{\sum_{i=1}^{\ell} \tp_i(x,\Lambda)},\]
\mrk{where $\tp_i(x,\Lambda)= \inabs{ \inset{ v \in \Lambda \suchthat \ip{x}{v} = \alpha } }$, where $\alpha\in\F$ is the value with the $i$th largest $\inabs{ \inset{ v \in \Lambda \suchthat \ip{x}{v} = \alpha } }$. We will also refer to this value $\alpha$ by $\argtp_i(x,\Lambda)$.}
Just as with pluralities, if $x = x_i$ is the $i$'th element of $X$, $\pl_{x_i}^{(\ell)}(\Lambda)$ is the share of the most-popular $\ell$ symbols in the $i$'th position for codewords represented by $X$ and $\Lambda$.
Just as the sum of the pluralities captures average-radius list-decodability, we will see in Proposition~\ref{prop:plrs} that the sum of the top-$\ell$ pluralities capture average-radius list-recovery.
The picture that the reader should have in mind for $X, \Lambda$ and the definition of $\pl_x^{(\ell)}(\Lambda)$ is illustrated in Figure~\ref{fig:defs}.

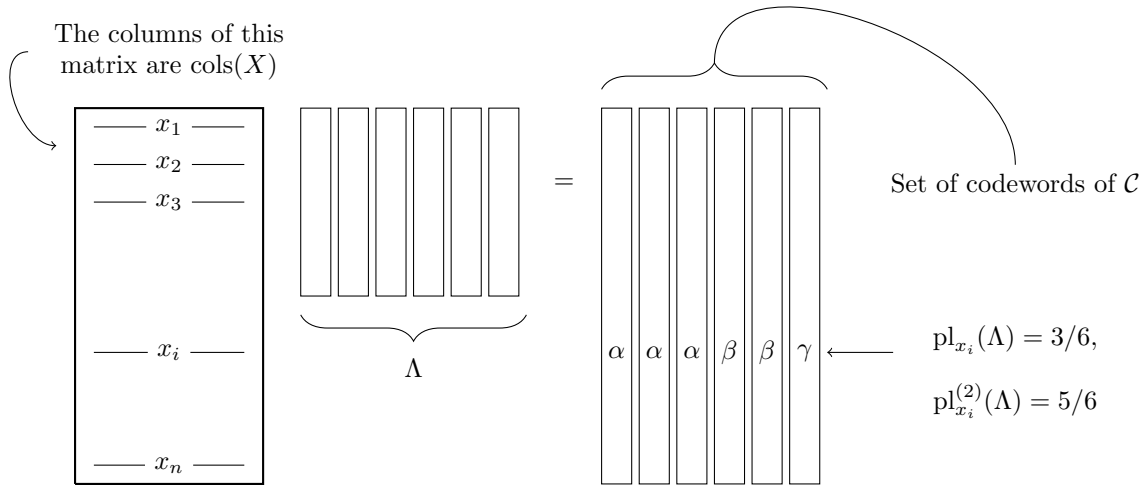
\begin{figure}
\begin{center}
\begin{tikzpicture}[scale=.5]
\draw[thick] (0,0) to (5,0) to (5,10) to (0,10) to (0,0);
\coordinate(a0) at (.5, 9.5);
\coordinate(b0) at (4.5, 9.5);
\node(x0) at (2.5, 9.5) {$x_1$};
\draw (a0) -- (x0) -- (b0);
\begin{scope}[yshift=-1cm]
\coordinate(a0) at (.5, 9.5);
\coordinate(b0) at (4.5, 9.5);
\node(x0) at (2.5, 9.5) {$x_2$};
\draw (a0) -- (x0) -- (b0);
\end{scope}
\begin{scope}[yshift=-2cm]
\coordinate(a0) at (.5, 9.5);
\coordinate(b0) at (4.5, 9.5);
\node(x0) at (2.5, 9.5) {$x_3$};
\draw (a0) -- (x0) -- (b0);
\end{scope}
\begin{scope}[yshift=-6cm]
\coordinate(a0) at (.5, 9.5);
\coordinate(b0) at (4.5, 9.5);
\node(x0) at (2.5, 9.5) {$x_i$};
\draw (a0) -- (x0) -- (b0);
\end{scope}
\begin{scope}[yshift=-9cm]
\coordinate(a0) at (.5, 9.5);
\coordinate(b0) at (4.5, 9.5);
\node(x0) at (2.5, 9.5) {$x_n$};
\draw (a0) -- (x0) -- (b0);
\end{scope}
\foreach \i in {1,...,6}
{
	\draw (5 + \i, 10) rectangle (5.8 + \i , 5);
}
\draw [decorate,decoration={brace,amplitude=10pt,mirror}]
(6, 4.5) -- (12, 4.5)
node [black,midway,yshift=-20pt] {$\Lambda$};
\node at (13, 8) {$=$};
\foreach \i in {1,...,6}
{
	\draw (13 + \i, 10) rectangle (13.8 + \i , 0);
}
\node at ( 14.4, 3.5) {$\alpha$};
\node at ( 15.4, 3.5) {$\alpha$};
\node at ( 16.4, 3.5) {$\alpha$};
\node at ( 17.4, 3.5) {$\beta$};
\node at ( 18.4, 3.5) {$\beta$};
\node at ( 19.4, 3.5) {$\gamma$};

\node(b) at (25, 8) {{Set of codewords of $\cC$}};
\node(c) at (25, 3.5) {\begin{minipage}{3cm} \[\pl_{x_i}(\Lambda) = 3/6, \]\[\pl_{x_i}^{(2)}(\Lambda) = 5/6\]\end{minipage}};
\draw [decorate,decoration={brace,amplitude=10pt}]
(14, 10.5) -- (20, 10.5)
coordinate[midway,yshift=10pt](a) ;
\draw (a) to [out=90,in=90] (b);
\draw[->] (c) to (20, 3.5);
\node(d) at (2.5,11.5) {\begin{minipage}{3.5cm} \begin{center} The columns of this matrix are $\cols(X)$ \end{center} \end{minipage}};
\draw[->] (d) to[out=180,in=180] (-.5,9);
\end{tikzpicture}
\end{center}
\caption{Illustration of how $X = \inset{x_1,\ldots, x_n} \subseteq \F^d$ and $\Lambda \subseteq \F^d$ will be used in this paper.}
\label{fig:defs}
\end{figure}

For a vector $z \in \F^n$ and a vector $y \in \F^n$, $\dist(y,z)$ will denote the relative Hamming distance:
\[ \dist(z,y) := \frac{1}{n}\sum_{i =1}^n \ind{ y_i \neq z_i }. \]
We extend this in the natural way for use with list-recovery: for a matrix $z \in \F^{\ell \times n}$ (which we view as a collection of $n$ lists of size $\ell$), and a vector $y \in \F^n$, we overload notation and write
\[ \dist(z,y) := \frac{1}{n} \sum_{i=1}^n \ind{ y_i \not\in \cup_{j=1}^\ell \inset{ z_{j,i} } }. \] 

As mentioned above, in order to control sets of codewords, we will instead control pairs $(X,\Lambda)$.
The linear structure will be important, and we will get a handle on it by repeatedly projecting $X$ and $\Lambda$ down to lower-dimensional spaces.
If a set of codewords is defined by $X$ and $\Lambda$, and $\Lambda$ happens to be low-dimensional, then the same set of codewords can be represented by $X'$ and $\Lambda'$ which live in a lower-dimensional space.  We make this precise with the following lemma.

\begin{lemma}\label{lem:projection}
	Let $X = \inset{x_1,\ldots,x_n} \subset \F^d$ and $\Lambda = \inset{ v_1,\ldots,v_L} \subset \F^d$ so that $\dim(\spn(\Lambda)) = d' \leq d$.
	Then there is a set $X' = \inset{x_1', \ldots, x_n'}  \subseteq \F^{d'}$, and a set $\Lambda' = \inset{ v_1', \ldots,v_L'} \subseteq \F^{d'}$ so that for all $i \in [L],j\in[n]$,
\[ \ip{v_i}{x_j} = \ip{v_i'}{x_j'}. \]
Further, $X'$ is of the form $X' = \inset{ Ax \suchthat x \in X}$ for some matrix $A$ that depends only on $\Lambda$ (and not $X$).
In particular:
\begin{itemize} 
	\item The (ordered) set $\inset{ \pl_x^{(\ell)}(\Lambda) \suchthat x \in X}$ is preserved when we replace $X$ with $X'$ and $\Lambda$ with $\Lambda'$.  
	\item $\cols(X') \subseteq \spn(\cols(X))$
	\item If $X$ is full-rank then $X'$ is full-rank. 
\end{itemize}
\end{lemma}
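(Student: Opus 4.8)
The plan is to construct the projection matrix $A$ explicitly from $\spn(\Lambda)$ and verify the claimed properties directly. Since $\dim(\spn(\Lambda)) = d'$, fix a basis for $\spn(\Lambda)$ and let $W \in \F^{d \times d'}$ be the matrix whose columns are this basis, so that every $v_i \in \Lambda$ can be written $v_i = W v_i'$ for a unique $v_i' \in \F^{d'}$; set $\Lambda' = \inset{v_1',\ldots,v_L'}$ (these are distinct since $W$ has full column rank). Now the key identity: for any $x \in \F^d$, $\ip{v_i}{x} = \ip{W v_i'}{x} = \ip{v_i'}{W^\top x}$, so if we set $A = W^\top \in \F^{d' \times d}$ and $x_j' = A x_j = W^\top x_j$, then $\ip{v_i}{x_j} = \ip{v_i'}{x_j'}$ for all $i,j$, as desired. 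Note $A$ depends only on the choice of basis for $\spn(\Lambda)$, hence only on $\Lambda$, not on $X$.

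For the three bulleted consequences: the first is immediate, since $\pl_x^{(\ell)}(\Lambda)$ is defined entirely in terms of the multiset of inner products $\inset{\ip{x}{v} : v \in \Lambda}$ (it counts how the elements of $\Lambda$ distribute across the fibers $\inset{v \in \Lambda : \ip{x}{v} = \alpha}$), and we have just shown $\ip{x_j}{v_i} = \ip{x_j'}{v_i'}$, so replacing $(X,\Lambda)$ by $(X',\Lambda')$ leaves this multiset — and hence each $\tp_i$ and each $\pl^{(\ell)}$ — unchanged, position by position. The second bullet, $\cols(X') \subseteq \spn(\cols(X))$, follows because each coordinate-row of $X'$ is a fixed $\F$-linear combination (with coefficients given by the corresponding row of $A$) of the coordinate-rows of $X$: explicitly, $(x_\bullet')_r = \sum_{s=1}^d A_{r,s} (x_\bullet)_s$ for $r \in [d']$, where $(x_\bullet)_s \in \F^n$ denotes the $s$-th element of $\cols(X)$. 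For the third bullet, suppose $X$ is full-rank, meaning $\spn(X) = \F^d$; I need $\spn(X') = \F^{d'}$. Here is where one must be slightly careful: $A = W^\top$ need not be surjective on all of $\F^{d'}$ in general — but it is surjective, because $W$ has full column rank $d'$, so $W^\top : \F^d \to \F^{d'}$ is surjective (a matrix with full column rank has a left inverse, equivalently its transpose has a right inverse). Since $X$ spans $\F^d$ and $A$ maps $\F^d$ onto $\F^{d'}$, the set $X' = \inset{Ax : x \in X}$ spans $\F^{d'}$; thus $X'$ is full-rank. (We may also note $|X'| = n$ as a multiset of size $n$, though the relevant point is the spanning property.)

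The main obstacle — really the only subtle point — is the full-rank preservation: one must recognize that although the ambient dimension drops from $d$ to $d'$, the map $A$ restricted to $X$ still spans, which hinges on $W$ being a genuine injection with full column rank (so $W^\top$ is a surjection). If instead one picked $W$ carelessly, or worked with a basis of a proper subspace, this could fail; choosing $W$ to have columns forming a basis of exactly $\spn(\Lambda)$, of dimension exactly $d'$, is what makes it go through. Everything else is bookkeeping: the inner-product identity is a one-line computation, and the three bullets are direct unwindings of the relevant definitions given that identity.
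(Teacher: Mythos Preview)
Your proof is correct and takes essentially the same approach as the paper's: the paper defines a map $P:\spn(\Lambda)\to\F^{d'}$ with pseudo-inverse $P^\dagger$, sets $\Lambda'=\{Pv\}$ and $X'=\{(P^\dagger)^T x\}$, and checks $\ip{x}{v}=\ip{x}{P^\dagger Pv}=\ip{(P^\dagger)^Tx}{Pv}$; your $W$ is exactly the paper's $P^\dagger$ (so your $A=W^\top=(P^\dagger)^T$), and your $v_i'$ is the paper's $Pv_i$. You additionally spell out the three bulleted consequences, which the paper leaves to the reader.
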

\begin{proof}
	Let $P \in \F^{d' \times d}$ be a matrix that maps $\spn(\Lambda)$ injectively onto $\F^{d'}$: that is, $\F^{d'} = \inset{ Pv \suchthat v \in \spn(\Lambda) }$.  
	Let $P^\dagger \in \F^{d \times d'}$ be the pseudo-inverse of $P$ so that in particular $P^\dagger P v = v$ for all $v \in \spn(\Lambda)$.  Then let $X' = \inset{ (P^\dagger)^Tx \suchthat x \in X}$ and let $\Lambda' = \inset{ Pv \suchthat v \in \Lambda}$.  We have, for all $x \in X, v \in \Lambda$,
	\[ \ip{x}{v} = \ip{x}{P^\dagger P v} = \ip{ (P^\dagger)^T x }{P v}. \]
\end{proof}

Finally, we turn our attention to the distribution of $\pl_x^{(\ell)}(\Lambda)$, for a random $x$.
Since we will be studying random linear codes, these will end up being the random variables we would like to get a handle on.
The distribution of $\pl_x^{(\ell)}(\Lambda)$ depends on the linear structure of $\Lambda$, and controlling it is the crux of our argument.

We will rely on a parameter $\sigma_x(\Lambda)$ to measure how ``bad" a set $\Lambda$ is with respect to $x$:
\begin{definition}\label{def:sigma}
For a set $\Lambda \subset \F^d$, we define
\[ \sigma_p(\Lambda) = \EE_{v_1,\ldots, v_p} q^{-\dim( \spn( \inset{ v_1,\ldots, v_p} ))}, \]
where $v_1,\ldots,v_p \in \Lambda$ are drawn uniformly at random with replacement.
\end{definition}

Below in Section~\ref{ssec:sigma}, we prove several useful properties of $\sigma_p(\Lambda)$.

\subsection{Properties of $\sigma_p(\Lambda)$}\label{ssec:sigma}
As we will see, the quantity $\sigma_p(\Lambda)$ is useful because if it is small, then the set $\Lambda$ will be ``well-behaved" in some sense.  On the other hand, if it is large, then $\Lambda$ contains a large, low-dimensional subset, and we may recurse on this subset.  
In this section, we formalize this in a few ways.

First we handle the case for small $p$.
\begin{lemma}\label{lem:easysmallp}
Let $\Lambda \subseteq \F^d$ be of size $L$.  
Choose any integer $T$.
Suppose that for some $p < d$,
\[ \sigma_p(\Lambda) > \frac{1}{q^p} \cdot \inparen{1 + \frac{qT}{L}}^p. \]
Then there is a subset $\Gamma \subseteq \Lambda$ of dimension at most $p$ and size at least $T$.
\end{lemma}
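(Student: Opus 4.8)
The plan is to prove the contrapositive: assuming that \emph{every} subspace $W \subseteq \F^d$ with $\dim(W) \le p$ satisfies $|\Lambda \cap W| < T$, I will show $\sigma_p(\Lambda) \le \frac{1}{q^p}\inparen{1 + \frac{qT}{L}}^p$. This formulation suffices, since if on the contrary there is a subspace $W$ with $\dim(W) \le p$ and $|\Lambda \cap W| \ge T$, then $\Gamma := \Lambda \cap W$ is a subset of $\Lambda$ of size at least $T$ whose span lies inside $W$, hence of dimension at most $p$, which is exactly the conclusion of the lemma. (The hypothesis $p < d$ is only there to keep the statement non-vacuous: if $p \ge \dim(\spn(\Lambda))$ one could simply take $\Gamma = \Lambda$.)

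The core of the argument is a one-step recursion obtained by drawing $v_1,\ldots,v_p$ sequentially and conditioning on the first $p-1$ of them. Write $W_{p-1} = \spn(\inset{v_1,\ldots,v_{p-1}})$ and $r = \dim(W_{p-1}) \le p-1$. When the final vector $v_p$ is drawn uniformly from $\Lambda$, either $v_p \in W_{p-1}$ (probability $|\Lambda \cap W_{p-1}|/L$, leaving the span's dimension at $r$) or $v_p \notin W_{p-1}$ (raising it to $r+1$). Hence
\[ \EE_{v_p}\inbrak{ q^{-\dim(\spn(\inset{v_1,\ldots,v_p}))} \mid v_1,\ldots,v_{p-1} } = q^{-(r+1)}\inparen{ 1 + (q-1)\frac{|\Lambda \cap W_{p-1}|}{L} }. \]
Since $\dim(W_{p-1}) \le p-1 \le p$, the standing assumption gives $|\Lambda \cap W_{p-1}| < T$, so the right-hand side is at most $\frac{1}{q}\inparen{1 + \frac{qT}{L}}\, q^{-r}$, where $q^{-r} = q^{-\dim(\spn(\inset{v_1,\ldots,v_{p-1}}))}$. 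Taking expectations over $v_1,\ldots,v_{p-1}$ and using the tower rule yields
\[ \sigma_p(\Lambda) \le \frac{1}{q}\inparen{1 + \frac{qT}{L}} \cdot \sigma_{p-1}(\Lambda). \]

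Iterating this inequality $p$ times down to the trivial base case $\sigma_0(\Lambda) = q^{-\dim(\spn(\emptyset))} = 1$ gives $\sigma_p(\Lambda) \le \frac{1}{q^p}\inparen{1 + \frac{qT}{L}}^p$, contradicting the hypothesis; therefore a subspace $W$ with $\dim(W)\le p$ and $|\Lambda\cap W|\ge T$ must exist, and $\Gamma = \Lambda\cap W$ completes the proof. I do not expect a genuine obstacle here: the only points requiring a little care are the conditional-expectation computation (handling the ``dimension increases by at most one'' dichotomy correctly, and applying the assumption to the already-determined low-dimensional subspace $W_{p-1}$), and the cosmetic replacement of $(q-1)T/L$ by $qT/L$; after that the induction is mechanical.
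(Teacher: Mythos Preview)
Your proof is correct and follows essentially the same approach as the paper: both derive the one-step recursion $\sigma_p(\Lambda) \le \frac{1}{q}\bigl(1+\tfrac{qT}{L}\bigr)\sigma_{p-1}(\Lambda)$ by conditioning on the last sampled vector and bounding $|\Lambda\cap W_{p-1}|$ via the ``no large low-dimensional subset'' assumption, then iterate. Your framing via the contrapositive and the base case $\sigma_0(\Lambda)=1$ is a little cleaner than the paper's decomposition over $\dim(v_1,\dots,v_p)=s$ with base case $p=1$, but the argument is the same.
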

\begin{proof}
Notice that for $p=1$, we have
\[ q \cdot \sigma_1(\Lambda) \leq 1 + \frac{q - 1}{L}, \]
and so the condition that $\sigma_1(\Lambda) > \nicefrac{1}{q} \inparen{ 1 + qT/L }$ is only satisfied for $T = 0$, in which case the conclusion is obviously true.  Thus, the lemma holds if we restrict $p = 1$.
For $p > 1$, let $T$ be the largest size of any subset of $\Lambda$ of dimension at most $p$, and assume inductively that the statement holds for $p - 1$. We write
\begin{align*}
q^p \sigma_p(\Lambda) &= \sum_{s=1}^p \PR{ \dim(v_1,\ldots,v_p) = s } q^{p-s} \\
&\leq \sum_{s=1}^p \inparen{ q^{p-s} \PR{\dim(v_1,\ldots,v_{p-1}) = s-1} + 
q^{p-s} \PR{ \dim(v_1,\ldots,v_{p-1}) = s } \cdot \PR{ v_p \in \spn(v_1,\ldots,v_{p-1}) } }\\
&\leq \sum_{s=1}^p \inparen{ q^{p-s}  \PR{ \dim(v_1,\ldots,v_{p-1}) = s - 1}
+ q^{p-s} \PR{ \dim(v_1,\ldots,v_{p-1}) = s}  \inparen{\frac{T}{L}}  }, 
\end{align*}
using in the last line that the probability that $v_p$ lands in the span of $v_1,\ldots,v_{p-1}$ is at most the probability that $v_p$ lands in the span of the largest $s$-dimensional set, which is at most $T/|\Lambda|$ by assumption.
Re-grouping the terms, we have
\[ q^p \sigma_p(\Lambda) \leq q^{p-1} \sigma_{p-1}(\Lambda) \cdot \inparen{1 + \frac{qT}{|\Lambda|}}. \]
This completes the proof.
\end{proof}

Next, we handle the case for larger $p$.  The issue is that when $p$ is large, we will want our large low-dimensional subset to have dimension significantly less than $p$ (by a constant fraction).
\begin{lemma}\label{lem:sigmaorig}
	Let $\Lambda \subseteq \F^d$.
	Choose $\zeta \in (0,1/4)$, and suppose that $\ell$ satisfies $q \ge \ell^{2/\zeta}$. 
	Suppose that for some $p$ with  $d(1 - \zeta) < p \leq d$, 
\[\sigma_p(\Lambda) > 
\frac{d}{q^{d(1 - 2\zeta)} \cdot \ell^p} \]
	Then there is some set $\Gamma \subseteq \Lambda$ so that $|\Gamma| \geq \zeta|\Lambda|/(qe)$, and so that $\dim(\spn(\Gamma)) \leq d(1 - \zeta)$.
\end{lemma}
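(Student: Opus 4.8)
The plan is to prove the contrapositive. I would set $t := \lfloor d(1-\zeta)\rfloor$ and assume, toward a contradiction, that every subspace $V \subseteq \F^d$ with $\dim(V) \le t$ contains fewer than $\frac{\zeta}{qe}|\Lambda|$ points of $\Lambda$. Writing $m := \max\inset{ |\Lambda \cap V| / |\Lambda| \suchthat V \subseteq \F^d,\ \dim(V) \le t}$ (so $m < \frac{\zeta}{qe}$) and $\mu := \frac1q + m$ (so $\mu < \frac1q\inparen{1 + \frac{\zeta}{e}}$), the goal becomes to show $\sigma_p(\Lambda) \le \frac{d}{q^{d(1-2\zeta)}\ell^p}$, contradicting the hypothesis.

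The heart of the argument is the bound $\sigma_j(\Lambda) \le \mu^j$ for every integer $0 \le j \le t+1$, which I would establish by induction on $j$; the base case $j=0$ is immediate. For the inductive step, draw $v_1,\ldots,v_j \in \Lambda$ uniformly and independently and hold $v_1,\ldots,v_{j-1}$ fixed. Since $\dim(\spn(v_1,\ldots,v_{j-1})) \le j-1 \le t$, the definition of $m$ says $v_j$ lands in $\spn(v_1,\ldots,v_{j-1})$ with (conditional) probability at most $m$ --- in which case the span's dimension is unchanged --- and otherwise the dimension grows by exactly $1$; hence
\[ \Eover{v_j}{ q^{-\dim(\spn(v_1,\ldots,v_j))} } \le q^{-\dim(\spn(v_1,\ldots,v_{j-1}))}\inparen{ m + \tfrac{1-m}{q} } \le \mu\cdot q^{-\dim(\spn(v_1,\ldots,v_{j-1}))}. \]
Taking expectations and applying the inductive hypothesis yields $\sigma_j(\Lambda) \le \mu\,\sigma_{j-1}(\Lambda) \le \mu^j$. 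Note this recursion is valid only up to $j = t+1$: once $j-1 > t$, the span of the vectors drawn so far may have dimension exceeding $t$, and then $m$ no longer bounds its $\Lambda$-density.

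To conclude, I would use that $\sigma_i(\Lambda)$ is non-increasing in $i$ (adjoining vectors only enlarges the span) and that $p > d(1-\zeta) \ge t$ with $p$ an integer forces $p \ge t+1$, so $\sigma_p(\Lambda) \le \sigma_{t+1}(\Lambda) \le \mu^{t+1}$. Then $\mu^{t+1} < q^{-(t+1)}\inparen{1+\tfrac\zeta e}^{t+1}$, with $q^{-(t+1)} \le q^{-d(1-\zeta)}$ since $t+1 \ge d(1-\zeta)$. The hypothesis $q \ge \ell^{2/\zeta}$ (together with $q \ge 2$) gives $\inparen{1+\tfrac\zeta e}\ell \le q^\zeta$: for $\ell \ge 2$ because $q^\zeta \ge \ell^2 \ge 2\ell > \inparen{1+\tfrac\zeta e}\ell$, and for $\ell = 1$ because $q^\zeta \ge 2^\zeta \ge 1+\zeta\ln 2 > 1 + \tfrac\zeta e$; since also $1+\tfrac\zeta e < 2$ and $t+1 \le d+1$, this gives $\inparen{1+\tfrac\zeta e}^{t+1} \le 2\,q^{\zeta d}\ell^{-d}$. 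Combining,
\[ \sigma_p(\Lambda) \le \mu^{t+1} < 2\,q^{-d(1-\zeta)+\zeta d}\,\ell^{-d} = 2\,q^{-d(1-2\zeta)}\,\ell^{-d} \le \frac2d\cdot\frac{d}{q^{d(1-2\zeta)}\ell^p}, \]
using $\ell^p \le \ell^d$. For $d \ge 2$ the right-hand side is at most $\frac{d}{q^{d(1-2\zeta)}\ell^p}$, the desired contradiction; the remaining cases $d \le 1$ are dispatched by direct inspection (there the hypothesis cannot hold, or the conclusion is trivial). Thus some subspace $V$ with $\dim(V) \le t \le d(1-\zeta)$ has $|\Lambda \cap V| \ge \frac\zeta{qe}|\Lambda|$, and $\Gamma := \Lambda \cap V$ is the required set.

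\textbf{Expected main obstacle.} The delicate point is getting a sharp enough upper bound on $\sigma_p(\Lambda)$ from the ``no dense low-dimensional subspace'' assumption; a union bound over the super-polynomially many subspaces of dimension $\le t$ would be hopelessly lossy. The fix is the observation that $\sigma_j(\Lambda)$ itself obeys the one-step inequality $\sigma_j \le \mu\,\sigma_{j-1}$, valid precisely while $j \le t+1$ (since until then the drawn vectors cannot span a space of dimension exceeding $t$), combined with monotonicity of $\sigma_i$ in $i$ to descend from $\sigma_{t+1}$ to $\sigma_p$; that is what lets the hypothesis $p > d(1-\zeta)$ be exploited without paying an extra factor of $q$ in the exponent. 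The closing numerical estimate is exactly where the hypothesis $q \ge \ell^{2/\zeta}$ is needed, through $\inparen{1+\zeta/e}\ell \le q^\zeta$.
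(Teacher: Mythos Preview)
Your proof is correct, and it takes a genuinely different route from the paper's argument. The paper proceeds \emph{forward} from the hypothesis: it expands $\sigma_p(\Lambda) = \sum_s \PR{\dim = s}\,q^{-s}$, pigeonholes to find some $s$ with $\PR{\dim(\spn(v_1,\ldots,v_p)) \le s} > q^{s - d(1-2\zeta)}\ell^{-p}$, argues that necessarily $s < d(1-\zeta)$, and then upper-bounds that probability by $\binom{p}{s}(T_s/|\Lambda|)^{p-s}$ where $T_s$ is the size of the largest $s$-dimensional subset of $\Lambda$; solving for $T_s$ gives the large low-dimensional set directly. Your approach instead runs the contrapositive via the one-step recursion $\sigma_j \le \mu\,\sigma_{j-1}$ for $j \le t+1$, followed by monotonicity $\sigma_p \le \sigma_{t+1}$ --- essentially the same machinery the paper uses for the \emph{small}-$p$ lemma (their Lemma on $\sigma_p$ for $p \le d(1-\zeta)$), but pushed one step further and capped off with monotonicity. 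The payoff of your method is uniformity: the same recursive idea handles both regimes, and you avoid the combinatorial union-bound step $\binom{p}{s}(T_s/|\Lambda|)^{p-s}$ entirely. The paper's method, by contrast, is constructive in the sense that it actually identifies a specific dimension $s$ at which the dense subset lives, which is not needed for the statement but gives a slightly finer picture. Both land on essentially the same numerical endgame using $q \ge \ell^{2/\zeta}$.
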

\begin{proof}
 	Let $p > 0$ be the $p$ guaranteed in the Lemma's hypothesis. 
	We have (using $p\le d$)
	\begin{align*}
		\frac{p}{q^{d(1 - 2\zeta)} \ell^p} &< \sigma_p(\Lambda) \\ 
		&= \EE_{v_1,\ldots,v_p} q^{-\dim(\spn(v_1,\ldots,v_p))}\\
		&= \sum_{s \leq p} \PR{ \dim(\spn(v_1,\ldots,v_p)) = s } q^{-s},
	\end{align*}
	and so there is some $s \leq p$ so that
	\[ \PR{ \dim(\spn(v_1,\ldots,v_p)) \leq s } > q^{s-d(1 - 2\zeta)}\ell^{-p}. \]
	We observe that this implies that $s < d(1 - \zeta)$.  Indeed, if not, we would have
\begin{align*}
\PR{ \dim( \spn( v_1,\ldots,v_p) ) \leq s } &> q^{d(1 - \zeta) - d(1 - 2\zeta)} \ell^{-d} \\
&\geq q^{\zeta d} \ell^{-d} \\
&\geq 1, 
\end{align*}
using in the first line the assumptions on $p$ and in the final line the assumption that $q > \ell^{1/\zeta}$.  
Since the left-hand-side is a probability, this cannot be true; thus $s < d(1 -\zeta)$. 
Now, we may also bound this probability above by
	\begin{equation}\label{eq:prbound}
	 \PR{ \dim(\spn(v_1,\ldots,v_p)) \leq s } \leq { p \choose s } \inparen{ \frac{T_s}{|\Lambda|} }^{p - s}, 
\end{equation}
where 
\[T_s = \max\inset{ |\Gamma| \suchthat \Gamma \subseteq \Lambda, \dim(\spn(\Gamma)) \leq s }.\]
To see \eqref{eq:prbound}, suppose that $\dim\spn(v_1,\ldots,v_p) \leq s$.  Then there are at most ${p \choose s}$ choices of subsets $I \subset [p]$ so that $\inset{ v_i \suchthat i \in I}$ form a basis for $\spn(v_1,\ldots,v_p)$.  Then for each $i \in [p] \setminus I$, the probability that $v_i$ (which is chosen uniformly at random from $\Lambda$) lands in $\spn(v_1,\ldots,v_p)$ is at most 
\[ \frac{ |\spn(v_1,\ldots,v_p) \cap \Lambda| }{|\Lambda|} \leq \frac{T_s}{|\Lambda|}. \]
These events are independent for $i \in [p]\setminus I$, and a union bound over all the sets $I$ yields \eqref{eq:prbound}.
Thus, we have
\[ q^{s-d(1 - 2\zeta)}\ell^{-p} < \PR{ \dim(\spn(v_1,\ldots,v_p)) \leq s } \leq {p \choose s} \inparen{ \frac{T_s}{|\Lambda|}}^{p-s}, \]
	and hence
\begin{align*}
	 T_s &> |\Lambda| \inparen{ \frac{1}{q}}^{\frac{d(1 - 2\zeta) - s}{p - s}} \cdot \inparen{\frac{1}{\ell}}^{\frac{p}{p-s}} \cdot \inparen{ 1 - \frac{s}{p}} e^{-1} \\
	  &\geq \frac{\zeta|\Lambda|}{e} \inparen{ \frac{1}{q}}^{\frac{d(1 - 2\zeta) - s}{p - s}} \cdot \inparen{\frac{1}{\ell}}^{\frac{p}{p-s}} \qquad \text{using $s < (1 - \zeta)p$} \\
&\ge \frac{\zeta|\Lambda|}{e} \inparen{ \frac{1}{q}}^{ \frac{d(1 - 2\zeta) - s }{p - s}} \cdot \inparen{ \frac{1}{q} }^{\frac{\zeta p }{ 2(p-s)}} \qquad \text{using $q \ge \ell^{2/\zeta}$} \\
&= \frac{\zeta|\Lambda|}{e} \exp_q\inparen{ -\inparen{ \frac{ d(1 - 2\zeta) - s + \zeta p/2 }{p-s}}} \\ 
&\geq \frac{\zeta|\Lambda|}{e} \exp_q\inparen{ -\inparen{ \frac{ p(1 - 2\zeta)/(1 - \zeta) - s + \zeta p/2 }{p-s}}} \qquad \text{using $d(1 - \zeta) < p$}\\ 
	&\geq \frac{\zeta|\Lambda|}{qe}.
\end{align*}
\end{proof}
Together, Lemma~\ref{lem:easysmallp} and \ref{lem:sigmaorig} imply the following:
\begin{lemma}\label{lem:sigma}
	Let $\Lambda \subseteq \F^d$.
	Choose $\zeta > 0$, and suppose that $\ell$ satisfies $q \ge \ell^{2/\zeta}$. 
	Suppose that for some $p > 0$, 
	\[\sigma_p(\Lambda) > \begin{cases} 
\frac{ 1 + 1/q }{q^p } & p \leq (1 - \zeta) d \\
\frac{d}{q^{d(1 - 2\zeta)} \cdot \ell^p} &  (1 - \zeta)d < p \leq d  \end{cases} \]
	Then there is some set $\Gamma \subseteq \Lambda$ so that 
\[|\Gamma| \geq |\Lambda| \cdot \min \inset{ \frac{1}{2dq^2}, \frac{\zeta}{qe} },\] 
and so that $\dim(\spn(\Gamma)) \leq d(1 - \zeta)$.
\end{lemma}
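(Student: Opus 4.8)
The statement asks: given $\sigma_p(\Lambda)$ is large for some $p>0$ (with the threshold depending on whether $p$ is in the "small" or "large" regime relative to $d$), extract a large subset $\Gamma$ of dimension $\le d(1-\zeta)$, where "large" means $|\Gamma|\ge |\Lambda|\cdot\min\{\frac{1}{2dq^2},\frac{\zeta}{qe}\}$. The plan is a direct case split on which regime $p$ lies in, applying the appropriate previous lemma in each case, and then massaging the conclusion to the uniform bound stated here.

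Let me sketch the two cases. First, the range of $p$ should be restricted: if $p > d$ then $\sigma_p(\Lambda) \le \sigma_d(\Lambda)$ (drawing more vectors can only decrease the expected $q^{-\dim}$, since dimension is nondecreasing), so WLOG $p \le d$; and the threshold function as written only covers $p \le d$, so this is the relevant range.

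\textbf{Case 1: $p \le (1-\zeta)d$.} Here the hypothesis gives $\sigma_p(\Lambda) > (1+1/q)/q^p$. The plan is to apply Lemma~\ref{lem:easysmallp} with a suitable choice of $T$. Lemma~\ref{lem:easysmallp} says: if $\sigma_p(\Lambda) > q^{-p}(1 + qT/L)^p$, then there is a subset $\Gamma$ of dimension at most $p$ and size at least $T$. So I want to pick the largest $T$ for which $(1 + qT/L)^p \le 1 + 1/q$; then the hypothesis $\sigma_p(\Lambda) > (1+1/q)/q^p$ forces the existence of a subset of dimension $\le p \le (1-\zeta)d$ and size $\ge T$. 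Using $(1+x)^p \le 1 + 1/q$ when $px \le \log(1+1/q)/1 \approx 1/(2q)$ roughly, i.e. $qT/L \lesssim 1/(2pq)$, we get $T \gtrsim L/(2pq^2) \ge L/(2dq^2)$. (The constants should be checked: $(1 + \tfrac{1}{2pq^2})^p \le e^{1/(2q^2)} \le 1 + 1/q$ for $q\ge 2$, say, which is why the factor $2dq^2$ appears.) This yields $|\Gamma| \ge L/(2dq^2)$ and $\dim(\spn(\Gamma)) \le (1-\zeta)d$, matching the first branch of the min.

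\textbf{Case 2: $(1-\zeta)d < p \le d$.} Here the hypothesis is $\sigma_p(\Lambda) > d/(q^{d(1-2\zeta)}\ell^p)$, which is exactly the hypothesis of Lemma~\ref{lem:sigmaorig} (up to the harmless difference between $d$ and the $p$ appearing in that lemma's displayed inequality, since $p \le d$ makes $p/(\cdots) \le d/(\cdots)$ go the right way — actually I should double check: Lemma~\ref{lem:sigmaorig} writes $\frac{p}{q^{d(1-2\zeta)}\ell^p} < \sigma_p(\Lambda)$, and since $p \le d$, having $\sigma_p(\Lambda) > d/(q^{d(1-2\zeta)}\ell^p) \ge p/(q^{d(1-2\zeta)}\ell^p)$ suffices). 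One also needs $\zeta \in (0,1/4)$ for Lemma~\ref{lem:sigmaorig}; the statement of Lemma~\ref{lem:sigma} only says $\zeta > 0$, so I would either add that restriction or note that for $\zeta \ge 1/4$ the claim is vacuous or follows from the $\zeta=1/4$ case applied with a smaller effective $\zeta$. Applying Lemma~\ref{lem:sigmaorig} directly gives $\Gamma \subseteq \Lambda$ with $|\Gamma| \ge \zeta|\Lambda|/(qe)$ and $\dim(\spn(\Gamma)) \le d(1-\zeta)$, matching the second branch of the min.

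Combining the two cases, in either regime we obtain $\Gamma$ with $\dim(\spn(\Gamma)) \le d(1-\zeta)$ and $|\Gamma| \ge |\Lambda| \cdot \min\{\frac{1}{2dq^2}, \frac{\zeta}{qe}\}$, since whichever case we are in we get at least one of the two lower bounds (and the min is trivially $\le$ that one). The main obstacle — really the only nonroutine part — is getting the constant $\frac{1}{2dq^2}$ in Case 1 to come out cleanly from Lemma~\ref{lem:easysmallp}: one must solve $(1 + qT/L)^p \le 1 + 1/q$ for $T$ and verify that $T = L/(2dq^2)$ works for all $p \le d$ and all prime powers $q \ge 2$, i.e. that $(1 + \tfrac{1}{2dq})^{p} \le 1 + \tfrac1q$, which follows from $p \le d$ and $e^{1/(2q)} \le 1 + 1/q$ — a standard but slightly delicate inequality that holds for $q \ge 2$ (indeed $e^{1/(2q)} \le 1 + \tfrac{1}{2q} + \tfrac{1}{4q^2}\cdot\tfrac{1}{1-1/(2q)} \le 1 + 1/q$ for $q\ge 2$). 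Everything else is just bookkeeping and invoking the two prior lemmas.
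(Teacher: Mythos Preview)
Your proposal is correct and takes essentially the same approach as the paper: case-split on $p$, apply Lemma~\ref{lem:easysmallp} with $T = L/(2dq^2)$ in the small-$p$ regime and Lemma~\ref{lem:sigmaorig} in the large-$p$ regime. The paper verifies the key inequality $(1+qT/L)^p \le 1+1/q$ via a direct binomial expansion rather than your $e^{1/(2q)}\le 1+1/q$ route, but both arguments are valid and yield the same constant.
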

Above, we have used the choice $T = L/(2dq^2)$ in Lemma~\ref{lem:easysmallp}, hence the expression from the conclusion of that lemma is 
\begin{align*}
\inparen{1 + \frac{qT}{L}}^p &\le
\inparen{1 + \frac{1}{2pq}}^p \\
&= 1 + \frac{1}{2q} + \sum_{j \geq 2} {p \choose j} \inparen{\frac{1}{2pq}}^j \\
&\leq 1 + \frac{1}{2q} + \sum_{j \geq 2} \inparen{ \frac{1}{2q} }^j\\
&= 1 + \frac{1}{2q} + \frac{ (2q)^{-2} }{1 - 1/(2q) } \\
&\leq 1 + \frac{1}{q} 
\end{align*}
for $q \geq 2$.
\begin{remark}\label{rem:smallp}
Notice that if we only use $p \leq d(1 - \zeta)$, then the conclusion is that $|\Gamma| \geq |\Lambda|/(2dq^2)$.  This will be useful for our simplified analysis in Section~\ref{sec:easy}.
\end{remark}

\section{Zero-Error list-recovery of random linear codes}\label{sec:easy}
Before we jump in to the proof of the theorem for average-radius list-recovery, 
we illustrate the method by examining the weaker---but still nontrivial---guarantee of list-recovery with no errors.  
Recall from Section~\ref{sec:intro} that we will use the notation $(\ell,L)$-list-recovery to refer to $(1, \ell, L)$-list-recovery---that is, list-recovery with no errors.

\begin{theorem}\label{thm:easy}
Let $\xi > 0$, $\zeta \in (0,1/\nmrk{5})$, choose $\ell \in \mathbb{N}$,  and suppose that
\[q\ge \max  \inset{ \ell^{2/\zeta}, \nmrk{(3\ell)^{\frac{1}{\zeta}-1}}}. \]
Let
\[ R \leq \min \inset{1 - \nmrk{3}\zeta, 1 - \log_q(\ell) - \xi}. \]
Then a random linear code $\cC \subseteq \F_q^n$ of rate $R$ is $(\ell,L)$-list recoverable with probability $1 - o(1)$, where
\[ L > \max\inset{ \frac{2\ell}{\zeta} , \ell \cdot \exp_{2q\nmrk{\cdot \ell/\xi}}\inparen{\frac{2 \log(2\ell/\xi)}{\zeta} }}. \]
\end{theorem}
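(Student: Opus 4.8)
The plan is to run the recursive structure-versus-randomness argument from Section~\ref{sec:outline}, exploiting the two simplifications the paper flags for the zero-error case. In the language of Section~\ref{sec:notation}, write $\cC = \{\bX v : v \in \F_q^{Rn}\}$ with rows $X = \{g_1,\dots,g_n\}$; then $\cC$ fails to be $(\ell,L)$-list-recoverable exactly when there is a \emph{bad} set $\Lambda \subseteq \F_q^{Rn}$ with $|\Lambda| = L$, i.e.\ one with $\pl^{(\ell)}_{g_i}(\Lambda) = 1$ for every $i$ --- equivalently, for each $i$ the $L$ symbols $\{\ip{g_i}{v} : v \in \Lambda\}$ occupy at most $\ell$ values (cf.\ Proposition~\ref{prop:plrs}). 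Three easy facts drive everything: (i) badness is monotone, so every subset of a bad $\Lambda$ is bad; (ii) badness is preserved when $\Lambda$ is projected onto $\spn(\Lambda)$, since by Lemma~\ref{lem:projection} this changes neither the inner products $\ip{g_i}{v}$ nor the top-$\ell$ pluralities, and the projected rows become i.i.d.\ uniform on $\F_q^{d}$ with $d = \dim\spn(\Lambda)$; and (iii) $d = \dim\spn(\Lambda) \le |\Lambda| = L$, a bound independent of $n$. Facts (i) and (ii) are precisely the ``obvious transitivity of problematic-ness'' advertised for this case.

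\textbf{The recursion.} Starting from a bad $\Lambda$, I would iterate: if the current set is \emph{structured} --- $\sigma_p$ exceeds one of the two thresholds of Lemma~\ref{lem:sigma} for some $p$ --- replace it by the subset $\Gamma$ that Lemma~\ref{lem:sigma} produces, whose dimension has dropped by a factor $(1-\zeta)$ and whose size has dropped by at most a factor $\min\{\tfrac1{2dq^2},\tfrac{\zeta}{qe}\}$, and which is still bad by (i); otherwise stop. By (iii) the dimension begins below $L$ and contracts geometrically, so the loop halts after $O(\zeta^{-1}\log(\cdot))$ rounds at a \emph{structureless} bad $\Gamma$. Carefully tracking the bounded per-round losses in size and dimension, and choosing $L$ as in the statement, I would ensure that this terminal $\Gamma$ still has size $\ge 2\ell$ and dimension $d := \dim\spn(\Gamma)$ above the threshold $T = \widetilde{\Theta}(\ell/\xi)$ that the probabilistic step below needs; this bookkeeping is exactly the source of the quasi-polynomial lower bound $L > \ell \cdot (2q\ell/\xi)^{O(\log(2\ell/\xi)/\zeta)}$ in the statement, and the two regimes of Lemma~\ref{lem:sigma} ($p \le (1-\zeta)d$ versus $(1-\zeta)d < p \le d$) are why we need $R \le 1-3\zeta$ in addition to $R \le 1-\log_q(\ell)-\xi$.

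\textbf{The base case (Markov).} It remains to show that with probability $1-o(1)$ over $\bX$ there is no structureless bad $\Gamma$ of the surviving type. Projecting as in (ii), the rows of $\bX$ restricted to $\spn(\Gamma)$ are i.i.d.\ uniform $g \sim \F_q^{d}$, and with $N_g(\Gamma') := |\{\ip{g}{v} : v \in \Gamma'\}|$ for the projected set $\Gamma'$ we have $\PR{\Gamma \text{ bad}} = \PR{N_g(\Gamma')\le\ell}^n$. I would bound $\PR{N_g(\Gamma')\le\ell}$ by unioning over the $\le\binom q\ell$ choices of an $\ell$-element target set $S$ and, for fixed $S$, applying Markov to the $p$-th moment with $p = \lfloor(1-\zeta)d\rfloor$: since $\EE_g\big(\sum_{v\in\Gamma'}\ind{\ip{g}{v}\in S}\big)^p = \sum_{v_1,\dots,v_p}\PR{\ip{g}{v_j}\in S\ \forall j} \le \sum_{v_1,\dots,v_p}(\ell/q)^{\dim\spn\{v_j\}} \le \ell^p\,|\Gamma'|^p\,\sigma_p(\Gamma') \le (1+\tfrac1q)(\ell|\Gamma'|/q)^p$ --- using $\sigma_p(\Gamma') \le (1+\tfrac1q)q^{-p}$, which holds for $p\le(1-\zeta)d$ because $\Gamma'$ is structureless (negation of the first case of Lemma~\ref{lem:sigma}) --- dividing by $|\Gamma'|^p$ gives $\PR{\ip{g}{v}\in S\ \forall v\in\Gamma'} \le (1+\tfrac1q)(\ell/q)^p$, whence $\PR{N_g(\Gamma')\le\ell} \le \binom q\ell (1+\tfrac1q)(\ell/q)^p$. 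Raising to the $n$-th power and union-bounding over the $\le q^{\ell n}$ list-patterns $(S_1,\dots,S_n)$, the $\le q^{O(dRn)}$ choices of the subspace $\spn(\Gamma)\subseteq\F_q^{Rn}$, and the $\le q^{d|\Gamma|}\le q^{L^2}$ images $\Gamma'\subseteq\F_q^{d}$ (this last count a constant, independent of $n$), the exponent becomes $n\big(dR + O(\ell) - (1-\zeta)d(1-\log_q\ell)\big) = -n\,d\big[(1-\zeta)(1-\log_q\ell)-R\big] + O(\ell n)$; the bracketed quantity is a positive constant (using $R\le\min\{1-3\zeta,\,1-\log_q(\ell)-\xi\}$ and $\log_q\ell \le \zeta/2$), so for $d \ge T$ this is $-\Omega(n)$, and summing the resulting geometric series over $d$ and $|\Gamma'|$ keeps the whole bound $o(1)$.

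\textbf{Main obstacle.} The delicate part is getting the union bounds in the base case to close simultaneously with the recursion bookkeeping. Because the span of $\Gamma$ is unknown we must pay a factor $q^{\Theta(dRn)}$ for it (and $q^{\Theta(\ell n)}$ for the list pattern), so the per-coordinate saving $(\ell/q)^{\Theta(d)}$ has to out-run the rate with constant slack --- this is what forces both rate constraints and pins down the threshold $T$, which in turn dictates how large $L$ must be so that the recursion can be kept above $T$ throughout. Everything else is the part the paper calls easy here: transitivity of badness is immediate from Lemma~\ref{lem:projection}, and the tail estimate needs only Markov applied to a single moment, rather than the Bernstein-type control required for average-radius list-recovery in Section~\ref{sec:hard}.
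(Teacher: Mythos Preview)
Your overall structure is right, but there is a genuine gap: you assert that the terminal structureless $\Gamma$ has dimension $d \ge T$, yet the recursion gives no such control --- Lemma~\ref{lem:sigma} only says the dimension drops by a factor $(1-\zeta)$ whenever the current set is structured, and nothing prevents this from continuing past $T$ before structurelessness kicks in. Your Markov base case, which unions over the $\binom{q}{\ell}$ target sets $S$, carries an unavoidable $q^{\Theta(\ell)}$ factor per coordinate, so the exponent you compute is negative only for $d \gtrsim \ell/\zeta$; a structureless bad set at smaller $d$ is simply not ruled out by your argument. A related problem is that you start the recursion from $d \le L$ via fact~(iii): the round count then scales like $\log(L)/\zeta$ and the per-round shrinkage like $1/(2Lq^2)$, so the size-bookkeeping inequality you need becomes $L \gtrsim \ell\cdot(2Lq^2)^{\log(L)/\zeta}$, which is circular and has no finite solution --- certainly not the $L$ in the theorem statement, whose exponent involves $d_0 = 2\ell/\xi$, not $L$.

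The paper closes both holes, but differently from what you sketch. First, a \emph{separate} probabilistic step (Lemma~\ref{lem:base2}) uses linear independence directly --- not $\sigma_p$ --- together with the rate condition $R \le 1 - \log_q(\ell) - \xi$ to reduce to a witness of dimension at most the fixed constant $d_0 = 2\ell/\xi$; this is the sole place $\xi$ enters, and it makes the recursion depth $\log(d_0)/\zeta$ independent of $L$, which is exactly what produces the theorem's list-size bound. Second, the per-level probabilistic argument (Claim~\ref{claim:goodwitness}) is run at \emph{every} level $d_j$ and works uniformly for all $d_j \ge 1$: instead of unioning over $S$, it bounds the \emph{centered} moment $\EE_x(\pl_x(\Lambda) - 1/q)^p$ via the alternating-sign expansion behind Claim~\ref{claim:smallp1}, which eliminates the $q^{\ell}$ factor and yields $\#\{\text{bad }x\} \le q^{2d_j\zeta}$, so the union bound closes under $R \le 1 - 3\zeta$ regardless of how small $d_j$ is. The induction therefore runs all the way down to $d_{j_{\max}} = 1$, where the base case (Proposition~\ref{prop:easybase}) is the trivial pigeonhole that a one-dimensional bad set has size at most $\ell$. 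Your raw-moment bound is correct as far as it goes, but too weak to play this role.
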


For the rest of this section, we prove Theorem~\ref{thm:easy}.
As hinted at above, in order to control bad sets of codewords, we will control bad pairs of sets $(X,\Lambda)$.  We say that $(X,\Lambda)$ is \emph{all-bad}\footnote{The terminology ``all-bad" here is in reference to the quantifier $\forall$ in item (d), and is in contrast to the more general case of average-radius list-recovery: in Section~\ref{sec:hard}, we will introduce a slightly different notion of average-badness.} if is forms an obstruction to list-recovery:
\begin{definition}\label{def:allbad}
For a set $X \subseteq \F^d$ with size $n$ and $\Lambda \subseteq \F^d$, we say that $(X,\Lambda)$ is $(L,d,\ell)$-all-bad if
\begin{itemize}
	\item[(a)] $X$ is full-rank,
	\item[(b)] $|\Lambda| \geq L$,
	\item[(c)] $\dim(\spn(\Lambda)) \leq d$,
	\item[(d)] and
$ \forall x \in X, | \inset{ \ip{x}{v} \suchthat v \in \Lambda}| \leq \ell. $
\end{itemize}
We say that a set $X$ is $(L,d,\ell)$-all-bad if there exists a $\Lambda$ so that $(X,\Lambda)$ is $(L, d, \ell)$-all-bad.  For a bad $X$, we will refer to such a $\Lambda$ as a \em witness \em for $X$'s badness.  
\end{definition}

It is easy to characterize $(\ell, L)$-list-recovery in terms of all-bad sets:
\begin{proposition}\label{prop:allbadlr}
	Let $\cC \subseteq \F^n$ be a linear code.  Then $\cC$ is $(\ell, L)$-list-recoverable if and only if, for all $d \leq \dim(\cC)$, there are no sets $X \subseteq \F^d$ so that $\cols(X) \subseteq \cC$ and so that $X$ is $(L, d, \ell)$-all-bad.
\end{proposition}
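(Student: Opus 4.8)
The plan is to prove the equivalence by unpacking the dictionary between sets of codewords and pairs $(X,\Lambda)$ set up in Section~\ref{sec:notation}, applied to the particular bad event relevant to list-recovery. Recall that, by the geometric description of zero-error list-recovery, $\cC$ \emph{fails} to be $(\ell,L)$-list-recoverable exactly when there exist sets $S_1,\dots,S_n\subseteq\F$ with $|S_i|\le\ell$ and a set $\Omega\subseteq\cC$ with $|\Omega|\ge L$ such that $c_i\in S_i$ for every $c\in\Omega$ and every $i\in[n]$; that is, some $\ell\times\cdots\times\ell$ combinatorial rectangle contains at least $L$ codewords. I would establish both directions of the stated ``iff'' in contrapositive form, translating back and forth between such an $\Omega$ and an all-bad pair $(X,\Lambda)$.

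For the direction ``an all-bad $X$ forces a failure of list-recovery'': given $X=\{x_1,\dots,x_n\}\subseteq\F^d$ with $\cols(X)\subseteq\cC$ that is $(L,d,\ell)$-all-bad with witness $\Lambda$, set $c(v):=(\ip{x_1}{v},\dots,\ip{x_n}{v})$ for $v\in\F^d$, and take $\Omega:=\{c(v):v\in\Lambda\}$ and $S_i:=\{\ip{x_i}{v}:v\in\Lambda\}$ for $i\in[n]$. Since $c(v)$ is exactly the linear combination of the columns of $X$ with coefficient vector $v$, and $\cC$ is linear, $c(v)\in\spn(\cols(X))\subseteq\cC$. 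By property (a), the $x_i$ span $\F^d$, so $v\mapsto c(v)$ is injective and $|\Omega|=|\Lambda|\ge L$ using (b); property (d) gives $|S_i|\le\ell$; and by construction every $c(v)\in\Omega$ satisfies $c(v)_i\in S_i$. Hence $\Omega$ and the $S_i$ witness that $\cC$ is not $(\ell,L)$-list-recoverable. (Property (c) plays no role in this direction; it is carried along only so that the recursion of Section~\ref{sec:easy} has low-dimensional structure to exploit.)

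For the converse, ``a failure of list-recovery produces an all-bad $X$'': given $S_1,\dots,S_n$ and $\Omega\subseteq\cC$ with $|\Omega|\ge L$ and $c_i\in S_i$ for all $c\in\Omega,\ i\in[n]$, let $V:=\spn(\Omega)$, which is contained in $\cC$ by linearity, and put $d:=\dim(V)\le\dim(\cC)$, assuming $d\ge1$ (the case $V=\{0\}$ forces $|\Omega|\le1$ and may be ignored). Fix a basis $w^{(1)},\dots,w^{(d)}$ of $V$, let $x_i:=((w^{(1)})_i,\dots,(w^{(d)})_i)\in\F^d$ be the $i$-th ``row'', and let $X:=\{x_1,\dots,x_n\}$; then $\cols(X)=\{w^{(1)},\dots,w^{(d)}\}\subseteq V\subseteq\cC$, and $X$ is full-rank since the $w^{(j)}$ are linearly independent, giving (a). Expanding each $c\in\Omega$ uniquely as $c=\sum_{j=1}^d v_j(c)\,w^{(j)}$ and setting $\Lambda:=\{v(c):c\in\Omega\}\subseteq\F^d$, one checks: $|\Lambda|=|\Omega|\ge L$ since distinct codewords have distinct coordinate vectors, giving (b); $\dim(\spn(\Lambda))\le d$ trivially, giving (c); and $\{\ip{x_i}{v}:v\in\Lambda\}=\{c_i:c\in\Omega\}\subseteq S_i$, so has size at most $\ell$, giving (d). Thus $(X,\Lambda)$ is $(L,d,\ell)$-all-bad with $\cols(X)\subseteq\cC$ and $d\le\dim(\cC)$.

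I do not expect a real obstacle here — the content is the bookkeeping of the $\cols$/pair correspondence. The one genuinely load-bearing step is the equality $|\Omega|=|\Lambda|$ in the first direction, which is exactly where full-rankness of $X$ is used: it is equivalent to injectivity of $v\mapsto c(v)$, and without it an all-bad $X$ need not produce $L$ \emph{distinct} codewords. The remaining care is over conventions in the second direction — reading $X$ as an ordered $n$-tuple so that $\cols(X)$ (which depends on the order of the $x_i$) is well-defined and ``size $n$'' is not spuriously violated by coincidentally equal rows — and the degenerate case $\dim(\spn(\Omega))=0$; neither affects anything.
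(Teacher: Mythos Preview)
Your proof is correct and takes essentially the same approach as the paper: both argue via the dictionary between pairs $(X,\Lambda)$ and sets of codewords, with full-rankness of $X$ ensuring injectivity of $v\mapsto(\ip{x_i}{v})_i$ so that $|\Omega|=|\Lambda|$. Your write-up is in fact more careful than the paper's (which is quite terse here), making explicit the construction of $X$ from a basis of $\spn(\Omega)$ in the reverse direction and flagging the edge cases; the paper simply asserts the correspondence and reads off the conclusion.
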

\begin{proof}
	Let $X = \inset{x_1,\ldots,x_n}$ be a full-rank set. 
	As $\cC$ is linear, the condition that $\cols(X) \subseteq \cC$ is equivalent to the condition that, for all $v \in \F^d$, the vector 
\[ c = ( \ip{x_1}{v}, \ip{x_2}{v}, \ldots, \ip{x_n}{v} ) \]
is in $\cC$.
Thus, a pair of sets $X, \Lambda \subset \F^d$ with $|\Lambda| = L$ describes a set of $L$ codewords in $\cC$.  Conversely, for any set $S$ of codewords in $\cC$, we may find a full-rank $X$ and a set $\Lambda$ of size $L$ so that $(X,\Lambda)$ describes $S$ in this way.

With this correspondence in mind, we see that $\cC$ is $(\ell, L\nmrk{-1})$-list-recoverable if and only if there are no sets $X, \Lambda \subset \F^d$ so that $|X| = n, |\Lambda| = L$, so that $X$ is full-rank and so that for all $i \in [n]$, the set of symbols the occur at position $i$ (which is $\inset{ \ip{x_i}{v} \suchthat v \in \Lambda }$) has size less than or equal to $\ell$.  This is precisely the definition of an all-bad set.
\end{proof}
Thus, we wish to show that with high probability, there are no $(L,d,\ell)$-all-bad sets $X$ with $\cols(X) \subseteq \cC$, for any $d \leq \dim(\cC)$.
We first observe that we may restrict our attention to $(L,d,\ell)$-all-bad sets with $d$ sufficiently small.
Let 
\[ d_0 = \frac{ 2 \ell }{ \xi }, \qquad L_0 = L. \]
\begin{lemma}\label{lem:base2}
Let $\xi > 0$, and suppose that $L > d_0$.
Let $\cC$ be a random linear code of rate $R$ satisfying
\[ R \leq 1 - \log_q(\ell) - \xi. \]
Then for sufficiently large $n$, with probability at least $1 - q^{-\ell n/2}$, if there is an $(L,d,\ell)$-all-bad set $X \subset \F^d$ with $\cols(X) \subseteq \cC$ and with $d > d_0$, then there is an $(L,d_0, \ell)$-all-bad set $X' \subset \F^{d_0}$ with $\cols(X') \subseteq \cC$.
\end{lemma}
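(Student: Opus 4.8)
The plan is a first‑moment (union bound) argument, but the crucial preliminary step is to reduce to bad configurations whose witness spans only a \emph{constant} number of dimensions. Suppose $X\subset\F^d$ is $(L,d,\ell)$‑all‑bad with $d>d_0$, witnessed by some $\Lambda$. First I would replace $\Lambda$ by an arbitrary size‑$L$ subset $\Lambda_0\subseteq\Lambda$: shrinking $\Lambda$ only shrinks each set $\inset{\ip{x}{v}\suchthat v\in\Lambda}$, so $(X,\Lambda_0)$ is still all‑bad, now with $|\Lambda_0|=L$. If $\Lambda_0$ does not span $\F^d$, apply Lemma~\ref{lem:projection} to pass to a pair $(X',\Lambda')$ living in $\F^{d'}$ with $d'=\dim(\spn(\Lambda_0))\le L$; this preserves full‑rankness of $X$, preserves all the inner products $\ip{x}{v}$ (hence the all‑bad property), and keeps $\cols(X')\subseteq\spn(\cols(X))\subseteq\cC$. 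The outcome is an all‑bad pair $(X^{\ast},\Lambda^{\ast})$ in $\F^{d^{\ast}}$ with $\Lambda^{\ast}$ spanning $\F^{d^{\ast}}$, $|\Lambda^{\ast}|=L$, $\cols(X^{\ast})\subseteq\cC$, and --- the point --- $d^{\ast}\le L$. If $d^{\ast}\le d_0$ then, since $\dim(\cC)\ge d>d_0\ge d^{\ast}$, I can append $d_0-d^{\ast}$ further codewords of $\cC$ linearly independent of $\cols(X^{\ast})$ to get a full‑rank $X'\subset\F^{d_0}$ with $\cols(X')\subseteq\cC$ that is $(L,d_0,\ell)$‑all‑bad (with witness $\Lambda^{\ast}$ embedded in $\F^{d_0}$). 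So it suffices to show: with probability at least $1-q^{-\ell n/2}$, there is no full‑rank $X\subset\F^d$ with $\cols(X)\subseteq\cC$ that is all‑bad with a full‑dimensional witness, for any $d$ with $d_0<d\le L$.

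For a fixed such $d$, there are at most $\binom{q^{d}}{L}\le q^{dL}$ choices of a size‑$L$ witness $\Lambda\subset\F^d$. Given a full‑dimensional $\Lambda$, every row $x$ of a matrix $X$ witnessed by $\Lambda$ satisfies $\inabs{\inset{\ip{x}{v}\suchthat v\in\Lambda}}\le\ell$; fixing a basis of $\F^d$ inside $\Lambda$ shows the number of such $x$ is at most $\binom{q}{\ell}\ell^{d}\le q^{\ell}\ell^{d}$, so there are at most $(q^{\ell}\ell^{d})^{n}$ candidate $X$. Finally, for a fixed full‑rank $X$ (so $\cols(X)$ spans a fixed $d$‑dimensional $W\subseteq\F^n$), conditioning on $\rank(\bX)$ and using $\frac{q^{a}-1}{q^{b}-1}\le q^{a-b}$ for $a\le b$ gives $\PR{\cols(X)\subseteq\cC}\le q^{-d(n-k)}$ with $k=Rn$. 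Multiplying,
\[ \PR{\exists\text{ such }X\text{ for this }d}\;\le\;q^{dL}\cdot(q^{\ell}\ell^{d})^{n}\cdot q^{-d(n-k)}\;=\;q^{\,dL+\ell n+dn(\log_q(\ell)+R-1)}. \]
The rate bound $R\le 1-\log_q(\ell)-\xi$ forces $\log_q(\ell)+R-1\le-\xi$, so the exponent is at most $dL+\ell n-dn\xi$; using $d\le L$ (so $dL\le L^{2}$) and $d>d_0=2\ell/\xi$ (so $dn\xi>2\ell n$), this is strictly less than $L^{2}-\ell n$. Summing over the at most $L$ integers $d\in(d_0,L]$ yields $L\,q^{L^{2}-\ell n}$, which is at most $q^{-\ell n/2}$ once $n$ is large (say $n\ge 4L^{2}/\ell$). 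Combined with the reduction, this proves the lemma.

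The only real idea is the reduction above: once one notices that a size‑$L$ witness can be taken to span at most $L$ dimensions, the index $d$ being union‑bounded ranges over only $O(1)$ values, so the crude estimate $\PR{\cols(X)\subseteq\cC}\le q^{-d(n-k)}$ is good enough. A naive union bound over all $d$ up to $\dim(\cC)=\Theta(n)$ fails: the $\approx q^{dn}$‑sized count of candidate matrices $X$ (equivalently the $\approx q^{d(k-d)}$ count of $d$‑dimensional subspaces of $\cC$) then overwhelms $q^{-d(n-k)}$ and leaves an exponent quadratic in $n$. Everything else --- verifying the containment probability in the generator‑matrix model, the counting bound $\binom{q}{\ell}\ell^{d}$, and the exponent arithmetic --- is routine, and in particular no control of $\sigma_p(\Lambda)$ and no recursion is needed at this stage; that machinery enters only afterwards, to handle the remaining regime $d\le d_0$.
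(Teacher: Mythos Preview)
Your proof is correct and follows the same route as the paper's: reduce to a full-dimensional witness via Lemma~\ref{lem:projection}, count rows $x$ compatible with such a witness by the bound $\binom{q}{\ell}\ell^{d}$ (this is exactly the paper's $q^{d'n}\binom{q}{\ell}^{n}(\ell/q)^{d'n}\le q^{\ell n}\ell^{d'n}$ computation), multiply by the containment probability $q^{-d(1-R)n}$, and union-bound over $\Lambda$ and $d$. Your explicit padding from $\F^{d^{\ast}}$ up to $\F^{d_0}$ when $d^{\ast}<d_0$ is more careful than the paper, which simply invokes Lemma~\ref{lem:projection} and leaves that detail implicit.

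One correction to your commentary: the preliminary reduction to $d^{\ast}\le L$ is valid but unnecessary, and your claim that without it ``a naive union bound over all $d$ up to $\dim(\cC)=\Theta(n)$ fails'' is not right. The paper does precisely that union bound, summing over all $d'$ from $d_0$ to $\dim(\cC)$, and it works: the exponent $d'L+\ell n-d'n\xi=d'(L-n\xi)+\ell n$ is decreasing in $d'$ once $n$ is large enough that $n\xi>L$, so the sum is dominated by its first term and the linear-in-$n$ number of summands is harmless. What \emph{would} fail is counting all $q^{dn}$ matrices $X$ rather than only the bad ones for a fixed $\Lambda$ --- but you already do the careful count, so the cap $d\le L$ buys nothing extra. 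The essential idea, which you have, is the $\binom{q}{\ell}\ell^{d}$ row count against a full-dimensional witness.
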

\begin{proof}
We will show that, with high probability, for any $(L,d,\ell)$-all-bad set $X$ with $\cols(X) \subseteq \cC$, there is a witness $\Lambda$ for $X$'s badness that has dimension at most $d_0$.  Then Lemma~\ref{lem:projection} will imply the statement.

Fix a set $\Lambda \subset \F^d$ and suppose that $\dim(\Lambda) = d' \geq d_0$.
By Lemma~\ref{lem:projection}, there is a set $\Lambda' \subseteq \F^{d'}$ so that for all $X \in \F^{d}$ so that $(X,\Lambda)$ is $(L,d,\ell)$-all-bad, there is some $X' \in \F^{d'}$ so that $(X', \Lambda')$ is $(L, d', \ell)$-all-bad, and with $\cols(X') \subseteq \cols(X) \subseteq \cC$.

We will count the number of these $X'$, and then use a union bound to control the probability that any of them have $\cols(X) \subseteq \cC$.
Let $v_1,\ldots,v_{d'} \in \F^{d'}$ be linearly independent vectors in $\Lambda'$.  
Imagine choosing $X' \in \F^{n \times d'}$ uniformly at random.  Then
\begin{align*}
\mathbb{P}_{X'} \inset{ (X', \Lambda')   \text{ is  $(L, d', \ell)$-all-bad } } 
&\leq \mathbb{P}_{X'}\inset{ \forall x \in X', \inabs{ \inset{ \ip{x}{v_i} \suchthat i \in [d'] } } \leq \ell \,\wedge\, X' \text{ is full rank } }\\
&\leq \mathbb{P}_{X'}\inset{ \forall x \in X', \inabs{ \inset{ \ip{x}{v_i} \suchthat i \in [d'] } } \leq \ell }\\
&\leq \sum_{S_1,\ldots,S_n \subset \F, |S_i| = \ell} \mathbb{P}_{X'} \inset{ \forall x \in X', \forall i \in [d'], \ip{x}{v_i} \in S_i } \\
&\leq {q \choose \ell}^n \cdot \inparen{ \frac{\ell}{q} }^{d' n}.
\end{align*}
Above, we used
the fact that since the $v_i$ are linearly independent, the random variables $\ip{x}{v_i}$ are independent.
Thus, the number of full-rank $X' \in \F^{d'}$ that are $(L, d', \ell)$-all-bad with witness $\Lambda'$ is at most
\[  q^{d' n} \cdot {q \choose \ell}^n \cdot \inparen{ \frac{ \ell}{q} }^{d'n } \le   q^{\ell n} \cdot \ell^{d'n}. \]

Now, the probability (over a random linear code $\cC$) that any fixed set $\cols(X')$ of $d'$ linearly independent vectors is contained in $\cC$ is at most $q^{-d'n(1 - R)}$, and so for fixed $\Lambda'$ of dimension $d'$, we have
\begin{align*}
&
 \PR{ \exists X , \cols(X) \subseteq \cC \text{ and $X$ is $(L, d', \ell)$-all-bad with witness $\Lambda'$ } }\\
&\qquad \leq q^{\ell n } \cdot \ell^{d' n} \cdot q^{-d' n(1 - R) } \\
&\qquad = \exp_q\inparen{ n \inparen{ \ell - d'(1 - R - \log_q(\ell))}}\\
&\qquad \leq \exp_q \inparen{ n \inparen{ \ell - d'\xi } },
\end{align*}
using the assumption on $R$.
Finally, we union bound over all possible sets $\Lambda'$, and all $d' > d_0$, to find
{\allowdisplaybreaks
\begin{align*}
&\PR{ \exists X \in \F^d, \cols(X) \subseteq \cC, \text{ and $X$ is $(L, d, \ell)$-all-bad with a witness $\Lambda$ of dimension $\geq d_0$}}\\
&\qquad\leq \sum_{d' = d_0}^{\dim(\cC)} \PR{ \exists X \in \F^d, \cols(X) \subseteq \cC, \text{ and $X$ is $(L, d, \ell)$-all-bad with a witness $\Lambda$ of dimension $d'$}}\\
&\qquad\leq \sum_{d' = d_0}^{\dim(\cC)} \PR{ \exists X' \in \F^{d'}, \cols(X) \subseteq \cC, \text{ and $X'$ is $(L, d', \ell)$-all-bad with a witness $\Lambda'$ of dimension $d'$}}\\
&\qquad\leq \sum_{d' = d_0}^{\dim(\cC)} { q^{d'} \choose L } \exp_q\inparen{ n ( \ell - d' \xi ) } \\
&\qquad\leq \sum_{d' = d_0}^{\dim(\cC)} \exp_q\inparen{ n\ell - d'( n \xi - L ) } \\
&\qquad\leq \sum_{d' = d_0}^{\dim(\cC)} \exp_q\inparen{ n\ell - \frac{2\ell}{\xi} \inparen{ n \xi - L } } \qquad \text{by the assumption on $d_0$}\\
&\qquad= \sum_{d' = d_0}^{\dim(\cC)} \exp_q\inparen{ \frac{ 2 \ell L }{\xi} - n\ell  }\\ 
&\qquad\leq q^{-\ell n/2 }
\end{align*}
}
using in the last line the assumption that $n$ is sufficiently large (compared to $L, \xi, \ell$).
\end{proof}

In light of Lemma~\ref{lem:base2}, our goal is now to rule out the possibility that $\cols(X) \subseteq \cC$ for any all-bad $(X,\Lambda)$ with $\dim(\Lambda) \leq d_0$.
Let 
\[ L_j = \frac{L_0}{\nmrk{(2q^2d_0)}^{j}} \qquad \text{and} \qquad  d_j = d_0 \cdot \inparen{ 1 - \zeta }^j \]
for $j = 0, \ldots, j_{\max}$, where
\[ j_{\max} = \frac{\log(d_0)}{\log(1/(1 - \zeta))}. \]
Thus, $d_{j_{\max}} = 1$.
The proof proceeds by induction on $j$, starting with $j_{\max}$ and decrementing $j$. 
We maintain (with high probability) the following inductive hypothesis.
\begin{quote}
\textbf{Hypothesis}$(j)$: There is no $(L_j, d_j, \ell)$-all-bad set $X$ of size $n$ with $\cols(X) \subseteq \cC$.
\end{quote}

\begin{proposition}\label{prop:easybase}
Suppose that 
$ L_0 > \ell \cdot \nmrk{(2q^2d_0)}^{j_{\max}}. $
Then
\textbf{Hypothesis}$(j_{\max})$ holds.
\end{proposition}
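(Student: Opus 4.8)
The plan is to observe that \textbf{Hypothesis}$(j_{\max})$ is really a statement about one-dimensional data, since by the choice of $j_{\max}$ we have $d_{j_{\max}} = d_0(1-\zeta)^{j_{\max}} = 1$, and that in dimension one, being all-bad forces the witness to be tiny. Concretely, I would first unwind the two sequences: $L_{j_{\max}} = L_0/(2q^2 d_0)^{j_{\max}}$, so the hypothesis $L_0 > \ell \cdot (2q^2 d_0)^{j_{\max}}$ is precisely the inequality $L_{j_{\max}} > \ell$. This is the only place the assumed bound on $L_0$ is used.

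Next I would argue by contradiction. Suppose there were an $(L_{j_{\max}}, 1, \ell)$-all-bad set $X \subseteq \F^1$ of size $n$ with $\cols(X) \subseteq \cC$, and let $\Lambda \subseteq \F^1 = \F_q$ be a witness, so that $|\Lambda| \geq L_{j_{\max}}$ by condition (b). Since $X$ is full-rank (condition (a)) and lives in $\F^1$, it contains some nonzero scalar $x$. For this $x$, the map $v \mapsto \ip{x}{v} = xv$ is a bijection of $\F_q$, so
\[ \inabs{ \inset{ \ip{x}{v} \suchthat v \in \Lambda } } = |\Lambda| \geq L_{j_{\max}} > \ell, \]
which contradicts condition (d) in Definition~\ref{def:allbad}. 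Hence no such $X$ exists, and \textbf{Hypothesis}$(j_{\max})$ holds.

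I do not expect any real obstacle here: the statement is in fact deterministic (it uses nothing about the randomness of $\cC$), and it simply serves as the base case anchoring the downward induction on $j$ in the rest of the section. The only minor points to keep straight are (i) verifying $d_{j_{\max}} = 1$ from the definition of $j_{\max}$, (ii) translating the hypothesis on $L_0$ into $L_{j_{\max}} > \ell$, and (iii) remembering that full-rankness in dimension one guarantees a nonzero coordinate, so that the multiplication-is-a-bijection argument applies.
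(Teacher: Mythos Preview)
Your proposal is correct and follows essentially the same approach as the paper: both observe that $d_{j_{\max}}=1$, translate the hypothesis on $L_0$ into $L_{j_{\max}}>\ell$, and then note that for any nonzero $x\in\F^1$ the map $v\mapsto xv$ is a bijection, so no one-dimensional $\Lambda$ of size exceeding $\ell$ can satisfy condition~(d). Your write-up is in fact a bit more explicit than the paper's (you spell out why full-rankness in dimension one yields a nonzero $x$), but the argument is the same.
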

\begin{proof}
	If $L_0 \geq \ell \cdot \nmrk{(2q^2d_0)}^{j_{\max}}$, then $L_{j_{\max}} > \ell$.  However, $d_{j_{\max}} = 1$, and any one-dimensional set $\Lambda$ of size $L_{\max}$ has $|\inset{ \ip{x}{v} \suchthat v \in \Lambda }| = |\Lambda| > \ell$ for any nonzero $x$.  Thus, there are no bad sets $X$.
\end{proof}

With the base case out of the way, we prove our main inductive lemma.
\begin{lemma}\label{lem:induct}
Let $j \in \inset{0,\ldots,j_{\max}}$.  Let $\zeta \in (0,1/4)$.  
Suppose that
\[ R < 1 - \nmrk{3}\zeta.\] 
Then with probability at least $1 - q^{- \zeta d_j n}$, at least one of the following holds:
\begin{itemize}
	\item \textbf{Hypothesis}(j) holds; or
	\item \textbf{Hypothesis}(j+1) does not hold.
\end{itemize}
\end{lemma}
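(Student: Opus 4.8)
I would argue the contrapositive, bounding the probability of the bad event that \textbf{Hypothesis}$(j)$ fails while \textbf{Hypothesis}$(j+1)$ holds. Suppose \textbf{Hypothesis}$(j)$ fails, so there is a full-rank $X$ of size $n$ with $\cols(X)\subseteq\cC$ and a witness $\Lambda\subseteq\F^{d_j}$ with $|\Lambda|\geq L_j$, $\dim(\spn(\Lambda))\le d_j$, and $|\{\ip{x}{v}:v\in\Lambda\}|\le\ell$ for every $x\in X$. Passing to a subset I may assume $|\Lambda|=L_j$ exactly; condition (d) of Definition~\ref{def:allbad} is monotone under shrinking $\Lambda$, and it is precisely this monotonicity that lets the zero-error case avoid the more delicate ``transitivity of badness'' argument needed in Section~\ref{sec:hard}. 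Write $d'=\dim(\spn(\Lambda))$. The argument splits according to how much linear structure $\Lambda$ carries, using the $\sigma_p$ machinery of Section~\ref{ssec:sigma}.

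\emph{Case A (structured $\Lambda$).} Suppose either $d'\le d_{j+1}$, or $\sigma_p(\Lambda)$ exceeds the threshold of Lemma~\ref{lem:sigma} for some $p$ (applied with ambient dimension $d'$, parameter $\zeta$, and the choice $T=L_j/(2q^2 d')$ in Lemma~\ref{lem:easysmallp}, as in Remark~\ref{rem:smallp}). In the first sub-case set $\Gamma=\Lambda$; in the second, Lemma~\ref{lem:sigma} produces $\Gamma\subseteq\Lambda$ with $\dim(\spn(\Gamma))\le(1-\zeta)d'\le(1-\zeta)d_j=d_{j+1}$ and $|\Gamma|\geq|\Lambda|\cdot\min\{\tfrac{1}{2q^2 d'},\tfrac{\zeta}{qe}\}\geq L_j/(2q^2 d_0)=L_{j+1}$, using $d'\le d_j\le d_0$ and the size hypotheses on $q$. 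Either way $\Gamma\subseteq\Lambda$ still satisfies (d), so projecting $X$ and $\Gamma$ down to $\F^{\dim(\spn(\Gamma))}$ via Lemma~\ref{lem:projection} (which preserves full-rankness and keeps $\cols(\cdot)$ inside the linear code $\cC$) exhibits an all-bad set of dimension $\le d_{j+1}$ with list size $\ge L_{j+1}$ and columns in $\cC$, contradicting \textbf{Hypothesis}$(j+1)$. Thus Case A never occurs on the bad event; it is a deterministic consequence of $\Lambda$'s structure and costs no probability.

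\emph{Case B (pseudorandom $\Lambda$).} Otherwise $d'>d_{j+1}=(1-\zeta)d_j$ and $\sigma_p(\Lambda)$ is below the Lemma~\ref{lem:sigma} threshold for all $p$. Here I union-bound over $\cC$. The number of pairs $(d',\Lambda)$ with $d'\in(d_{j+1},d_j]$, $\dim(\spn(\Lambda))=d'$, $|\Lambda|=L_j$ is some fixed $q^{O(d_j^2+d_j L_j)}$ independent of $n$. For a fixed such $\Lambda$, the number of full-rank $X$ of size $n$ satisfying (d) is at most $N^n$, where $N=q^{d_j}\cdot\Pr_x\{|\{\ip{x}{v}:v\in\Lambda\}|\le\ell\}$ for uniform $x\in\F^{d_j}$: the event forces $\pl^{(\ell)}_x(\Lambda)=1$, and since $\EE_x(\pl^{(\ell)}_x(\Lambda))^p\le q\,\ell^p\,\sigma_p(\Lambda)$ (switch the order of the sums as in Section~\ref{sec:outline}, and use $\dim(\spn(v_1,\ldots,v_p))\le p$), Markov's inequality together with the ``small-$\sigma_p$'' bounds of Lemma~\ref{lem:sigma} — using the first branch with $p\approx(1-\zeta)d'$ and, when $d'$ is close to $d_j$, the second branch with $p=d_j$ — bounds $N$ by $q^{d_j}\cdot q^{O(1)}\cdot(\ell/q)^{\Omega(d')}$, where the key point is that $d'>(1-\zeta)d_j$ makes this exponent large. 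Finally, since $X$ is full-rank, $\cols(X)$ is a set of $d_j$ linearly independent vectors, so $\Pr_\cC\{\cols(X)\subseteq\cC\}\le q^{-d_j n(1-R)}$. Multiplying the three estimates and summing over $(d',\Lambda)$, and substituting $1-R>3\zeta$ (together with the ambient hypothesis $R\le 1-\log_q(\ell)-\xi$), $\log_q(\ell)\le\zeta/2$ (from $q\geq\ell^{2/\zeta}$), and $d_0=2\ell/\xi$, makes the $q$-exponent of the $n$-dependent factor strictly below $-\zeta d_j$, so the bad event has probability at most $q^{-\zeta d_j n}$ once $n$ is large.

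\emph{Main obstacle.} The hard part is the accounting in Case B: one has to verify that the per-coordinate savings $(\ell/q)^{\Omega(d')}$ from the moment bound, together with the rate deficit $q^{-d_j n(1-R)}$, beat the per-coordinate cost (from choosing, at each position of $X$, which $\ell$ symbols may appear) \emph{uniformly over the entire range} $1=d_{j_{\max}}\le d_j\le d_0=2\ell/\xi$. The small-$d_j$ regime is the tightest, and getting it to close is exactly what forces the precise choices of $d_0$, of the sequence $L_j$, and of the two lower bounds on $q$ (it is where one needs the second branch of Lemma~\ref{lem:sigma} and the interplay of both upper bounds on $R$). A secondary point is the dimension bookkeeping in Case A, so that the extracted $\Gamma$ really yields an obstruction at level exactly $j+1$ rather than at some intermediate dimension.
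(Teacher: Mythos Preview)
Your overall strategy matches the paper's: split on whether $\sigma_p(\Lambda)$ is small (``good'' $\Lambda$) or large (``bad'' $\Lambda$), extract a low-dimensional $\Gamma$ in the bad case via Lemma~\ref{lem:sigma}, and in the good case count the admissible $X$'s via a moment bound on $\pl_x$ and union-bound against $\Pr[\cols(X)\subseteq\cC]\le q^{-(1-R)d_jn}$.  The deterministic Case~A is exactly Claim~\ref{claim:easyversion}.

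However, your Case~B accounting does not close.  The uncentered moment bound you invoke, $\EE_x(\pl^{(\ell)}_x)^p\le q\,\ell^p\sigma_p(\Lambda)$, carries an unavoidable leading factor of $q$ (coming from the sum over $\alpha\in\F$).  With the first branch $\sigma_p\le(1+1/q)q^{-p}$ this gives $\Pr_x[\pl^{(\ell)}_x=1]\le(q+1)(\ell/q)^p$, hence $N\le(q+1)\,\ell^{(1-\zeta)d'}q^{\zeta d_j}$.  Raising to the $n$th power, the $(q+1)^n\approx q^n$ factor swamps $q^{-(1-R)d_jn}$ whenever $d_j$ is $O(1)$---and the induction runs all the way down to $d_{j_{\max}}=1$.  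The second branch of Lemma~\ref{lem:sigma} does not rescue this: plugging $\sigma_p\le d'/(q^{d'(1-2\zeta)}\ell^p)$ into the same uncentered inequality still leaves a prefactor $qd'$, so again $N^n$ contains a $q^{\Theta(n)}$ term that no amount of $1-R>3\zeta$ can absorb when $d_j$ is small.  Your ``$q^{O(1)}$'' is in fact at least $q$, and $(q^{O(1)})^n$ is not negligible against $q^{-\Theta(d_j)n}$.

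The paper's fix is precisely to \emph{center}: it applies Markov to $(\pl_x(\Lambda)-1/q)^p$ against the threshold $1/\ell-1/q$, and proves (Claim~\ref{claim:smallp1}, which is the $\ell=1$ specialisation of Claim~\ref{claim:smallp}) that for good $\Lambda$ and $p\le(1-\zeta)d_j$ one has $\EE_x(\pl_x-1/q)^p\le(2/q)^p$.  The centering produces a binomial cancellation $\sum_S(-1/q)^{p-|S|}q^{-|S|}=0$ that kills the extra $q$; what survives is $(2/q)^p$ with no prefactor.  This yields $N\le q^{2\zeta d_j}$ uniformly in $d_j$, and then $R<1-3\zeta$ alone gives the $q^{-\zeta d_jn}$ bound.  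In particular only the small-$p$ branch of Lemma~\ref{lem:sigma} is needed (Remark~\ref{rem:smallp}), and the second rate hypothesis $R\le 1-\log_q(\ell)-\xi$ plays no role in Lemma~\ref{lem:induct}.
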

Before we prove the lemma, let us note why it suffices to prove the theorem.  We use our assumption on the list size $L = L_0$ to apply Proposition~\ref{prop:easybase}.  This implies that \textbf{Hypothesis}$(j_{\max})$ holds with probability $1$.  Next, noting that the hypothesis of Theorem~\ref{thm:easy} are enough to establish the hypotheses of Lemma~\ref{lem:induct} for all $j$, we proceed inductively.  For every $j$, Lemma~\ref{lem:induct} shows that
\begin{align*}
\PR{ \text{ \textbf{Hypothesis}($j$) does not hold } } &\leq q^{-\zeta n} + \PR{ \text{\textbf{Hypothesis}($j+1$) does not hold}}.
\end{align*}
Since $\PR{ \textbf{Hypothesis}(j_{\max}) \text{ does not hold }} = 0 $, we have
\[ \PR{ \textbf{Hypothesis}(0) \text{ does not hold }} \leq j_{\max} \cdot q^{-\zeta n} \leq q^{-C\zeta n}, \]
for some constant $C$,
using the fact that $n$ is sufficiently large compared to $j_{\max}$ (which does not depend on $n$).
Putting together the above with Lemma~\ref{lem:base2} and Proposition~\ref{prop:allbadlr}, we have
\begin{align*}
&\PR{ \cC \text{ is not $(\ell, L)$-list-recoverable } } \\
&\qquad\leq \PR{ \exists (X,\Lambda) \text{ that are $(L,d,\ell)$-all-bad for some $d \leq d_0$ } } + \PR{ \exists (X, \Lambda) \text{ that are $(L,d,\ell)$-all-bad for some $d > d_0$ } } \\
&\qquad\leq \PR{ \textbf{Hypothesis}(0) \text{ does not hold } } + q^{-\ell n / 2} \\
&\qquad\leq q^{-C \zeta n } + q^{-\ell n / 2 }\\
&\qquad\leq q^{- C' \zeta n },
\end{align*}
using in the last line that $n$ is sufficiently large.  This completes the proof of Theorem~\ref{thm:easy}, modulo the proof of Lemma~\ref{lem:induct}.

\subsection{Proof of Lemma~\ref{lem:induct}}
Finally, we prove Lemma~\ref{lem:induct}.
Say that a set $\Lambda$ is \textbf{good} if the quantities $\sigma_p(\Lambda)$ are small; more precisely, if
\[ \sigma_p(\Lambda) \leq \frac{1 + 1/q}{q^p}. \]
Otherwise, we say that $\Lambda$ is \textbf{bad}.

We will show that most of the sets $X$ that are $(L_j, d_j, \ell)$-all-bad have a witness $\Lambda$ that is bad.  
This means that, with high probability, there won't be a bad $X$ with $\cols(X) \subseteq \cC$ that has a good witness: indeed, there are not very many of these sets, so the probability that they are contained in a random subspace is small.  But this means that if there \em is \em a $(L_j, d_j, \ell)$-all-bad set $X$ with $\cols(X) \subseteq \cC$ (that is, if \textbf{Hypothesis}$(j)$ does not hold), then its witness $\Lambda$ is bad.  We will show that if this is the case, then actually $\Lambda$ contains a large, low-dimensional subset, which can be used to find a violation of \textbf{Hypothesis}$(j+1)$. 

We begin by counting the number of $X$'s that are bad with good witnesses.  
\begin{claim}\label{claim:goodwitness}
	Let $\Lambda \subseteq \F^{d_j}$ be good.   Then the number of $X \subseteq \F^{d_j}$ of size $n$ that are $(L,d_j, \ell)$-all-bad with witness $\Lambda$ is at most $q^{\nmrk{ 2n } d_j \zeta}. $
\end{claim}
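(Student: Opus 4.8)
The plan is to show $|B_\Lambda| \le q^{2 d_j \zeta}$, where
\[ B_\Lambda := \inset{ x \in \F^{d_j} \suchthat \inabs{\inset{ \ip{x}{v} \suchthat v \in \Lambda }} \leq \ell }. \]
This suffices: if $(X,\Lambda)$ is $(L,d_j,\ell)$-all-bad then every element of $X$ lies in $B_\Lambda$ (this is condition (d) of Definition~\ref{def:allbad}), so the number of $n$-element sets $X$ that are all-bad with witness $\Lambda$ is at most $|B_\Lambda|^n$, hence at most $q^{2 n d_j \zeta}$.

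To bound $|B_\Lambda|$, I would pass to a $p$-th moment over a uniformly random $x\in\F^{d_j}$, with $p$ to be chosen. For $\alpha\in\F$ write $m_\alpha(x) = \inabs{\inset{ v\in\Lambda \suchthat \ip{x}{v} = \alpha}}$, and let $N_p(x) = \sum_{\alpha\in\F} m_\alpha(x)^p$ be the number of tuples $(v_1,\dots,v_p)\in\Lambda^p$ with $\ip{x}{v_1} = \dots = \ip{x}{v_p}$. If $x\in B_\Lambda$ then the numbers $m_\alpha(x)$ are supported on at most $\ell$ values of $\alpha$ and sum to $|\Lambda|$, so convexity of $t\mapsto t^p$ gives $N_p(x)\ge |\Lambda|^p/\ell^{p-1}$, whence $\ind{x\in B_\Lambda}\le (\ell^{p-1}/|\Lambda|^p)\,N_p(x)$. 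Summing over $x$ and swapping the order of summation,
\[ |B_\Lambda| \;\le\; \frac{\ell^{p-1}}{|\Lambda|^p}\sum_{v_1,\dots,v_p\in\Lambda} \inabs{\inset{ x\in\F^{d_j} \suchthat \ip{x}{v_1}=\dots=\ip{x}{v_p} }}. \]
The inner set equals $\inset{ x \suchthat x\perp \spn(v_1-v_2,\dots,v_1-v_p)}$, and since $\spn(v_1,v_1-v_2,\dots,v_1-v_p)=\spn(v_1,\dots,v_p)$ its size is at most $q^{d_j+1}\cdot q^{-\dim\spn(v_1,\dots,v_p)}$. Recognising Definition~\ref{def:sigma}, this yields $|B_\Lambda| \le \ell^{p-1}\,q^{d_j+1}\,\sigma_p(\Lambda)$.

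Finally I would invoke goodness of $\Lambda$, which in the sense relevant to this section (see Remark~\ref{rem:smallp}) gives $\sigma_p(\Lambda)\le (1+1/q)q^{-p}$ for every $p\le (1-\zeta)d_j$, and take $p=\lfloor(1-\zeta)d_j\rfloor$. Then $|B_\Lambda| \le 2\,\ell^{p-1}q^{d_j+1-p}$, and $q\ge\ell^{2/\zeta}$ gives $\ell^{p-1}\le \ell^{(1-\zeta)d_j}\le q^{\zeta d_j/2}$, so
\[ |B_\Lambda| \;\le\; 2\,q^{\,d_j+1-p+\zeta d_j/2} \;\le\; 2\,q^{\,3\zeta d_j/2\,+\,2} \;\le\; q^{\,2\zeta d_j}. \]
The last inequality needs $\zeta d_j$ above a fixed constant; this holds since $\Lambda$ has size at least $L$ and lives in $\F^{d_j}$, forcing $d_j\ge\log_q L$, and $L$ is chosen large in Theorem~\ref{thm:easy}. (Were $d_j$ so small that this fails, then $B_\Lambda$—being a union of lines through the origin, by scale-invariance—would be contained in a proper subspace, hence contain no full-rank $X$ and the count would be $0$; but the moment argument is the main content.) Raising to the $n$-th power proves the claim.

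The step I expect to be delicate is the moment estimate $\EE_x N_p(x)\le q\,|\Lambda|^p\,\sigma_p(\Lambda)$ together with the choice of $p$: one wants $p$ as large as possible so that $q^{d_j-p}$ is small, but goodness only reaches $p\approx(1-\zeta)d_j=d_{j+1}$, and the hypothesis $q\ge\ell^{2/\zeta}$ is exactly what keeps the $\ell^{p-1}$ blow-up below the $q^{d_j-p}$ savings. Everything else is bookkeeping of constant factors, the only genuinely fiddly point being the low-dimensional boundary.
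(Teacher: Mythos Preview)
Your approach is correct and parallels the paper's: both bound $|B_\Lambda|$ by a $p$-th moment with $p\approx(1-\zeta)d_j$ and then invoke goodness via $\sigma_p(\Lambda)$. The paper phrases the moment differently---it centers, bounding $\EE_x(\pl_x(\Lambda)-1/q)^p\le(2/q)^p$ (Claim~\ref{claim:smallp1}, itself derived from $\sigma_p$), applies Markov to the event $\{\pl_x(\Lambda)\ge 1/\ell\}\supseteq\{x\in B_\Lambda\}$, and closes the bookkeeping using the hypothesis $q\ge(3\ell)^{1/\zeta-1}$; you work instead with the uncentered count $N_p(x)=\sum_\alpha m_\alpha(x)^p$ and use $q\ge\ell^{2/\zeta}$. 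Both routes are valid, and yours is arguably the more direct one (indeed it sidesteps the second lower bound on $q$ that Remark~\ref{rem:q-size} flags as being used only here in the paper's version).

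One correction: your parenthetical fallback for small $d_j$ is wrong. $B_\Lambda$ being a union of lines through the origin does not force it into a proper subspace---for instance $\Lambda=\{e_1,e_2,e_3\}\subset\F_q^3$ with $\ell=2$ gives $B_\Lambda=\{x_1=x_2\}\cup\{x_1=x_3\}\cup\{x_2=x_3\}$, which for $q>2$ contains the basis $(1,1,0),(1,0,1),(0,1,1)$. Your primary boundary argument via $d_j\ge\log_q|\Lambda|\ge\log_q L$ is the right one, and it does supply $\zeta d_j$ above a fixed constant once the lower bound on $L$ in Theorem~\ref{thm:easy} is in force. (The paper is equally loose at this edge: it writes ``choosing $p=d(1-\zeta)$'' although the moment expansion behind Claim~\ref{claim:smallp1} needs integer $p$, so neither proof tracks these $O(1)$ additive terms in the exponent carefully---but they are harmless for the induction that follows.)
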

\begin{proof}
	Say that $x \in \F^{d_j}$ is \textbf{bad} with respect to $\Lambda$ if
\[ \inabs{ \inset{ \ip{x}{v} \suchthat v \in \Lambda } } \leq \ell. \]
Then because an all-bad set $X$ must have $n$ bad vectors $x$ as elements, we have
\begin{equation}\label{eq:tothen}
\inparen{ \text{number of all-bad sets $X$ with witness $\Lambda$} } \leq \inparen{ \text{number of $x$ that are bad with respect to $\Lambda$} }^n.
\end{equation}
With that in mind, we count the number of bad $x$'s.  Imagine drawing $x \in \F^{d_j}$ uniformly at random.  We have, for all $p \leq (1 - \zeta)d_j$,
\begin{align}
\mathbb{P}_x \inset{ x \text{ is bad for} \ \Lambda } &= \mathbb{P}_x \inset{ \inabs{ \inset{ \ip{x}{v} \suchthat v \in \Lambda } } \leq \ell } \notag\\
&\leq \mathbb{P}_x \inset{ \pl_x(\Lambda) \geq \frac{1}{\ell}} \notag\\
&= \mathbb{P}_x \inset{ \pl_x(\Lambda) - \onebyq \geq \frac{1}{\ell} - \onebyq} \notag\\
&\leq \frac{ \EE_x \inparen{ \pl_x(\Lambda) - \onebyq }^p }{\inparen{ \frac{1}{\ell} - \onebyq}^p } \label{eq:leftoff}.
\end{align}
Now, we will use the assumption that $\Lambda$ is good to bound this expectation. 
If $\Lambda$ is good, then $\sigma_p(\Lambda)$ is small; it turns out that this is sufficient to bound the moments of $\pl_x^{(\ell)}(\Lambda)$:
\begin{claim}
[Follows from Claim \ref{claim:smallp} which is proved later]
\label{claim:smallp1}
Suppose that $\Lambda$ is good. 
Then
for all $p \le d_j(1 - \zeta)$, we have
	\[ \EE_x \inparen{ \pl_x(\Lambda) - \nmrk{\onebyq} }^p \leq \nmrk{ \inparen{ \frac{2}{q} }^p }. \] 
\end{claim}
Using the above in \eqref{eq:leftoff}, and choosing $p = d(1 - \zeta)$, we see
\[ \mathbb{P}_x \inset{ x \text{ is bad for } \Lambda } \leq \inparen{ \frac{\nmrk{2}}{\nmrk{q\cdot}\inparen{\frac{1}{\ell} - \onebyq}} }^{d_j(1 - \zeta)} \leq \inparen{ \frac{\nmrk{3}\cdot \ell}{q}}^{d_j(1 - \zeta)}, \]
where in the last inequality we have used that our lower bound on $q$ implies $(q-\ell)/2\ge q/3$.
Thus, the number of bad $x$'s are bounded by
\[ q^{d_j} \cdot \inparen{ \frac{\nmrk{3}\ell}{q} }^{d_j(1 - \zeta)} = (\nmrk{3} \ell)^{d_j(1 - \zeta)} \cdot q^{d_j \zeta}. \]
Finally, we use the assumption that $q \geq \nmrk{(3\ell)^{1/\zeta-1}}$ 
to conclude that
the number of bad $x$'s are bounded by
$ q^{\nmrk{2d_j} \zeta}. $ 
With \eqref{eq:tothen}, this proves the claim.
\end{proof}

\begin{remark}
\label{rem:q-size}
The above is the only place that the requirement $q \geq \nmrk{(3\ell)^{1/\zeta-1}}$ is used.  This is a convenience, as it simplifies the calculations going forward; moreover, Theorem~\ref{thm:easy} is only interesting when $\ell$ is large, and in this case this requirement is weaker than the requirement that $q \geq \ell^{2/\zeta}$.   However, in Section~\ref{sec:hard}, we will see how to obtain average-radius list-recoverability without this requirement, needing only $q \geq \ell^{2/\zeta}$.  In particular for the list-decoding case, where $\ell = 1$, there will be no constraints on $q$.
\end{remark}

Now, for any fixed set $X \subset \F^{d_j}$, the probability that $\cols(X) \subseteq \cC$ for a random linear code $\cC$ of rate $R$ is
\[ \PR{ \cols(X) \subseteq \cC} \leq q^{-(1 - R)d_jn}. \]
Thus, Claim~\ref{claim:goodwitness} along with the union bound shows that
\[ \PR{ \exists (L_j, d_j, \ell)\text{-all-bad set} X \subseteq \cC } \leq q^{\nmrk{2n}d_j \zeta} q^{-(1 - R)d_j n } = q^{-n d_j(1 - R - \nmrk{2\zeta})} \leq q^{-nd_j \zeta},\]
using the assumption that $R < 1 - \nmrk{3}\zeta$.

This takes care of the good $\Lambda$'s; if \textbf{Hypothesis}$(j)$ fails to hold, then with high probability this is due to a bad $\Lambda$.  However, we have the following claim.
\begin{claim}\label{claim:easyversion}
	Suppose that $\Lambda \subset \F^{d_j}$ is bad, and that $X \subset \F^{d_j}$ is a $(L,d_j,\ell)$-all-bad set with witness $\Lambda$.  Then \textbf{Hypothesis}$(j+1)$ fails to hold.
\end{claim}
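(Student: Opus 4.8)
The plan is to use the badness of the witness $\Lambda$ to pull out a large, low-dimensional subset $\Gamma \subseteq \Lambda$, and then to check that $\Gamma$ together with a projected copy of $X$ is itself a configuration that violates \textbf{Hypothesis}$(j+1)$. Concretely, since $\Lambda \subseteq \F^{d_j}$ is bad there is some $p \le d_j(1-\zeta)$ with $\sigma_p(\Lambda) > (1+1/q)/q^p$, so I would invoke Lemma~\ref{lem:sigma} --- in fact only its small-$p$ case, as recorded in Remark~\ref{rem:smallp}, with $T = |\Lambda|/(2 d_j q^2)$ --- to obtain $\Gamma \subseteq \Lambda$ with $\dim(\spn(\Gamma)) \le d_j(1-\zeta) = d_{j+1}$ and $|\Gamma| \ge |\Lambda|/(2 d_j q^2)$. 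Since $(X,\Lambda)$ is $(L_j, d_j, \ell)$-all-bad we have $|\Lambda| \ge L_j$, and using $d_j \le d_0$ this gives $|\Gamma| \ge L_j/(2 q^2 d_0) = L_{j+1}$.

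Next I would observe that $\Gamma$ inherits the obstruction: since $\Gamma \subseteq \Lambda$, for every $x \in X$ we have $\inset{\ip{x}{v} \suchthat v \in \Gamma} \subseteq \inset{\ip{x}{v} \suchthat v \in \Lambda}$, so $|\inset{\ip{x}{v} \suchthat v \in \Gamma}| \le \ell$. Now apply Lemma~\ref{lem:projection} to the pair $(X,\Gamma)$, with $d' := \dim(\spn(\Gamma)) \le d_{j+1}$: this yields a full-rank set $X' \subseteq \F^{d'}$ of size $n$ and a set $\Gamma' \subseteq \F^{d'}$ with $|\Gamma'| = |\Gamma|$, such that every inner product $\ip{x}{v}$ (hence every symbol-set size) is preserved, and such that $\cols(X') \subseteq \spn(\cols(X)) \subseteq \cC$, the last inclusion using that $\cC$ is linear and $\cols(X) \subseteq \cC$. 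Thus $(X', \Gamma')$ satisfies conditions (a)--(d) of Definition~\ref{def:allbad} with parameters $(L_{j+1}, d_{j+1}, \ell)$ and has $\cols(X') \subseteq \cC$, so it witnesses the failure of \textbf{Hypothesis}$(j+1)$. (If $d' < d_{j+1}$, one can first greedily add vectors of $\Lambda$ to $\Gamma$ until its span has dimension $\min \inset{d_{j+1}, \dim(\spn(\Lambda))}$ --- this only increases $|\Gamma|$ --- or simply interpret \textbf{Hypothesis}$(j)$ as excluding $(L_j, d, \ell)$-all-bad sets for all $d \le d_j$, which costs only a harmless union bound.)

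The point to emphasize is that there is essentially no obstacle here: all of the real content is already packaged in Lemma~\ref{lem:sigma} (and, upstream of it, the linear-algebra and entropy bookkeeping of Section~\ref{ssec:sigma}). The only steps requiring a little care are the two quantitative matchings --- that $|\Lambda|/(2 d_j q^2) \ge L_{j+1} = L_j/(2 q^2 d_0)$, which is where $d_j \le d_0$ enters, and that $d_j(1-\zeta) = d_{j+1}$ --- together with the routine verification, via Lemma~\ref{lem:projection}, that the projection preserves full-rankness and the symbol multiplicities and keeps the columns inside the linear code $\cC$. The fact that the transfer of the obstruction from $\Lambda$ to $\Gamma$ (the step in the second paragraph) is completely trivial in the zero-error setting is exactly the reason this warm-up is presented first; in the average-radius argument of Section~\ref{sec:hard} the analogous transfer of ``problematic-ness'' is the genuinely delicate part.
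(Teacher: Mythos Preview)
Your proposal is correct and follows essentially the same route as the paper's proof: invoke Lemma~\ref{lem:sigma} via Remark~\ref{rem:smallp} to extract $\Gamma \subseteq \Lambda$ of size at least $|\Lambda|/(2d_jq^2) \ge L_j/(2d_0q^2) = L_{j+1}$ and dimension at most $d_{j+1}$, observe that passing to a subset can only shrink the symbol sets $\inset{\ip{x}{v}:v\in\Gamma}$, and then project via Lemma~\ref{lem:projection}. Your write-up is in fact slightly more careful than the paper's, making explicit both the $d_j \le d_0$ step and the inclusion $\cols(X') \subseteq \spn(\cols(X)) \subseteq \cC$.
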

\begin{proof}
	If $\Lambda \subset \F^{d_j}$ is bad, then by Lemma~\ref{lem:sigma} (with Remark~\ref{rem:smallp}) there is some set $\Gamma \subseteq \Lambda$ of size at least $L_j/(\nmrk{2q^2d_0}) = L_{j+1}$ with dimension at most $d_j(1-\zeta)=d_{j+1}$.
Further, $\Gamma$ is still a witness for $X$'s badness, since for all $x \in X$,
\[ |\inset{ \ip{x}{v} \suchthat v \in \Gamma} | \leq | \inset{ \ip{x}{v} \suchthat v \in \Lambda } | \leq \ell. \]
By Lemma~\ref{lem:projection}, there is some $X'$ of size $n$ that is $(L_{j+1}, d_{j+1}, \ell)$-all-bad.  Thus, \textbf{Hypothesis}$(j+1)$ does not hold.
\end{proof}
This completes the proof of Lemma~\ref{lem:induct}, and hence Theorem~\ref{thm:easy}.

\section{Average-radius list-recovery of random linear codes}\label{sec:hard}
Now, we adapt our argument from Section~\ref{sec:easy} to establish \em average-radius \em list-recoverability.
Our main theorem is as follows.

\begin{theorem}\label{thm:avgrad}
There are constants $C, C'$ so that the following holds.
Choose $\ell \in \mathbb{N}$, $\ell > 0$.
\nmrk{Let $\barmu\in\inset{0,\centermu}$ and let $\beta = \arm{(q+1)}^{\zeta/(2(1-\zeta))}$ if $\barmu=0$ and $\beta=2$ otherwise.}
Choose \nmrk{$\eps > \beta\cdot\centermu+\barmu$}, and $\eta, \zeta, \xi \in (0,1)$ so that 
\begin{equation}
\label{eq:eps-condn-gen}
 (1 - \zeta)(\eps - \eta) > \nmrk{\beta\cdot\centermu+\barmu}, \qquad \zeta \leq 1/20. 
\end{equation}
Suppose that
\[ q \geq  \max \inset{ 2, \ell^{2/\zeta}} \]
and
\[ R \leq \min \inset{ \inparen{\eps - \nmrk{\beta\cdot\centermu-\barmu}}(1 - 5\zeta) - \eta, \arm{1 - H_{q/\ell}(1 - \eps + \eta) - \log_q(\ell)} - \xi}. \]
Let $\cC$ be a random linear code over $\F = \F_q$ of rate $R$.
Then for sufficiently large $n$, with probability at least $1 - \exp(-C \zeta R n) - \exp(- C \xi n)$,  $\cC$ is $(\mrk{\eps\ml}, L)$-average-radius list-\mrk{recoverable},
for
\[ 
L \geq \inparen{ \frac{ 1 - \eps + \eta }{\eta} } \cdot \exp_{\nmrk{q\cdot \frac{\ell}{\xi}}}\inparen{ \frac{ C' \log(\ell \zeta/\xi ) }{\zeta } } \cdot \inparen{\frac{1}{ \eps - \eta } }^{C' \log^2(\ell \zeta/\xi )/\zeta^3 }.
\]
\end{theorem}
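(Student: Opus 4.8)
The plan is to carry over the argument of Section~\ref{sec:easy} essentially verbatim, replacing the zero-error notion of ``all-bad'' by an averaged one and the Markov estimate there by a Chernoff/moment estimate.

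\emph{Reformulation and reduction to small dimension.} First I would establish the analogue of Proposition~\ref{prop:allbadlr} (this is Proposition~\ref{prop:plrs}): unwinding Definition~\ref{def:avgradlr} and choosing each target list $S_i$ to be the $\ell$ most popular symbols in coordinate $i$ shows that $\cC$ fails to be $(\eps,\ell,L)$-average-radius list-recoverable exactly when there is a $d\le\dim(\cC)$ and a pair $(X,\Lambda)$ with $\cols(X)\subseteq\cC$, $X$ full-rank of size $n$, $|\Lambda|\ge L$, $\dim(\spn(\Lambda))\le d$, and $\sum_{x\in X}\pl_x^{(\ell)}(\Lambda)>\eps n$; call such a pair \emph{average-bad}. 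Then, mirroring Lemma~\ref{lem:base2}, I would use the ``$1-H_{q/\ell}(1-\eps+\eta)-\log_q(\ell)-\xi$'' clause of the rate bound to rule out average-bad witnesses of dimension larger than some $d_0=O(\ell/\xi)$: for a fixed high-dimensional $\Lambda'$ one first fixes the target lists $S_1,\dots,S_n$ (there are $\binom q\ell^n$ of them, contributing $H_{q/\ell}$ in the exponent) and then counts the full-rank $X'$ whose average agreement with the $S_i$ exceeds $\eps-\eta$, using that $\pl_{x'}^{(\ell)}(\Lambda')$ concentrates around $\ell/q$ when $\dim(\Lambda')$ is large (a moment bound); the $\eta$-slack is exactly what lets $H_{q/\ell}(1-\eps+\eta)$ stand in for the cruder $\log_q\ell$. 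Combining with Lemma~\ref{lem:projection} gives: with probability $1-\exp(-\Omega(\xi n))$, any average-bad pair can be taken to have a witness of dimension at most $d_0$.

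\emph{The recursion and the good/bad dichotomy.} Set $d_j=d_0(1-\zeta)^j$, $L_j=L_0/(2q^2d_0)^j$ for $j=0,\dots,j_{\max}$ with $j_{\max}=\log(d_0)/\log(1/(1-\zeta))$ (so $d_{j_{\max}}=1$), and maintain the inductive hypothesis \textbf{Hyp}$(j)$: there is no average-bad $(X,\Lambda)$ with $\cols(X)\subseteq\cC$, $|\Lambda|\ge L_j$ and $\dim(\spn(\Lambda))\le d_j$. The base case \textbf{Hyp}$(j_{\max})$ follows from the lower bound on $L$: a one-dimensional $\Lambda$ gives at most $q$ distinct inner products with any nonzero $x$, so $\pl_x^{(\ell)}(\Lambda)=\ell/|\Lambda|<\eps$ once $L_{j_{\max}}>\ell/\eps$, whence $\sum_x\pl_x^{(\ell)}(\Lambda)<\eps n$. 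For the inductive step (the analogue of Lemma~\ref{lem:induct}) I would call $\Lambda$ \textbf{good} if $\sigma_p(\Lambda)$ is small for every $p$ --- precisely, if it fails the hypothesis of Lemma~\ref{lem:sigma} --- and \textbf{bad} otherwise. For good $\Lambda\subseteq\F^{d_j}$ I would bound the number of full-rank $X$ making $(X,\Lambda)$ average-bad by $q^{O(\zeta d_j n)}$, just as Claim~\ref{claim:goodwitness} does: $\sum_{x\in X}\pl_x^{(\ell)}(\Lambda)=\sum_{i=1}^n Z_i$ with $Z_i=\pl_{x_i}^{(\ell)}(\Lambda)$ independent for random rows $x_i$, and goodness of $\Lambda$ gives, via the moment bound Claim~\ref{claim:smallp}, that $\EE_x(\pl_x^{(\ell)}(\Lambda)-\barmu)^p$ is small for all $p\le d_j(1-\zeta)$ (with $\barmu\in\{0,\ell/q\}$); feeding these into a Chernoff bound obtained by applying Markov to the $\lfloor d_j(1-\zeta)\rfloor$-th moment of $\sum_iZ_i$ shows $\mathbb{P}_X\inset{\sum_iZ_i>\eps n}\le q^{-\Omega(d_jn(\eps-\beta\ell/q-\barmu))}$, the constant $\beta=(q{+}1)^{\zeta/(2(1-\zeta))}$ or $2$ being the price of the tail bound over the mean $\ell/q$. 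Multiplying by $q^{d_jn}$ (choices of $X$), by $q^{-(1-R)d_jn}$ (the chance a fixed $\cols(X)$ lands in $\cC$), and by the number $\le\binom{q^{d_j}}{L_j}$ of good witnesses, the first rate clause $R\le(\eps-\beta\ell/q-\barmu)(1-5\zeta)-\eta$ makes the total $\le q^{-\Omega(\zeta d_j n)}$. So, with high probability, a violation of \textbf{Hyp}$(j)$ is witnessed by a \textbf{bad} $\Lambda$, and for a bad $\Lambda$ Lemma~\ref{lem:sigma} produces $\Gamma\subseteq\Lambda$ with $|\Gamma|\ge L_j/(2q^2d_0)=L_{j+1}$ and $\dim(\spn(\Gamma))\le d_j(1-\zeta)=d_{j+1}$.

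\emph{The obstacle: transitivity of average-badness.} The step I expect to be the crux is showing that passing from $\Lambda$ to $\Gamma$ (and then projecting via Lemma~\ref{lem:projection}) preserves average-badness. In the zero-error case this was trivial, since $|\inset{\ip{x}{v}:v\in\Gamma}|\le|\inset{\ip{x}{v}:v\in\Lambda}|\le\ell$ (Claim~\ref{claim:easyversion}); here $\pl_x^{(\ell)}(\Gamma)$ can be much smaller than $\pl_x^{(\ell)}(\Lambda)$, because the agreement of $\Lambda$ with its best symbols need not be spread evenly over $\Gamma$. The fix is to show that the ``excess agreement'' $\sum_{x\in X}(\pl_x^{(\ell)}(\Gamma)-\barmu)$ is still a $(1-\Theta(\zeta))$-fraction of $\sum_{x\in X}(\pl_x^{(\ell)}(\Lambda)-\barmu)$ --- this is where the $(1-5\zeta)$-factor, the constant $\beta$, and the split $\barmu\in\{0,\ell/q\}$ are spent: the case $\barmu=\ell/q$ can afford the crude universal floor $\pl_x^{(\ell)}\ge\ell/q$ (agreement bounded away from $\ell/q$), while the case $\barmu=0$ needs a sharper per-step accounting valid when $\eps$ is close to $\ell/q$. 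Granting this, $(X'',\Gamma')$ is average-bad at level $j+1$, contradicting \textbf{Hyp}$(j+1)$; chaining the inductive step from $j_{\max}$ down to $0$ and combining with the reduction to $d\le d_0$, exactly as at the end of Section~\ref{sec:easy}, yields the theorem with failure probability $\exp(-\Omega(\zeta Rn))+\exp(-\Omega(\xi n))$. Besides this transitivity, the only other genuinely technical piece is proving Claim~\ref{claim:smallp} and turning it into the Chernoff bound above with the right constants; the remaining structure is parallel to the warm-up.
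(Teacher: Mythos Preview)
Your outline correctly identifies the overall architecture and, in particular, correctly flags transitivity of average-badness as the crux. However, your proposed fix for that step is the wrong direction, and this is a genuine gap.

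You propose to show that the large low-dimensional subset $\Gamma\subseteq\Lambda$ produced by Lemma~\ref{lem:sigma} inherits a $(1-\Theta(\zeta))$-fraction of the excess agreement of $\Lambda$. This is simply false in general: nothing prevents the elements of $\Gamma$ from being precisely the ones that agree \emph{least} with the target lists, so $\sum_x\pl_x^{(\ell)}(\Gamma)$ can be much smaller than $\sum_x\pl_x^{(\ell)}(\Lambda)$ (you even note this yourself). The paper's fix is the opposite maneuver: rather than showing $\Gamma$ is bad, it \emph{assumes} \textbf{Hypothesis}$(j{+}1)$ holds, which forces $\sum_x\pl_x^{(\ell)}(\Gamma)\le(1-\gamma)\eps_j n$ (Claim~\ref{claim:stillbad}), and then deduces that $\tilde\Lambda\setminus\Gamma$ is \emph{more} bad than $\tilde\Lambda$, by a factor $1+\gamma\alpha$ (Claim~\ref{claim:stillstillbad}). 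Iterating this peeling step for $s_{\max}=\log(\alpha L/L')/\alpha$ rounds drives the badness above $n$, a contradiction, unless at some point the residual $\tilde\Lambda$ becomes \emph{good}, in which case Lemma~\ref{lem:goodlambda} handles it. This inner induction is the content of Lemma~\ref{lem:badlambdas}, and it is what your proposal is missing.

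Two structural consequences of this you have not accounted for: first, the threshold $\eps_j$ must \emph{decrease} along the outer induction, $\eps_j=(1-\gamma)^j\eps_0$ with $\gamma=\zeta/j_{\max}$, so that \textbf{Hypothesis}$(j{+}1)$ gives a strictly smaller bound on $\sum_x\pl_x^{(\ell)}(\Gamma)$ than the current badness $\eps_j n$; your setup keeps $\eps$ fixed. Second, the list-size recursion must absorb the inner induction, so $L_{j+1}=\alpha(\eps_0(1-\zeta))^{2/\gamma}L_j$ rather than just $L_j/(2q^2d_0)$; this is the source of the $(\eps-\eta)^{-C'\log^2(\ell\zeta/\xi)/\zeta^3}$ factor in the final list size, which your parameter choices cannot produce. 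A smaller point: the reduction to $d\le d_0$ (Lemma~\ref{lem:start}) does not use a moment bound on $\pl_x^{(\ell)}$ as you suggest; it first proves the non-average statement by the standard linear-independence count (Claim~\ref{claim:vanilla}) and then splits $\Lambda=\Lambda_+\cup\Lambda_-$ by distance to extract a witness $\Lambda_+$ of size $\ge\eta L/(1-\eps+\eta)$ at agreement $\eps-\eta$ (Claim~\ref{claim:extendtoavg}); this is where the $(1-\eps+\eta)/\eta$ factor and the $\eta$ in $H_{q/\ell}(1-\eps+\eta)$ actually come from.
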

\vspace{.3cm}

Theorem~\ref{thm:avgrad} is a bit complicated to look at.  
In Section~\ref{sec:consequences} below, we will simplify it and we will show how to derive the results stated in Section~\ref{sec:results} from Theorem~\ref{thm:avgrad}.  Then we will continue with the proof of the theorem in Section~\ref{sec:avgproof}.

\subsection{Consequences of Theorem~\ref{thm:avgrad}}\label{sec:consequences}
\label{ssec:cors}

In this section, we show how a few applications of Theorem~\ref{thm:avgrad}, which will imply the results advertised in Section~\ref{sec:results}. 
For the reader's convenience, we will also re-state the corollaries here.

We begin with a simplification of Theorem~\ref{thm:avgrad} that we will use for all of our corollaries.  We stated Theorem~\ref{thm:avgrad} as we did because it fits better with the proof, but it is more useful in the following form: 
\begin{corollary}\label{cor:avgrad}
There are constants $C, C'$ so that the following holds.
Choose $\ell \in \mathbb{Z}$, $\ell > 0$.
Choose $0 < \zeta, \eta, \xi < 1/20$, and choose $\eps$ so that
\begin{equation}
\label{eq:eps-condn-cor}
\nmrk{
 \eps > \eta+ \frac{\ell(1+2\zeta)}{q} \inparen{ 1 + \min\inset{2,\frac{ \zeta \log(\arm{q+1}) }{1 - \zeta} }}.
}
\end{equation}
Suppose that
$ q \geq  \max \inset{ 2, \ell^{2/\zeta}}$.
Let
\[ R_0 := \inparen{ \eps - \frac{\ell}{q} \inparen{1 + \min\inset{2,\frac{ \zeta \log(\arm{q+1}) }{1 - \zeta} }}}(1 - 5\zeta). \]
Suppose that
\[ R \leq \min \inset{ R_0 - \eta,  \arm{1 - H_{q/\ell}(1 - \eps + \eta)- \log_q(\ell)} - \xi}. \]
Let $\cC$ be a random linear code over $\F = \F_q$ of rate $R$.
Then for sufficiently large $n$, with probability at least $1 - \exp(-C \eps n) - \exp(- C \xi n)$,  $\cC$ is $(\mrk{\eps\ml}, L)$-average-radius list-\mrk{recoverable},
for some
\[ 
L \leq \inparen{ \frac{ 1 - \eps + \eta }{\eta} } \cdot \exp_{\nmrk{q\cdot \frac{\ell}{\xi}}}\inparen{ \frac{ C' \log(\ell \zeta/\xi ) }{\zeta } } \cdot \inparen{\frac{1}{ \eps - \eta } }^{C' \log^2(\ell \zeta/\xi )/\zeta^3 }.
\]
Further, if $q\ge \frac{2\ell}{(1-\zeta)\eps}\cdot \log\inparen{\frac{2\ell}{(1-\zeta)\eps}}$, then $R_0$ may be taken to be
\[ R_0 = \inparen{\eps - \centermu}(1 - 6\zeta). \]
\end{corollary}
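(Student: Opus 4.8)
The plan is to derive Corollary~\ref{cor:avgrad} directly from Theorem~\ref{thm:avgrad} by letting the auxiliary parameters $\barmu$ and $\beta$ of the theorem depend on $q$. Recall that the theorem permits exactly two settings: $\barmu=0$ with $\beta=(q+1)^{\zeta/(2(1-\zeta))}$, which is favorable when $q$ is small so that the multiplicative loss $\beta$ is close to $1$; and $\barmu=\centermu$ with $\beta=2$, which is favorable when $q$ is large so that the extra additive slack of $2\ell/q$ is negligible. Writing $t:=\frac{\zeta\log(q+1)}{1-\zeta}$, the quantity $\min\{2,t\}$ appearing in \eqref{eq:eps-condn-cor} is exactly the ``winner'' of these two options, and I would take $\barmu=0$ when $t<2$ and $\barmu=\centermu,\ \beta=2$ when $t\ge2$.

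First I would check that the (strictly stronger) hypotheses of Corollary~\ref{cor:avgrad} imply those of Theorem~\ref{thm:avgrad} under this case split. The one nontrivial estimate is the elementary bound $e^{t/2}\le1+t$, valid for $t\in[0,2]$; it gives $\beta\centermu+\barmu\le\centermu\inparen{1+\min\{2,t\}}$ in both cases, with equality when $t\ge2$ since then $\beta\centermu+\barmu=3\ell/q$. Combined with $(1-\zeta)(1+2\zeta)\ge1$ (valid for $\zeta\le1/2$, hence for $\zeta<1/20$), the spare factor $1+2\zeta$ built into \eqref{eq:eps-condn-cor} then yields both $\eps>\beta\centermu+\barmu$ and $(1-\zeta)(\eps-\eta)>\beta\centermu+\barmu$, i.e.\ the conditions in \eqref{eq:eps-condn-gen}; the remaining hypotheses ($\zeta\le1/20$ and $q\ge\max\{2,\ell^{2/\zeta}\}$) are assumed outright. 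For the rate, the first branch of the bound in Theorem~\ref{thm:avgrad} equals $(\eps-\beta\centermu)(1-5\zeta)-\eta$ when $\barmu=0$ and $(\eps-\centermu)(1-5\zeta)-\eta$ when $\barmu=\centermu$; using $\beta\le1+t$ I would check in each case that this is at least $R_0-\eta$ for the $R_0$ defined in the corollary, so $R\le R_0-\eta$ suffices. The second branch of the rate bound and the list size $L$ are literally the same expression in the two statements, and the high-probability guarantee carries over, so the main part of the corollary follows.

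For the improved value $R_0=(\eps-\centermu)(1-6\zeta)$ under the additional hypothesis $q\ge\frac{2\ell}{(1-\zeta)\eps}\log\inparen{\frac{2\ell}{(1-\zeta)\eps}}$, I would again split on the size of $t$. If $t\ge2$ I use $\barmu=\centermu,\ \beta=2$ (legitimate since in that case \eqref{eq:eps-condn-cor} reads $\eps>\eta+3(1+2\zeta)\ell/q$), and Theorem~\ref{thm:avgrad} already provides rate $(\eps-\centermu)(1-5\zeta)-\eta\ge(\eps-\centermu)(1-6\zeta)-\eta$, with no use of the hypothesis on $q$. If $t<2$ I use $\barmu=0$ and must verify $(\eps-\beta\centermu)(1-5\zeta)\ge(\eps-\centermu)(1-6\zeta)$, which rearranges to $\eps\zeta\ge\centermu\big((\beta-1)+\zeta(6-5\beta)\big)$. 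This is where the hypothesis on $q$ enters: with $A:=\frac{2\ell}{(1-\zeta)\eps}$, the bound $q\ge A\log A$ forces $A\le q$, hence $\log A\le\log q$, and then $q\ge A\log A$ upgrades to $A\le Cq/\log q$ for an absolute constant $C$, which is the statement $\eps\ge c\cdot\frac{\ell\log(q+1)}{q}$. Since $\beta-1=e^{t/2}-1\le t=\frac{\zeta\log(q+1)}{1-\zeta}$, the term $(\beta-1)\centermu$ is then small enough relative to $\eps\zeta$ to close the displayed inequality, once the constant in the hypothesis on $q$ is chosen so as to fit inside the gap between the factors $(1-5\zeta)$ and $(1-6\zeta)$.

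I expect the only genuine difficulty to be this last step: threading the constants through the self-referential estimate ``$q\ge A\log A$ implies $A=O(q/\log q)$'' and then through the two-sided comparison of $\beta-1$ against $\zeta$ and of $\centermu$ against $\eps$, so that the whole chain fits within the $(1-5\zeta)$-versus-$(1-6\zeta)$ slack. Everything else is bookkeeping: confirming that the corollary's hypotheses dominate the theorem's under the stated case split, and noting that the conclusion (agreement parameter $\eps$, list size $L$, high-probability guarantee) transfers with at most a change in the absolute constants.
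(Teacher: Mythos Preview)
Your overall strategy matches the paper's: pick whichever of the two admissible $(\barmu,\beta)$ pairs in Theorem~\ref{thm:avgrad} is better, observe that the elementary inequality $e^{t/2}\le 1+t$ for $t\in[0,2]$ (equivalently, $\min(3,e^x)\le 1+\min(2x,2)$) converts $\beta\centermu+\barmu$ into the expression $\centermu(1+\min\{2,t\})$, and check that the slack $(1+2\zeta)$ in \eqref{eq:eps-condn-cor} absorbs the factor $(1-\zeta)$ from \eqref{eq:eps-condn-gen}. The main body of your argument is correct.

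There is a computational slip in your treatment of the ``Further'' clause. You write that when $\barmu=\centermu$, $\beta=2$, the first branch of the rate in Theorem~\ref{thm:avgrad} equals $(\eps-\centermu)(1-5\zeta)-\eta$. It does not: the theorem's expression is $\eps-\beta\centermu-\barmu$, which with $\beta=2$ and $\barmu=\centermu$ gives $\eps-3\centermu$, not $\eps-\centermu$. Consequently your claim that the case $t\ge 2$ yields $R_0=(\eps-\centermu)(1-6\zeta)$ ``with no use of the hypothesis on $q$'' is false; you would still need to show $(\eps-3\centermu)(1-5\zeta)\ge(\eps-\centermu)(1-6\zeta)$, which rearranges to $\eps\zeta\ge\centermu(2-9\zeta)$ and is \emph{not} automatic. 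The cleanest fix is the paper's: do not case-split at all for the ``Further'' part. Since $\min\{2,t\}\le t$, the first $R_0$ always satisfies $R_0\ge(\eps-\centermu(1+t))(1-5\zeta)$, and one checks directly that the lower bound on $q$ forces $(\eps-\centermu(1+t))(1-5\zeta)\ge(\eps-\centermu)(1-6\zeta)$; this is exactly the self-referential estimate you already set up for the $t<2$ branch. Alternatively, in your $t\ge2$ branch note that $t\ge2$ gives $1/\zeta\le\log(q+1)/(2(1-\zeta))$, so the needed inequality $\eps\ge\frac{\ell(2-9\zeta)}{q\zeta}$ again reduces to $\eps\gtrsim\frac{\ell\log q}{q}$, which is what the hypothesis on $q$ supplies.
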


\begin{proof}
We first note that since $\frac{1}{1-\zeta}\le 1+2\zeta$, the condition on $\eps$ in~\eqref{eq:eps-condn-cor} implies the condition in~\eqref{eq:eps-condn-gen}.
Next, we show that the bound of
\[R \le \inparen{\eps - \nmrk{\beta\cdot\centermu-\barmu}}(1 - 5\zeta)\]
in Theorem~\ref{thm:avgrad}
can be replaced by
\begin{equation}
\label{eq:simpler-R-bound-gen}
 R \le \inparen{\eps - \centermu\cdot\inparen{1+\min\inset{2,\frac{\zeta\log{\arm{(q+1)}}}{1-\zeta}}}}(1 - 5\zeta).
\end{equation}
Further, if $q\ge \frac{\ell}{(1-\zeta)\eps}\cdot \log\inparen{\frac{\ell}{(1-\zeta)\eps}}$, then the bound on $R$ can be simplified to
\begin{equation}
\label{eq:simpler-R-bound-large-q}
R \le \inparen{\eps - \centermu}(1 - 6\zeta).
\end{equation}

To see these claims, by the definition of $\barmu$ and $\beta$ we have that
\[\beta\cdot \centermu + \barmu \le \min\inset{3,\exp\inparen{\frac{\zeta\log{\arm{(q+1)}}}{2(1-\zeta)}}}\cdot\centermu.\]
Then the claimed bound in~\eqref{eq:simpler-R-bound-gen} follows from the fact that for all $x>0$ we have $\min(3,e^x) \le 1+\min(2x,2)$. Further, note that~\eqref{eq:simpler-R-bound-gen} implies~\eqref{eq:simpler-R-bound-large-q} if
\[\inparen{\eps - \centermu\cdot\inparen{1+\frac{\zeta\log{\arm{(q+1)}}}{1-\zeta}}}(1 - 5\zeta) \ge \inparen{\eps - \centermu}(1 - 6\zeta),\]
which is implied by
\[\eps\zeta \ge \centermu\cdot\inparen{\frac{\zeta\log{\arm{(q+1)}}}{1-\zeta}}(1 - 5\zeta) +\frac{\zeta\ell}{q},\]
which in turn is implied by the lower bound on $q$.
\end{proof}

Corollary~\ref{cor:avgrad} is perhaps still a bit hard to parse.  Below, we instantiate it in two different ways to prove the corollaries advertised in Section~\ref{sec:results}.  To give additional intuition about the rate expression that appears in Corollary~\ref{cor:avgrad}, in Figure~\ref{fig:rates}, we plot the rate $R$ for several values of $q$ and for $\ell = 1$.    
\begin{figure}[h]
	\begin{center}
		\includegraphics[width=4in]{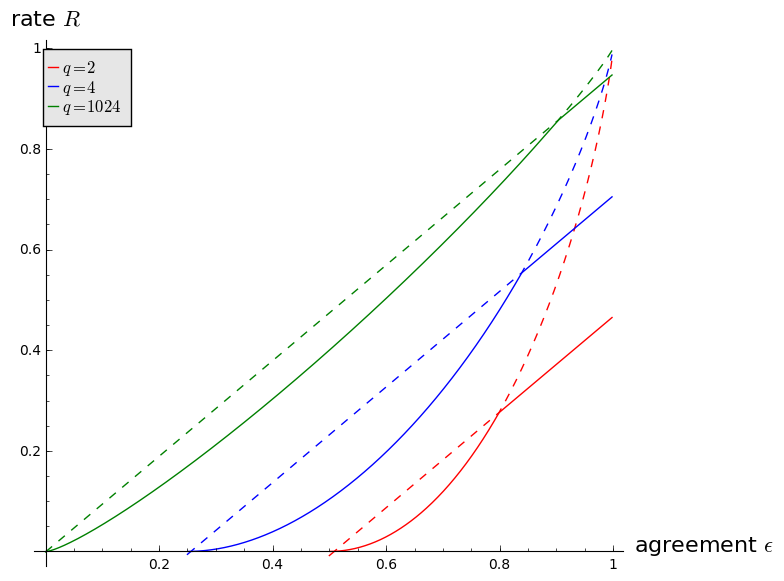}
	\end{center}
	\caption{The rate/agreement trade-off guaranteed in Corollary~\ref{cor:avgrad}.  We consider $q = 2,4,1024$, with $\eta, \xi = 0$, and $\zeta = 0.01, \ell = 1$, and plot both terms in the rate in Corollary~\ref{cor:avgrad}; the minimum is shown as a solid line.  When the curved line is below the straight line, the rate guarantee of Corollary~\ref{cor:avgrad} can approach the optimal rate of $1 - H_q(\eps)$.}
	\label{fig:rates}
\end{figure}

From Corollary~\ref{cor:avgrad}, we can now establish the results advertised in Section~\ref{sec:results}.  For the reader's convenience, we restate these corollaries here.  The following expressions on the entropy $H_q(x)$ will be useful in our derivations below.
First, we will use the fact that the Taylor expansion of $H_q(1 - 1/q - x)$ about $x=0$ is
\begin{equation}\label{eq:entropyexp}
H_q(1 - 1/q - x) = 1 + \sum_{j=1}^\infty \frac{(-1)^j}{j(j+1)}\inparen{ 1 - \inparen{ \frac{-1}{q-1} }^j }  \inparen{ \frac{ q^j }{\ln(q)}}  x^{j+1}.
\end{equation}
We will also use that the Taylor expansion of $H_q(y)$ about $q = \infty$ is
\begin{equation}\label{eq:entropyinf}
H_q(y) = y\log_q(q-1) + \frac{H_2(y)}{\log_2(q)} = y + \frac{H_2(y)}{\log_2(q)} - \frac{ y}{\ln(q)} \sum_{j=1}^\infty \frac{ q^{-j}}{j}. 
\end{equation}

We first address list-recovery and list-decoding when $q$ is large, and the agreement fraction $\eps$ is small, close to $\ell/q$.

\begin{corollary}[List-decoding and list-recovery over large alphabets (Corollary~\ref{cor:largeq} restated)]
Let $\ell\ge 1$ be an integer.
For every sufficiently small constant $\gamma > 0$, there are constants $C, C'$ (which depend on $\gamma$) so that the following holds.
Choose $\delta > 0$ sufficiently small.
Let $q \geq \max \inset{C(\ell/\delta)^2, \ell^{C/\arm{\delta}}}$.
Suppose that
\[ R \leq \inparen{1 - H_{q/\ell}\inparen{1 - \frac{\ell}{q} - \delta} - \log_q(\ell)}(1 - \gamma). \]
Then a random linear code of rate $R$ over $\F_q$ is $(\eps, \ell, L)$-average-radius list-recoverable with high probability, for
\[ \eps = \frac{\ell}{q} + \delta \]
and 
\[ L  \leq q^{ C' \log^2(\ell/\delta) } . \]
\end{corollary}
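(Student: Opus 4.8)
The plan is to read the statement off Corollary~\ref{cor:avgrad} for a careful choice of the free parameters. Fix the target agreement $\eps = \ell/q + \delta$ and set $\zeta = c_1\gamma$, $\eta = c_2\gamma\delta$ and $\xi = c_3\gamma\delta$ for small absolute constants $c_1,c_2,c_3$ (e.g.\ $\zeta = \gamma/15$, $\eta=\xi=\gamma\delta/10$); since $\gamma$ and $\delta$ are small, all three lie in $(0,1/20)$. First I would verify the hypotheses of Corollary~\ref{cor:avgrad}: the bound $q\ge \ell^{2/\zeta}$ follows from $q\ge\ell^{C/\delta}$ once $C\ge 2/(c_1\gamma)$ and $\delta$ is small enough that $C/\delta\ge 2/\zeta$ (and $C$ large also yields $q\ge 2$); the agreement condition~\eqref{eq:eps-condn-cor} holds because its right-hand side equals $\eta + O(\ell/q) = O(\gamma\delta) + O(\delta^2/\ell) < \delta \le \eps$; and the side condition $q \ge \frac{2\ell}{(1-\zeta)\eps}\log\!\inparen{\frac{2\ell}{(1-\zeta)\eps}}$ that licenses the clean form $R_0 = (\eps-\ell/q)(1-6\zeta) = \delta(1-6\zeta)$ follows from $q \ge C(\ell/\delta)^2$, since the left expression there is $O\!\inparen{(\ell/\delta)\log(\ell/\delta)}$.

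Next I would estimate $R^* := 1 - H_{q/\ell}(1-\ell/q-\delta) - \log_q(\ell)$, the optimal rate appearing in the statement. Using the expansion~\eqref{eq:entropyinf} with $q$ replaced by $q/\ell$ and $y = 1-\ell/q-\delta$, together with $q/\ell\ge C/\delta^2$ (so $\ln(q/\ell)\ge 2\ln(1/\delta)$) and $\log_q(\ell)\le\delta/C$ (from $q\ge\ell^{C/\delta}$), I would prove the two-sided bound $\tfrac14\delta \le R^* \le \delta(1+O(\delta))$. The point is that the alphabet is so large that we are past the quadratic regime of~\eqref{eq:entropyexp}: the entropy deficit $1-H_{q/\ell}(1-\ell/q-\delta)$ is, up to lower-order terms, linear in $\delta$, and in particular never exceeds $\delta(1+O(\delta))$.

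With this estimate I would check the two rate bounds of Corollary~\ref{cor:avgrad}. For the first, $(1-\gamma)R^* \le R_0-\eta = \delta(1-6\zeta)-\eta$ follows from $R^* \le \delta(1+O(\delta))$ once $6\zeta + \eta/\delta \le \gamma/2$ and $\delta$ is small in terms of $\gamma$. For the second, observe $1-\eps+\eta = 1-\ell/q-(\delta-\eta)$, and that the map $x\mapsto 1 - H_{q/\ell}(1-\ell/q-x)$ has derivative $H'_{q/\ell}(1-\ell/q-x) = \log_{q/\ell}\!\inparen{\frac{(q/\ell-1)(\ell/q+x)}{1-\ell/q-x}}$, which lies strictly between $0$ and $1$ for $x\in[\delta-\eta,\delta]$ (the argument of the logarithm is $>1$ but $<q/\ell$ there); hence that map is $1$-Lipschitz on the interval, so $1 - H_{q/\ell}(1-\eps+\eta) - \log_q(\ell) - \xi \ge R^* - \eta - \xi \ge (1-\gamma)R^*$ as soon as $\eta+\xi \le \gamma R^*$, which holds since $R^*\ge\delta/4$ and $\eta+\xi = O(\gamma\delta)$ with a small constant. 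Thus any $R \le (1-\gamma)R^*$ satisfies both bounds of Corollary~\ref{cor:avgrad}, which then gives that a random linear code of rate $R$ over $\F_q$ is $(\eps,\ell,L)$-average-radius list-recoverable (Definition~\ref{def:avgradlr}) with probability $1 - \exp(-C\eps n) - \exp(-C\xi n) = 1-o(1)$, for $L$ at most the displayed product.

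Finally I would simplify the list-size bound. From $q \ge C(\ell/\delta)^2$ one has $1/\delta \le \sqrt{q/C} \le \sqrt q$, $\log(\ell/\delta) \le \tfrac12\log q$, and (for $\delta$ small in terms of $\gamma$) $1/\gamma \le \sqrt q$; plugging $\zeta=\Theta(\gamma)$, $\eta=\xi=\Theta(\gamma\delta)$ into $\frac{1-\eps+\eta}{\eta}\cdot\exp_{q\ell/\xi}\!\inparen{\frac{C'\log(\ell\zeta/\xi)}{\zeta}}\cdot\inparen{\frac{1}{\eps-\eta}}^{C'\log^2(\ell\zeta/\xi)/\zeta^3}$ turns the first factor into at most $q$, the second into $q^{O_\gamma(\log(\ell/\delta))}$, and the third into $q^{O_\gamma(\log^2(\ell/\delta))}$, so that $L \le q^{C'\log^2(\ell/\delta)}$ for a constant $C'$ depending only on $\gamma$. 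The one genuinely delicate step is the entropy estimate $\tfrac14\delta\le R^*\le\delta(1+O(\delta))$ of the second paragraph: unlike the low-rate setting near $\eps=1/q$, here $\delta \gg \ln(q/\ell)/(q/\ell)$, so one must work with~\eqref{eq:entropyinf} rather than~\eqref{eq:entropyexp}, and it is precisely the upper bound $R^*\le\delta(1+O(\delta))$ that makes the first term $R_0-\eta\approx\delta$ of the minimum in Corollary~\ref{cor:avgrad} competitive with $(1-\gamma)R^*$, while the $1$-Lipschitz bound on $H_{q/\ell}$ controls the loss in the second term. Everything else is bookkeeping with the parameter choices.
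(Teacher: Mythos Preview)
Your proposal is correct and follows essentially the same approach as the paper: instantiate Corollary~\ref{cor:avgrad} with $\zeta=\Theta(\gamma)$ and $\eta,\xi=\Theta(\gamma\delta)$, verify \eqref{eq:eps-condn-cor} and the large-$q$ side condition, prove the two-sided estimate $R^*=\Theta(\delta)$ via the entropy expansion~\eqref{eq:entropyinf}, check that both rate terms in Corollary~\ref{cor:avgrad} dominate $(1-\gamma)R^*$, and then simplify the list-size expression using $1/\delta,\ell/\delta\le q^{O(1)}$. The only cosmetic difference is that you control the $\eta$-perturbation of the entropy term via an explicit $1$-Lipschitz bound on $H_{q/\ell}$, whereas the paper invokes a generic derivative bound $C'''$; both arguments are equivalent here.
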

\begin{proof}
We instantiate Corollary~\ref{cor:avgrad} with $\eta, \xi = \zeta \delta$ and with $\zeta = C''\gamma$ for a constant $C''<1$ to be chosen later.

We first verify the these parameters satisfy~\eqref{eq:eps-condn-cor} and claimed bound on $L$. Towards this, note that to satisfy~\eqref{eq:eps-condn-cor}, we need
\[ \delta > \eta +\centermu\inparen{2\zeta+ (1+2\zeta)\inparen{1+\frac{\zeta\log{\arm{(q+1)}}}{1-\zeta} }}.\]
By our choice of parameters, the above would follow as long as
\[\centermu\inparen{2\zeta+ (1+2\zeta)\inparen{1+\frac{\zeta\log{\arm{(q+1)}}}{1-\zeta} }} < (1-\zeta)\delta,\]
which in turn is true if $q\ge \Omega_{\gamma}(\ell/\delta\log(\ell/\delta))$, which in turn is implied by our lower bound of $q\ge C\cdot\inparen{\frac{\ell}{\delta}}^{2}$. The claimed bound on $L$ follows from the fact that with our choices of parameters, $\frac{\zeta\ell}{\xi}=\frac{\ell}{\delta}$. Further, $\frac{1}{\eps-\eta}\le q$ and $\frac{q\ell}{\delta}\le \poly(q)$.

Next, we consider the rate.  There are two terms in Corollary~\ref{cor:avgrad}.  For the first $R_0 - \eta$ term, we note that $q\ge C\cdot \inparen{\nmrk{\frac{\ell}{\delta}}}^{2}$, we have $q\ge \frac{2\ell}{(1-\zeta)\eps}\cdot \log\inparen{\frac{2\ell}{(1-\zeta)\eps}}$ and hence,
 in this parameter regime, $R_0 - \eta$ is bounded by
\begin{align*}
 \inparen{ \eps - \frac{\ell}{q}  }(1 - 6\zeta) - \eta
&= \delta(1 - O(\zeta)) - \eta  \\
&\geq \delta(1 - \gamma).
\end{align*}

We now study the second term in the rate expression in Corollary~\ref{cor:avgrad}, which is
\begin{align*}
\arm{1 - H_{q/\ell}(1 - \eps + \eta) - \log_q(\ell)} - \xi 
&\geq (1 - H_{q/\ell}(1 - \eps) - C'''\eta) - \log_q(\ell)  - \xi \\
&\geq \arm{1 - H_{q/\ell}(1 - \eps) - \log_q(\ell) } - (C''' + 1)C''\delta \gamma,
\end{align*}
where the constant $C'''$ is a bound on the derivative of $H_{q/\ell}(x)$ at $1 - \eps$: this is bounded because of the assumption that $\eps$ is sufficiently small.  The bound $C'''\eta \leq C'''C''\delta \gamma$ follows from the choice of $\xi, \eta = \zeta \delta$ and $\zeta = C''\gamma$.  

Next, we note that
\begin{equation}
\label{eq:1-Hq-lb-ub-orig}
 C''''\delta \leq 1 - H_{q/\ell}(1 - \eps) \leq \delta,
\end{equation}
where $C''''<1$ is some absolute constant.   
Indeed using~\eqref{eq:entropyinf}, we see that
\[\inparen{1-\centermu -\delta}\inparen{1-\frac{\ell}{(q-\ell)\ln(q/\ell)}} \le H_{q/\ell}\inparen{1-\centermu-\delta}-\frac{H_2\inparen{\centermu+\delta}}{\log_2(q/\ell)} \le \inparen{1-\centermu -\delta}.\]
The above implies
\begin{equation}
\label{eq:1-Hq-lb-ub}
\centermu +\delta \le 1-H_{q/\ell}\inparen{1-\centermu+\delta}+\frac{H_2\inparen{\centermu+\delta}}{\log_2(q/\ell)} \le \centermu +\delta +\frac{\ell}{(q-\ell)\ln(q/\ell)}.
\end{equation}
We first argue the upper bound in~\eqref{eq:1-Hq-lb-ub-orig}. Using the above along with the inequality $H_2(x)\ge x\log_2(1/x)$, we get
\begin{align*}
1-H_{q/\ell}\inparen{1-\centermu+\delta} &\le \centermu +\delta +\frac{\ell}{(q-\ell)\ln(q/\ell)} - \inparen{\centermu+\delta}\frac{\log_2\inparen{\frac{q}{\ell+\delta q}}}{\log_2(q/\ell)}\\
&\le \delta+2\cdot \centermu - \frac{\delta}{\log_2(q/\ell)}\\
&\le \delta,
\end{align*}
as desired. In the second inequality we use the fact that for large enough $q$, we have $(q-\ell)\ln(q/\ell)\ge q$ and $q\ge 2(\ell+\delta q)$ while the last inequality uses the fact that our lower bound on $q$ implies that $\frac{q/\ell}{\log_2(q/\ell)} >2/\delta$.

To argue the lower bound in~\eqref{eq:1-Hq-lb-ub-orig}, the lower bound in~\eqref{eq:1-Hq-lb-ub} along with the inequality $H_2(x) \le x(\log_2(1/x)+\log_2{e})$, implies that
\begin{align*}
1-H_{q/\ell}\inparen{1-\centermu+\delta} &\ge \inparen{ \centermu+\delta} \inparen{ 1 - \frac{\log_2\inparen{ \frac{q}{\ell+\delta q}}+\log_2{e}}{\log_2(q/\ell)}}\\
&\ge \delta \inparen{ 1 - \frac{\log_2\inparen{ \frac{q}{\ell(1+\sqrt{C q})}}+\log_2{e}}{\log_2(q/\ell)}}\\
&\ge C''''\delta,
\end{align*}
for some $C''''>0$. In the above, the second inequality follows from our assumption that $\delta\ge \ell\sqrt{\frac{C}{q}}$ and the final inequality follows from noting that for large enough $q$ the expression in the second parenthesis in the second line converges to  $1/2$.
This establishes \eqref{eq:1-Hq-lb-ub-orig}.

Thus, the second term in the minimum for the rate bound is bounded by
\begin{align*}
1 - H_{q/\ell}(1 - \eps - \eta) - \log_q(\ell) - \xi 
&\geq 1 - H_{q/\ell}(1 - \eps) - \log_q(\ell) - (C'''+1)C'' \delta \gamma \\
&\geq \inparen{ 1 - H_{q/\ell}(1 - \eps) - \log_q(\ell) }(1 - \gamma).
\end{align*}
Indeed, the last inequality requires that
\[ 1 - H_{q/\ell}(1 - \eps) - \log_q(\ell) > (C''' + 1)C'' \delta,\]
which follows by choosing $C''$ appropriately small in terms of $C''', C''''$, and using $\log_q(\ell) \leq \frac{\delta}{C}$ and choosing $C$ appropriately large in terms of $C''', C''''$.

Using again the fact that $1 - H_{q/\ell}(1 - \ell/q - \delta) \leq \delta$ (which in turn implies that $(1 - H_{q/\ell}(1 - \eps) - \log_q(\ell))(1 -\gamma)  < \delta(1-\gamma)$), we conclude that the second term in the minimum is the smaller one, and that the following upper bound on $R$
\[ R \leq (1 - H_{q/\ell}(1 - \eps) - \log_q(\ell))(1 - \gamma),\]
implies the bound on $R$ in Corollary~\ref{cor:avgrad}, which now implies the result.

\end{proof}

Next we consider the parameter regime where $q$ is constant and $\eps$ is bounded away from $1/q$.
\begin{corollary}[Average-radius list-decoding over constant-sized alphabets; Corollary~\ref{cor:constantagr} restated]
\label{cor:constantagr2}
Choose any prime power $q \geq 2$.  There are constants $C, \delta_0$, and $\eps_0 \lneq \eps_1$ (which depend on $q$) so that the following is true.
For all $\delta \in (0, \delta_0)$, and for all $\eps \in (\eps_0, \eps_1)$, a random linear code over $\F_q$ of rate
\[ R = 1 - H_q(1-\eps) - \delta \]
is $(\eps, L)$-average-radius list-decodable with list size
\[ L \leq \inparen{ \frac{C}{\delta} }^{C \log^2(1/\delta)}. \]

Moreover, for $q = 2$, we may take $\eps_0 = 0.51$ and $\eps_1 = 0.8$, and in general we may take
\[ \eps_0 = 1/q + 1/q^2 , \qquad  \eps_1 = \max \inset{ 0.8,  1 - \inparen{ \frac{ 1.1 \cdot \ln(\arm{q+1}) }{q} }}. \]
\end{corollary}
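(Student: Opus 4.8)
The plan is to derive Corollary~\ref{cor:constantagr2} directly from Corollary~\ref{cor:avgrad} by specializing to $\ell = 1$, in which case average-radius list-recovery is exactly average-radius list-decoding, $\log_q(\ell) = 0$, and $\ell/q = 1/q$. First I would fix the free parameters: take $\eta = \xi = \zeta\delta$ and let $\zeta = \zeta(q) \in (0,1/20)$ be a sufficiently small constant depending only on $q$, with $\delta_0 = \delta_0(q)$ also a sufficiently small constant. Since $\ell = 1$, the hypothesis $q \geq \max\{2, \ell^{2/\zeta}\} = 2$ is automatic. Condition~\eqref{eq:eps-condn-cor} becomes $\eps > \eta + \frac{1+2\zeta}{q}\inparen{1 + \min\{2, \frac{\zeta\log(q+1)}{1-\zeta}\}}$, whose right-hand side equals $\frac1q + O(\zeta) + \eta$ with a $q$-dependent constant in $O(\zeta)$; this is below $\eps_0$ once $\zeta$ (hence $\eta = \zeta\delta$) is small enough relative to $1/q$. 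For $q = 2$ one uses $\eps_0 = 0.51 > 1/2$; for general $q$ one uses $\eps_0 = 1/q + 1/q^2$ and checks $O(\zeta) + \eta < 1/q^2$.

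The crux is the rate bound: we must show $R = 1 - H_q(1-\eps) - \delta$ is at most both $R_0 - \eta$ and $1 - H_q(1-\eps+\eta) - \xi$. For the second term, $\eps \in (\eps_0,\eps_1)$ keeps $\rho := 1-\eps$ bounded away from both $0$ and $1-1/q$, so $H_q'(\rho) = \log_q\frac{(q-1)(1-\rho)}{\rho}$ is bounded by a constant $C_q'$; hence $1 - H_q(1-\eps+\eta) \geq 1 - H_q(1-\eps) - C_q'\eta$, and the inequality follows from $C_q'\eta + \xi = (C_q'+1)\zeta\delta \leq \delta$, which holds for $\zeta \leq 1/(C_q'+1)$. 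For the first term, the general form of $R_0$ in Corollary~\ref{cor:avgrad} gives $R_0 \geq (\eps - 1/q) - C_q''\zeta$ for some constant $C_q'' = O(1)$ (depending on $q$), so it suffices to prove
\[ g(\eps) := \inparen{\eps - \tfrac1q} - \inparen{1 - H_q(1-\eps)} \ \geq\ C_q''\zeta + \eta - \delta. \]
The key observation is that $g$ is \emph{concave} on $(1/q,1)$: indeed $g'(\eps) = 1 - H_q'(1-\eps)$ and $H_q'(1-\eps) = \log_q\frac{(q-1)\eps}{1-\eps}$ is increasing in $\eps$. Hence on $[\eps_0,\eps_1]$ the minimum of $g$ is attained at an endpoint, so it remains to check $g(\eps_0) > 0$ and $g(\eps_1) > 0$. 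Near $\eps_0$ this uses the Taylor expansion~\eqref{eq:entropyexp} of $H_q$ about $\rho = 1-1/q$, so that $1 - H_q(1-1/q - x) = \Theta(x^2)$ is dominated by $x = \eps_0 - 1/q = 1/q^2$, giving $g(\eps_0) = \Theta(1/q^2) > 0$ (for $q = 2$ with $\eps_0 = 0.51$, check $g(0.51) > 0$ directly). Near $\eps_1$ one uses~\eqref{eq:entropyinf}: for large $q$, where $\eps_1 = 1 - 1.1\ln(q+1)/q$, the condition $g(\eps_1) > 0$ reduces to $\rho_1\ln(1/\rho_1) \gtrsim (\ln q)/q$ with $\rho_1 = 1.1\ln(q+1)/q$, and the constant $1.1$ provides the needed slack; for the finitely many small $q$ where $\eps_1 = 0.8$ it is a direct check. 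Thus $g$ is bounded below on $[\eps_0,\eps_1]$ by a positive constant $c_q$ independent of $\delta$ and $\zeta$, and choosing $\zeta$ small enough (in terms of $q$) makes $C_q''\zeta + \eta \leq C_q''\zeta + \zeta\delta_0 < c_q$, as needed.

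Finally I would bound the list size. Plugging $\ell = 1$, $\eta = \xi = \zeta\delta$, and the constant $\zeta = \zeta(q)$ into the bound of Corollary~\ref{cor:avgrad}: the factor $\frac{1-\eps+\eta}{\eta} \leq \frac{1}{\zeta\delta}$; the factor $\exp_{q/\xi}\inparen{\frac{C'\log(\zeta/\xi)}{\zeta}} = \inparen{\frac{q}{\zeta\delta}}^{C'\log(1/\delta)/\zeta}$; and the factor $\inparen{\frac{1}{\eps-\eta}}^{C'\log^2(\zeta/\xi)/\zeta^3}$, in which $\frac{1}{\eps-\eta} \leq \frac{1}{\eps_0 - \zeta\delta_0}$ is a constant. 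Multiplying, everything is absorbed into $\inparen{\frac{C}{\delta}}^{C\log^2(1/\delta)}$ for a suitable constant $C = C(q)$. The success probability $1 - \exp(-C\eps n) - \exp(-C\xi n)$ is $1 - o(1)$ since $\eps$ and $\xi = \zeta\delta$ are positive constants in $n$, and $(\eps,1,L)$-average-radius list-recoverability is precisely $(\eps,L)$-average-radius list-decodability. I expect the main obstacle to be the endpoint analysis of $g$ — in particular verifying $g(\eps_1) > 0$ for the claimed $\eps_1 = \max\{0.8,\, 1 - 1.1\ln(q+1)/q\}$, which is exactly the constraint forcing $\eps$ to be bounded away from $1$ — together with tracking the several mild $q$-dependent constraints ($\zeta < 1/20$, $\zeta$ small vs.\ $1/q$, $\zeta \leq 1/(C_q'+1)$, $C_q''\zeta < c_q/2$) that pin down the admissible $\zeta(q)$ and $\delta_0(q)$.
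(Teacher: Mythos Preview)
Your proposal is correct and follows essentially the same approach as the paper: specialize Corollary~\ref{cor:avgrad} to $\ell=1$ with $\eta,\xi = \Theta(\delta)$ and $\zeta$ a small constant depending on $q$, then verify the rate condition by checking that the linear term $R_0$ dominates the convex term $R_1 = 1 - H_q(1-\eps) - \delta$ at the two endpoints $\eps_0,\eps_1$ via the expansions~\eqref{eq:entropyexp} and~\eqref{eq:entropyinf}, which is exactly your concavity-of-$g$ argument rephrased. The list-size computation and the parameter constraints you identify also match the paper's treatment.
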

\begin{proof}
Fix a constant $q$ and $\delta > 0$.
In Corollary~\ref{cor:avgrad}, we choose $\xi, \eta = \Theta(\delta)$.
\nmrk{
We first argue that with these parameter choices, we satisfy~\eqref{eq:eps-condn-cor} and claimed bound on $L$. Note that~\eqref{eq:eps-condn-cor} is satisfied if
\begin{align*}
\frac{1}{q}+\frac{1}{q^2}&> \eta + \frac{(1+2\zeta)}{q}\inparen{1+\frac{\zeta\log{\arm{(q+1)}}}{1-\zeta}}\\
&= \frac{1}{q}+\frac{1}{q}\inparen{ q\eta+2\zeta+\frac{(1+2\zeta)\zeta\log{\arm{(q+1)}}}{1-\zeta} }.
\end{align*}
The above is satisfied if $\inparen{ q\eta+2\zeta+\frac{(1+2\zeta)\zeta\log{\arm{(q+1)}}}{1-\zeta} }<\frac{1}{q}$, which is satisfied if $\delta$ and $\zeta$ are taken to be small enough, in terms of $q$.  (More precisely, we should have $\delta\le O(1/q^2)$ and $\zeta=O\inparen{\frac{1}{q\log{q}}}$).
For the bound on $L$ notice that $\zeta$ depends only on $q$, while $\xi$ and $\eta$ are on the order of $\delta$.  Thus, replacing everything that depends only on $q$ in the list size bound by a constant $C'''$, we obtain the bound
\begin{align*}
L &\leq \inparen{\frac{1 - \eps + \eta}{\eta} }\cdot \exp_{q /\xi}\inparen{\frac{C' \log( \zeta /\xi)}{\zeta}} \cdot \inparen{\frac{1}{ \eps - \eta}}^{C' \log^2( \zeta/\xi)/\zeta^3} \\
	&\le \inparen{\frac{C'''}{\delta}} \exp_{C'''/\delta}\inparen{C''' \log(1/\delta)} \cdot \exp_{C'''} \inparen{C''' \log^2( 1 / \delta)} \\
	&\leq \inparen{\frac{C}{\delta}}^{C \log^2(1/\delta))},
\end{align*}
for some constant $C$ depending only on $q$, as desired.
}

Then the minimum in the rate expression in Corollary~\ref{cor:avgrad} is $\min\inset{R_0, R_1}$, where
\[ R_0 := \inparen{ \eps - \frac{1}{q} \inparen{1 + \frac{ \zeta \log(\arm{q+1}) }{1 - \zeta} }}(1 - 5\zeta), \]
and 
\[ R_1 := 1 - H_q( 1 - \eps + \eta) - \xi = 1 - H_q(1 - \eps) - \delta, \]
by choosing $\eta, \xi = \Theta(\delta)$ appropriately.
 The only step left is to show that for $\eps \in (\eps_0,\eps_1)$, for some $\eps_0 \lneq \eps_1$, we have $R_1 < R_0 - \delta$.

As $R_0$ is a linear function of $\eps$ and $R_1$ is convex, there may be two intersection points for these curves, which will be $\eps_0$ and $\eps_1$, respectively.  
By choosing $\zeta = 0.01$, we may establish the claim for $q = 2$ numerically.  (See Figure~\ref{fig:rates}).
For general $q$,
we begin with $\eps_0$, which we claim may be taken to be $\eps_0 = 1/q + 1/q^2$.  To see this, we use \eqref{eq:entropyexp} to see that
with this choice of $\eps_0$, we have 
\begin{align*}
R_1 +\delta &= 1 - H_q( 1 - \eps_0 ) \\
&= 1 - H_q( 1 - 1/q - 1/q^2 ) \\
&= -\sum_{j=1}^\infty \frac{ (-1)^j }{j(j+1)} \inparen{ 1 - \inparen{ \frac{-1}{q-1} }^j } q^j q^{-2(j + 1)} / \ln(q)\\
&= -\sum_{j=1}^\infty \frac{ (-1)^j }{j(j+1)} \inparen{ 1 - \inparen{ \frac{-1}{q-1} }^j } q^{-2 - j} /\ln(q) \\
&\leq \frac{1}{q^2} \sum_{j=1}^\infty \frac{ 1}{q^j \ln(q)\cdot j \cdot (j+1) } \\
&\leq \frac{1}{2 \cdot q^2}. 
\end{align*}
On the other hand, we have $R_0$ given by
\begin{align*}
R_0 &= \inparen{ \eps_0 - \frac{1}{q} \inparen{1 + \frac{ \zeta \log(\arm{q+1}) }{1 - \zeta} }}(1 - 5\zeta) \\
&= \inparen{ \frac{1}{q^2} - \frac{ \zeta \log(\arm{q+1})}{(1-\zeta)q} }\inparen{1 - 5\zeta}.
\end{align*}
By choosing 
\[ \zeta \leq \min \inset{ \frac{ 1 }{4 q \log(\arm{q+1}) }, 1/50 }, \]
we can thus ensure that 
\[ R_1 < R_0 - \frac{7}{40 \cdot q^2} < R_0 - \delta, \]
by choosing $\delta$ to be sufficiently small compared to $1/q^2$.
 
Next we consider choosing $\eps_1$.
Again, to see this, it suffices to show that for $\eps = \eps_1$, we have $R_1 < R_0 - \delta$.
This will be true (for some $\zeta, \delta$, which may depend on $q$) as long as
\[ 1 - H_q(1 - \eps_1) \lneq \eps_1 - \frac{1}{q}. \]
Numerically, we see that this is satisfied for all $q \geq 2$ provided that $\eps_1 \geq 0.8$. 
To understand how we may take $\eps_1$ as $q$ grows, we use
use \eqref{eq:entropyinf} to approximate
\begin{align*}
R_1 &= 1 - H_q( 1 - \eps_1 ) \\
& = 1 - \inparen{ 1 - \eps_1 + \frac{H_2(1 - \eps_1)}{\log_2(q)} - \frac{ 1 - \eps_1 }{\ln(q)} \sum_{j=1}^\infty \frac{1}{j \cdot q^j } }\\
&\leq \eps_1 - \frac{H_2(1 -\eps_1) }{\log_2(q)} + \frac{1 - \eps_1}{(q-1) \ln(q) } \\
&\leq \eps_1 - \frac{1 - \eps_1}{\ln(q)} \inparen{ \ln(1/(1 - \eps_1)) - \frac{1}{q-1 } },
\end{align*}
using the approximation that $H_2(x) \geq x \log(1/x)$.
On the other hand, we have
\begin{align*}
R_0 &= \inparen{ \eps_1 - \frac{1}{q} \inparen{1 + \frac{ \zeta \log(\arm{q+1})}{1 - \zeta} } }(1 - 5\zeta) 
\end{align*}
and so the condition that $R_1 < R_0 - \delta$ can be met (for some $\zeta, \delta$ which depend on $q$) provided that
\[ \frac{1}{q} \lneq \frac{1 - \eps_1}{\ln(q)} \inparen{ \ln(1/(1 - \eps_1)) - \frac{1}{q-1 } }. \]
Choose 
\[ \eps_1 = 1 - 1.1 \cdot \ln(\arm{q+1}) /q, \]
so that the above reads
\[ \frac{1}{q} \lneq \frac{1.1}{q} \inparen{ \ln \inparen{ \frac{q}{ 1.1\ln(\arm{q+1}) } } - \frac{1}{q-1} }, \]
which holds for $q \geq 6$.
Finally, we observe that for $q \leq 6$, this value of $\eps_1$ is in fact smaller than $0.8$, and so the claim holds.
\end{proof}

Finally, we consider high-rate list-recovery.
\begin{corollary}[High-rate list-recovery (Corollary~\ref{cor:highratelr} re-stated)]
	There are constants $C, \gamma_0$ so that the following holds.
Choose $0 < \gamma < \gamma_0$ sufficiently small and let $\ell > 1$ be an integer.
Suppose that $q \geq  \ell^{C/\gamma}$, and let $\cC$ be a random linear code of rate $R$ for
\[ R \leq 1 - \gamma. \]
Then with high probability, $\cC$ is $(1 - \gamma/10, \ell, L)$-list-recoverable, for some
\[ L \leq \inparen{ \frac{q\ell}{\gamma}}^{\log(\ell)/\gamma} \cdot \exp\inparen{\frac{\log^2(\ell)}{\gamma^3}}. \]
\end{corollary}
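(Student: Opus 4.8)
The plan is to obtain Corollary~\ref{cor:highratelr} as a direct instantiation of Corollary~\ref{cor:avgrad}. Since $(\eps,\ell,L)$-average-radius list-recoverability implies $(\eps,\ell,L)$-list-recoverability, it suffices to prove the former with agreement parameter $\eps = 1-\gamma/10$. So I would apply Corollary~\ref{cor:avgrad} with $\eps = 1-\gamma/10$ and with $\eta$, $\xi$, $\zeta$ all set to a sufficiently small constant multiple of $\gamma$ (taking $\eta=\xi=\zeta=\gamma/10$ works). For $\gamma$ below an absolute constant $\gamma_0$ these lie in $(0,1/20)$ as required, and choosing the constant $C$ in the hypothesis $q \ge \ell^{C/\gamma}$ large enough guarantees $q \ge \max\{2,\ell^{2/\zeta}\}=\ell^{20/\gamma}$ and also $q \ge \tfrac{2\ell}{(1-\zeta)\eps}\log\tfrac{2\ell}{(1-\zeta)\eps}$, so the large-$q$ simplification $R_0=(\eps-\ell/q)(1-6\zeta)$ of Corollary~\ref{cor:avgrad} is in force.

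Then I would check the two substantive hypotheses of Corollary~\ref{cor:avgrad}. Condition~\eqref{eq:eps-condn-cor} asks $\eps > \eta + O(\ell/q)$, which is immediate here since $\eps = 1-\gamma/10$ while $\eta + O(\ell/q) \le \gamma/10 + o(\gamma)$ (using $q \gg \ell/\gamma$). For the rate I must show both branches of the minimum exceed $1-\gamma$. The first branch is $R_0 - \eta = (\eps-\ell/q)(1-6\zeta)-\eta \ge 1-\gamma/10-6\zeta-\ell/q-\eta$, which is $\ge 1-\gamma$ since $6\zeta+\eta+\ell/q \le 9\gamma/10$ for our choices. The second branch is $1-H_{q/\ell}(1-\eps+\eta)-\log_q(\ell)-\xi$; writing $y:=1-\eps+\eta=\gamma/10+\eta \le \gamma/5$ and using the identity/expansion $H_{q/\ell}(y)=y\log_{q/\ell}((q/\ell)-1)+H_2(y)/\log_2(q/\ell) \le y + H_2(y)/\log_2(q/\ell)$ from~\eqref{eq:entropyinf}, together with $q/\ell \ge \ell^{C/(2\gamma)}$, gives $H_2(y)/\log_2(q/\ell) \le 2\gamma/C$ and $\log_q(\ell)\le \gamma/C$; hence $H_{q/\ell}(y)+\log_q(\ell)+\xi \le \gamma/5 + 3\gamma/C \le \gamma$ once $C$ is large enough, so the second branch is also $\ge 1-\gamma$. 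The probability bound $1-\exp(-C\eps n)-\exp(-C\xi n)$ from Corollary~\ref{cor:avgrad} is $1-\exp(-\Omega(n))-\exp(-\Omega(\gamma n))=1-o(1)$, i.e.\ with high probability.

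It remains to simplify the list size. Substituting $\eps=1-\gamma/10$ and $\eta=\xi=\zeta=\gamma/10$ into the bound of Corollary~\ref{cor:avgrad}: the prefactor $(1-\eps+\eta)/\eta$ is an absolute constant; the middle factor $\exp_{q\ell/\xi}(C'\log(\ell\zeta/\xi)/\zeta)$ equals $(10q\ell/\gamma)^{O(\log(\ell)/\gamma)}$, which is at most $(q\ell/\gamma)^{O(\log(\ell)/\gamma)}$; and the last factor $(\eps-\eta)^{-C'\log^2(\ell\zeta/\xi)/\zeta^3}=(1-\gamma/5)^{-O(\log^2(\ell)/\gamma^3)}=\exp(O(\log^2(\ell)/\gamma^2))$. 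Folding the hidden constants into the exponents (and, if needed, enlarging $C$) gives $L \le (q\ell/\gamma)^{\log(\ell)/\gamma}\cdot\exp(\log^2(\ell)/\gamma^3)$, as claimed.

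The step I expect to be the real work is the rate verification, specifically bounding $1-H_{q/\ell}(1-\eps+\eta)-\log_q(\ell)$ from below by $1-\gamma$: this is exactly where the hypothesis $q\ge\ell^{C/\gamma}$ is needed, since it forces both $\log_q(\ell)$ and the binary-entropy correction $H_2(\cdot)/\log_2(q/\ell)$ to be small relative to $\gamma$, and it requires a little care with constants in the entropy estimate~\eqref{eq:entropyinf}. The remaining pieces — verifying~\eqref{eq:eps-condn-cor}, the other rate branch, the probability bound, and the list-size arithmetic — are routine bookkeeping.
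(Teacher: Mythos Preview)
Your proposal is correct and follows essentially the same approach as the paper: instantiate Corollary~\ref{cor:avgrad} with $\eps = 1-\gamma/10$ and $\eta,\xi,\zeta$ equal to a small constant multiple of $\gamma$, verify~\eqref{eq:eps-condn-cor} and both branches of the rate bound (using the entropy estimate~\eqref{eq:entropyinf} and the hypothesis $q \ge \ell^{C/\gamma}$ for the second branch), and then simplify the list-size expression. The only cosmetic differences are that the paper takes $\eta=\xi=\zeta=\gamma/20$ rather than $\gamma/10$, and bounds $R_0$ via the general form $(\eps - 3\ell/q)(1-5\zeta)$ rather than the large-$q$ simplification $(\eps-\ell/q)(1-6\zeta)$ you invoke; both choices work.
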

\begin{proof}
	In Corollary~\ref{cor:avgrad}, we will choose $\zeta, \eta, \xi = \gamma/20$, and $\eps = 1 - \gamma/10$.   Let $C' = 1/10$, so that $\eps = 1 - C'\gamma$.  
	
We first notice that the by choosing $C$ large enough, we can guarantee that the condition $q \geq \ell^{2/\zeta}$ is satisfied.	
	Next, we observe that the condition~\eqref{eq:eps-condn-cor} amounts to requiring $1 - \frac{3}{20}\cdot\gamma \geq 4\ell / q$, which is satisfied by choosing $\gamma \leq \gamma_0$ sufficiently small.  Next, we consider the terms in the rate.  The first is
	\begin{align*}
		R_0 - \eta &= \inparen{ \eps - \frac{3\ell}{q}}(1 - 5\zeta)  - \eta\\
		&= \inparen{ 1 - C'\gamma - \frac{3\ell}{q} }(1 - 5\zeta) - \eta. 
		\end{align*}
	The assumption that $q \geq \ell^{C/\gamma}$ (and the fact that $\ell > 1$) implies that $3\ell/q \leq C'\gamma$, by choosing $C$ large enough.  Finally, by choosing $\zeta, \eta = \gamma/20$, we have
	\[ R_0 - \eta \geq (1 - 2C'\gamma)(1 - \gamma/4) - \frac{\gamma}{20}
	\geq 1 - \inparen{ 2C' + 1/4 + 1/20 }\gamma \geq 1 - \gamma,
	\]
	using the choice $C' \leq 1/10$.
	The second term is
	\begin{equation}\label{eq:secondterm}
		\arm{1 - H_{q/\ell}(1 - \eps + \eta) - \log_q(\ell)} - \xi 
		= \arm{ 1 - H_{q/\ell}(C'\gamma + \gamma/20) - \gamma/C} - \gamma/20.
	\end{equation}
We estimate by~\eqref{eq:entropyinf}
\begin{align*}
	H_{q/\ell}(C' \gamma + \gamma/20) &\leq (C' + 1/20)\gamma + \frac{H_2(C' + 1/20)\gamma)}{\log_2(q/\ell)} \\
	&\leq \inparen{ C' + \frac{1}{20}}\gamma + \frac{2(C' + 1/20) \gamma \log( 1/\gamma)}{\log_2(q/\ell)} \\
	&\leq 3 \cdot \inparen{ C' + \frac{1}{20}}\gamma \\
	&\leq \frac{\gamma}{2},
	\end{align*}
using in the second-to-last line that $q/\ell \geq 1/\gamma$, which is again guaranteed by choosing $C$ sufficiently large.  
Thus, returning to \eqref{eq:secondterm}, we see that this is bounded below by
\[ \arm{ 1 - \gamma/2 - \gamma/C} - \gamma/20 \geq 1- \gamma,\]
again by ensuring $C$ is sufficiently large.  Thus, the rate expression in Corollary~\ref{cor:avgrad} may be replaced by $1 - \gamma$.
Plugging our choices of parameters into the expression for the list size completes the proof.
\end{proof}

\subsection{Proof of Theorem~\ref{thm:avgrad}}\label{sec:avgproof}
In the remainder of the Section, we prove Theorem~\ref{thm:avgrad}.

The main idea behind Theorem~\ref{thm:avgrad} will be to study \em average-badness \em instead of all-badness.  We will discuss this more below, but briefly, we will say a set $(X,\Lambda)$ is average-bad if $\sum_{x \in X} \pl_x^{(\ell)}(\Lambda)$ is large.
The argument proceeds as in Section~\ref{sec:easy}: in the first step, we show that we only need to consider small-dimensional $\Lambda$, by exploiting the linear independence of large sets $\Lambda$.  Then the argument proceeds by induction, following the same outline as in the proof of Theorem~\ref{thm:easy}.
However, there are a few complications that prevent the argument from porting over directly.  Below, we briefly mention three of these and explain our techniques to overcome them. 
\begin{enumerate}
	\item In Lemma~\ref{lem:base2}, the argument was standard; for average-radius list-recoverability, we will have to modify the argument slightly to work for average-radius list-recoverability, taking a slight loss in the list size.  This part of the argument follows ideas from an earlier argument in~\cite{RW15} that list-decodability implies average-radius list-decodability, with some loss in parameters.
	\item In~\eqref{eq:tothen}, we bounded the number of all-bad $X$ by the number of bad $x$ raised to the $n$.  This works for all-badness, but not for average-badness.  Instead, we will need to use a Chernoff-like analysis to bound the number of average-bad $X$.  This will require more careful control of the moments of $\pl_x^{(\ell)}(\Lambda)$.
	\item In Claim~\ref{claim:easyversion}, we used the fact that if $\Lambda$ is a witness for the all-badness of $X$, then any subset of $\Lambda$ is also a witness for that all-badness.  Indeed, the size of the set $|\inset{ \ip{v}{x} \suchthat v \in \Lambda, x \in X}|$ can only shrink when we pass to a subset of $\Lambda$.  However, for average-badness, this is no longer true: if $\pl_x^{(\ell)}(\Lambda)$ is large, then there might be subsets $\Lambda' \subseteq \Lambda$ so that $\pl_x^{(\ell)}(\Lambda')$ is small.  To get around this, we will have to introduce another step in the argument, which shows that if $X$ is average-bad with a bad witness $\Lambda$, then not only can we find a large low-rank subset of a bad $\Lambda$, but moreover we can find on that is still a witness for $X$'s badness.
\end{enumerate}

With this in mind, we proceed with the proof.  Paralleling the proof of Theorem~\ref{thm:easy}, we say that a set $X$ is \em average-bad \em if it forms an obstruction to average-radius list-recoverability.
\begin{definition}\label{def:bad}
For $X \subseteq \F^d$ of size $n$ and $\Lambda \subseteq \F^d$, we say that $(X,\Lambda)$ is $(L,d,\eps\ml)$-average-bad if:
\begin{itemize}
	\item[(a)] $X$ has full rank,
	\item[(b)] $|\Lambda| \geq L$,
	\item[(c)] $\dim(\spn(\Lambda)) \leq d$,
	\item[(d)] and
$ \sum_{x \in X} \mrk{\pl^{(\ell)}_x}(\Lambda) \ge \eps n. $
\end{itemize}
We say that a set $X$ is $(L,d,\eps\ml)$-average-bad if there exists a $\Lambda$ so that $(X,\Lambda)$ is $(L, d, \eps\ml)$-average-bad.  For an average-bad $X$, we will refer to such a $\Lambda$ as a \em witness \em for $X$'s badness.  
\end{definition}

As with all-badness and zero-error list-recovery, average-badness is related to average-radius list-recovery.
\begin{proposition}\label{prop:plrs}
A linear code $\cC$ is $(\mrk{\eps},\mrk{\ell} ,L)$-average radius list-\mrk{recoverable} if and only if 
for all $d \leq \dim(\cC)$,
there are no $X \subseteq \F^d$ so that
$X$ is $(L, d, \eps, \ell)$-average-bad and so that $\cols(X) \subseteq \cC$.
\end{proposition}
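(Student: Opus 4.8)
The plan is to mirror the proof of Proposition~\ref{prop:allbadlr}: build a dictionary between sets of $L$ codewords of $\cC$ and pairs $(X,\Lambda)$ with $\cols(X)\subseteq\cC$ and $|\Lambda|=L$, and then check that the average-radius list-recovery condition of Definition~\ref{def:avgradlr} is precisely the negation of the average-badness condition of Definition~\ref{def:bad} under this dictionary. The one new ingredient beyond the zero-error case is the combinatorial identity characterizing $\pl_x^{(\ell)}$: for any $x\in\F^d$ and $\Lambda\subseteq\F^d$,
\[ \max_{S\subseteq\F,\ |S|\le\ell}\ \inabs{\inset{v\in\Lambda\suchthat\ip{x}{v}\in S}} \;=\; \sum_{i=1}^{\ell}\tp_i(x,\Lambda) \;=\; |\Lambda|\cdot\pl_x^{(\ell)}(\Lambda), \]
since $\inabs{\inset{v\in\Lambda\suchthat\ip{x}{v}\in S}}=\sum_{\alpha\in S}\inabs{\inset{v\in\Lambda\suchthat\ip{x}{v}=\alpha}}$ is maximized by choosing $S$ to consist of the $\ell$ field elements with the largest fibers, which is exactly $\inset{\argtp_1(x,\Lambda),\ldots,\argtp_\ell(x,\Lambda)}$.

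From this the translation is immediate. First I would fix a full-rank $X=\inset{x_1,\ldots,x_n}\subseteq\F^d$ and a set $\Lambda\subseteq\F^d$ of size $L$, and let $\Omega=\inset{(\ip{x_1}{v},\ldots,\ip{x_n}{v})\suchthat v\in\Lambda}$ be the associated set of codewords; these $L$ vectors are distinct because $X$ spans $\F^d$. For any lists $S_1,\ldots,S_n$ with $|S_i|\le\ell$,
\[ \frac{1}{|\Omega|}\sum_{c\in\Omega}\frac1n\sum_{i=1}^n\ind{c_i\in S_i} \;=\; \frac{1}{Ln}\sum_{i=1}^n\inabs{\inset{v\in\Lambda\suchthat\ip{x_i}{v}\in S_i}}, \]
and maximizing the right-hand side over all such $S_1,\ldots,S_n$ gives, by the identity above, exactly $\frac1n\sum_{x\in X}\pl_x^{(\ell)}(\Lambda)$. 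Hence $(X,\Lambda)$ exhibits a violation of $(\eps,\ell,L)$-average-radius list-recoverability iff $\sum_{x\in X}\pl_x^{(\ell)}(\Lambda)\ge\eps n$, i.e.\ iff $(X,\Lambda)$ is $(L,d,\eps,\ell)$-average-bad (the boundary case $\sum_x\pl_x^{(\ell)}(\Lambda)=\eps n$ being treated exactly as the corresponding boundary case is handled in Proposition~\ref{prop:allbadlr}).

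It then remains to verify that the dictionary is onto on both sides. The easy direction: if $\cols(X)\subseteq\cC$ then every vector $(\ip{x_1}{v},\ldots,\ip{x_n}{v})$ is a linear combination of the columns of $X$, hence lies in the subspace $\cC$, so every average-bad pair with $\cols(X)\subseteq\cC$ yields a witnessing $\Omega\subseteq\cC$. For the other direction, given any $L$ codewords of $\cC$, fix a basis $w_1,\ldots,w_k$ of $\cC$ with $k=\dim(\cC)$, let $X_0\subseteq\F^k$ be the set of $n$ rows of the $n\times k$ matrix with columns $w_1,\ldots,w_k$ (so $\cols(X_0)=\inset{w_1,\ldots,w_k}\subseteq\cC$ and $X_0$ is full-rank), and expand the $L$ codewords in this basis to obtain $\Lambda_0\subseteq\F^k$ of size $L$ with $(X_0,\Lambda_0)$ representing them. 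Then apply Lemma~\ref{lem:projection} with $d'=\dim(\spn(\Lambda_0))\le k$ to pass to $(X,\Lambda)$ living in $\F^{d'}$: this preserves all the inner products $\ip{x}{v}$ (hence the represented codewords and the multiset $\inset{\pl_x^{(\ell)}(\Lambda)\suchthat x\in X}$), preserves full-rankness, and guarantees $\cols(X)\subseteq\spn(\cols(X_0))\subseteq\cC$. Combining this with the previous paragraph proves that $\cC$ fails to be $(\eps,\ell,L)$-average-radius list-recoverable if and only if some $(X,\Lambda)$ with $\cols(X)\subseteq\cC$ is $(L,d,\eps,\ell)$-average-bad for some $d\le\dim(\cC)$, which is the claim.

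I expect the only mildly delicate point to be the bookkeeping in this last direction — correctly invoking Lemma~\ref{lem:projection} to bring the ambient dimension down to some $d\le\dim(\cC)$ while keeping $X$ full-rank and $\cols(X)\subseteq\cC$ — together with the harmless strict-versus-nonstrict inequality at the threshold $\sum_{x}\pl_x^{(\ell)}(\Lambda)=\eps n$, which is resolved exactly as in Proposition~\ref{prop:allbadlr}. Everything else is a direct unwinding of Definitions~\ref{def:avgradlr} and \ref{def:bad} via the greedy top-$\ell$ identity.
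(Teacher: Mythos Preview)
Your proposal is correct and follows essentially the same route as the paper: both set up the dictionary between sets of $L$ codewords and pairs $(X,\Lambda)$, then observe that the optimal choice of lists (your greedy top-$\ell$ identity, the paper's optimal center $z\in\F^{\ell\times n}$ with $z[j][i]=\argtp_i(x_j,\Lambda)$) makes the average-agreement quantity equal to $\frac{1}{n}\sum_{x\in X}\pl_x^{(\ell)}(\Lambda)$. Your treatment is in fact a bit more careful than the paper's on the converse direction---you explicitly invoke Lemma~\ref{lem:projection} to bring the ambient dimension down to $d'\le\dim(\cC)$ while preserving full-rankness and $\cols(X)\subseteq\cC$, whereas the paper simply asserts the correspondence---and you also flag the strict-versus-nonstrict boundary issue that the paper silently elides.
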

\begin{proof}

As discussed in Section~\ref{sec:notation}, for any full-rank set $X = \inset{x_1,\ldots,x_n} \subseteq \F^d$ of size $n$ with $\cols(X) \subseteq \cC$, and any $\Lambda \subseteq \F^d$ of size $L$, there is a set of $L$ codewords
\[ \inset{ \bX v \suchthat v \in \Lambda } = \inset{ ( \ip{x_1}{v}, \ldots, \ip{x_n}{v} ) \suchthat v \in \Lambda } \subseteq \cC, \]
where $\bX$ is the matrix with the elements of $X$ as rows.
Conversely, for any set of codewords we may come up with such a pair $(X,\Lambda)$. 
Thus, the condition of average-radius list-recoverability is the same as the condition that:
\begin{quote} \em For all $d \leq \dim(\cC)$, for all full-rank $X \subseteq \F^d$ with $\cols(X) \subseteq \cC$, for all $\Lambda \subseteq \F^d$ of size $L$ and for all $z\in\F^{\ell\times n}$, we have
\begin{equation}\label{eq:avgraddef}
\frac{1}{L} \sum_{v \in \Lambda} \dist( \bX v , z ) > (1 - \eps),
\end{equation}
where $\bX$ is the matrix with the elements of $X$ as rows.
\end{quote}
For a fixed $\Lambda$,
the $z \in \F^{\ell \times n}$ that maximizes the left-hand size is the one which has for all $j$,
\[ z[j][i] = \argtp_i(x_j, \Lambda). \]
Thus, \eqref{eq:avgraddef} is equivalent to the condition that:
\begin{quote}\em
For all $d \leq \dim(\cC)$, for all full-rank $X \subseteq \F^d$, and for all $\Lambda \subseteq \F^d$ of size $L$,
\begin{align*}
\eps & > 
\frac{1}{L} \sum_{v \in \Lambda} \frac{1}{n} \sum_{j \in [n]} \ind{ \ip{x_j}{v} = \argtp_i(x_j, \Lambda) \text{ for some $i \leq \ell$ } }  \\
& = \frac{1}{n} \sum_{j \in [n]} \frac{1}{L} \sum_{v \in \Lambda} \ind{ \ip{x_j}{v} = \argtp_i(x_j, \Lambda) \text{ for some $i \leq \ell$ } } \\
& = \frac{1}{n} \sum_{x \in X} \pl^{(\ell)}_x(\Lambda).
\end{align*}
\end{quote}
This is precisely the condition that for all $d \leq \dim(\cC)$, there are no $(L,d,\eps,\ell)$-average-bad sets $X \subset \F^d$ so that $\cols(X) \subseteq \cC$, and we have proved the Proposition.
\end{proof}

Given Proposition~\ref{prop:plrs}, we wish to show that for a random linear code $\cC$, 
with high probability, 
for all $d \leq Rn$ and 
for all sets $X$ that are $(L, d, \eps\ml)$-bad, $\cols(X) \not\subset \cC$.  
We will choose 
\[ R \leq \min \inset{ (\eps - \nmrk{\beta\cdot\centermu-\barmu})(1 - 5\zeta) - \eta, \arm{1 - H_{q/\ell}(1 - \eps - \eta) - \log_q(\ell)} - \xi} \]
as in the theorem statement.
Let 
\begin{equation}\label{eq:naughts}
	L_0 = \frac{\eta L}{1 - \eps + \eta}, \qquad \eps_0 = \eps - \eta, \qquad d_0 = \frac{2\mrk{\ell}}{\xi}.
\end{equation}
We begin by observing that if the dimension of $\Lambda$ is suitably large (larger than $d_0$), then we are done by independence: Lemma~\ref{lem:start} is the analog of Lemma~\ref{lem:base2} in Section~\ref{sec:easy}.

\begin{lemma}\label{lem:start} Suppose that $d \geq d_0$.  Let $\cC$ be a random linear code of rate $R$ over $\F_q$ that satisfies
\begin{equation}\label{eq:rate}
 R \leq \arm{1 - H_{q\mrk{/\ell}}( 1 - \eps+\eta )\mrk{- \log_q{\ell}}} - \xi.
\end{equation}
    Then with probability at least $1 - q^{-\xi n/2}$, if there is a set
 $X \subseteq \F^d$ that is $(L, d, \eps\ml)$-average-bad and has $\cols(X) \subseteq \cC$, then there is
    some set $X' \subseteq \F^{d_0}$ of size $n$,
    so that $\cols(X') \subseteq \cC$ and so that $X'$ is $(L_0, d_0, \eps-\eta\ml)$-average-bad.
\end{lemma}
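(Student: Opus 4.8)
The plan is to mimic the proof of Lemma~\ref{lem:base2}, with one genuinely new ingredient: an averaging step, in the spirit of~\cite{RW15}, that turns average-badness into a \emph{fixed-list} obstruction in which every codeword of the witness has large agreement with a single fixed collection of lists, at the cost of shrinking the list size from $L$ to $L_0$. So suppose $(X,\Lambda)$ is $(L,d,\eps,\ell)$-average-bad with $\cols(X)\subseteq\cC$. For each row $x_j$, put $S_j:=\inset{\argtp_i(x_j,\Lambda)\suchthat i\in[\ell]}$, so $|S_j|\le\ell$ and $\pl^{(\ell)}_{x_j}(\Lambda)=\tfrac1{|\Lambda|}\inabs{\inset{v\in\Lambda\suchthat\ip{x_j}{v}\in S_j}}$. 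Writing $a(v):=\inabs{\inset{j\suchthat\ip{x_j}{v}\in S_j}}$, average-badness says $\tfrac1{|\Lambda|}\sum_{v\in\Lambda}a(v)\ge\eps n$, and since $a(v)/n\in[0,1]$ a Markov-type inequality gives $\inabs{\inset{v\in\Lambda\suchthat a(v)\ge(\eps-\eta)n}}\ge\tfrac{\eta}{1-\eps+\eta}|\Lambda|\ge L_0$. Fix $\Lambda'\subseteq\Lambda$ of size exactly $L_0$ consisting of such vectors. Now the lists $S_1,\dots,S_n$ are fixed and every $v\in\Lambda'$ agrees with them in $\ge(\eps-\eta)n$ coordinates --- a property, unlike average-badness, inherited by every subset of $\Lambda'$. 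I will also use the converse: if $X''\subseteq\F^{d''}$ and $\tilde\Lambda\subseteq\F^{d''}$ admit lists $S_j$ with every $\tilde v\in\tilde\Lambda$ agreeing in $\ge(\eps-\eta)n$ coordinates, then bounding $\inabs{\inset{\tilde v\in\tilde\Lambda\suchthat\ip{x''_j}{\tilde v}\in S_j}}\le|\tilde\Lambda|\cdot\pl^{(\ell)}_{x''_j}(\tilde\Lambda)$ and summing over $j$ yields $\sum_j\pl^{(\ell)}_{x''_j}(\tilde\Lambda)\ge(\eps-\eta)n$, i.e.\ $(X'',\tilde\Lambda)$ is $(L_0,d'',\eps-\eta,\ell)$-average-bad.

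Next I would project. Let $d'':=\dim(\spn(\Lambda'))$ and apply Lemma~\ref{lem:projection} to $(X,\Lambda')$, obtaining a full-rank $X''\subseteq\F^{d''}$ of size $n$ with $\cols(X'')\subseteq\spn(\cols(X))\subseteq\cC$ and $\tilde\Lambda\subseteq\F^{d''}$ of size $L_0$ spanning $\F^{d''}$, with all inner products preserved; in particular every $\tilde v\in\tilde\Lambda$ still agrees with $(X'',S)$ in $\ge(\eps-\eta)n$ coordinates. If $d''<d_0$ we are done, by the converse above (and trivially embedding $\F^{d''}\subseteq\F^{d_0}$). So it suffices to prove that, with probability $\ge1-q^{-\xi n/2}$, this construction never produces $d''\ge d_0$.

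For that, fix $d''\ge d_0$, a set $\tilde\Lambda\subseteq\F^{d''}$ of size $L_0$ spanning $\F^{d''}$, and lists $S_1,\dots,S_n$ with $|S_j|\le\ell$, and pick a basis $v_1,\dots,v_{d''}\in\tilde\Lambda$. For a uniformly random $X''\in\F^{n\times d''}$ the $nd''$ indicators $\ind{\ip{x''_j}{v_i}\in S_j}$ are independent, each Bernoulli with parameter $|S_j|/q\le\ell/q$, because linear independence of the $v_i$ makes the $\ip{x''_j}{v_i}$ independent and uniform. If $X''$ obstructs together with $(\tilde\Lambda,S)$ then each $v_i$ has agreement $\ge(\eps-\eta)n$, so $\sum_{i,j}\ind{\ip{x''_j}{v_i}\in S_j}\ge(\eps-\eta)nd''$; a Chernoff bound for this binomial tail (valid since $\eps-\eta>\ell/q$) shows that, over a uniformly random $X''$, this occurs with probability at most $\exp_q\!\inparen{-nd''\inparen{1-H_{q/\ell}(1-\eps+\eta)-\log_q\ell}}$. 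Multiplying by $q^{nd''}$ (choices of $X''$), by $\binom{q}{\ell}^n\le q^{\ell n}$ (choices of $S$), and by $q^{-nd''(1-R)}$ (the probability that a fixed full-rank $X''$ has $\cols(X'')\subseteq\cC$), and invoking $R\le1-H_{q/\ell}(1-\eps+\eta)-\log_q\ell-\xi$, bounds the expected number of obstructing $X''$ with $\cols(X'')\subseteq\cC$ for this $(\tilde\Lambda,S)$ by $q^{\ell n-nd''\xi}$. Summing over the $\le q^{d''L_0}$ choices of $\tilde\Lambda$ and over $d''\ge d_0=2\ell/\xi$ then gives $\sum_{d''\ge d_0}q^{d''L_0}\,q^{\ell n-nd''\xi}\le q^{\ell n-2\ell n+O(\ell L_0/\xi)}\le q^{-\xi n/2}$ once $n$ is large compared to $L_0,\ell,\xi$ (using $q^{-\ell n/2}\le q^{-\xi n/2}$ since $\ell\ge1>\xi$). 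On the complement of this event the construction terminates with $d''<d_0$, proving the lemma.

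The step I expect to be the crux --- and the reason Lemma~\ref{lem:base2} does not port over directly --- is making the union bound carry a factor of $d''$ in the exponent. Average-badness is only an averaged statement, so a naive Chernoff bound on the i.i.d.\ row-variables $\pl^{(\ell)}_{x_j}(\tilde\Lambda)$ would save only $e^{-\Theta(n)}$, not the $q^{-\Theta(nd'')}$ needed to overcome the $q^{-nd''(1-R)}$ cost of $\cols(X'')\subseteq\cC$ after a union bound over the $q^{\Theta(d''L_0)}$ witnesses. The averaging reduction is exactly what supplies this: it produces $L_0$ \emph{individual} codewords of large agreement, so a basis of $d''$ of them furnishes $d''$ independent agreement constraints. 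A secondary nuisance is that the reduced witness may drop in dimension, so the union bound must be organized by $d''=\dim(\Lambda')$ and one must check the $q^{d''L_0}$ witness count is still dominated --- it is, since $L_0\le L$ and $d_0=2\ell/\xi$ is chosen generously; a final bookkeeping point is verifying that the Chernoff (relative-entropy) exponent for the $\mathrm{Bernoulli}(\le\ell/q)$ sum is at least $1-H_{q/\ell}(1-\eps+\eta)-\log_q\ell$, which is where the precise rate threshold in the hypothesis comes from.
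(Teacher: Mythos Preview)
Your proof is correct and takes essentially the same route as the paper: a Markov/averaging step extracts a sublist $\Lambda'\subseteq\Lambda$ of size $\ge L_0$ in which \emph{every} vector individually has agreement at least $(\eps-\eta)n$ with a fixed list collection, and then an independence-plus-union-bound argument (the paper's Claim~\ref{claim:vanilla}) shows that with probability $\ge 1-q^{-\xi n/2}$ no such sublist can have dimension $\ge d_0$, after which Lemma~\ref{lem:projection} finishes. The only cosmetic difference is organization---the paper establishes the high-probability event first and then runs the averaging, whereas you average first and then bound the failure probability---and your Chernoff bound on the \emph{sum} $\sum_{i,j}\ind{\ip{x''_j}{v_i}\in S_j}$ is a slight detour, since the stronger per-$v_i$ agreement constraints you already have, together with independence across $i$, give the product-of-volumes bound $\bigl(q^{-n(1-H_{q/\ell}(1-\eps+\eta)-\log_q\ell)}\bigr)^{d''}$ directly.
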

\begin{proof}
%
    We begin by observing that the statement is true in the traditional list-recovery sense (without the ``average-case" modifier). 
    \begin{claim}\label{claim:vanilla}
        Suppose that the conditions of Theorem~\ref{thm:avgrad}, and suppose that $d > 2 \mrk{\ell}/\xi$.
        Then with probability at least $1 - q^{-\xi d n/2}$, there is no $X, \Lambda \subseteq \F^d$ of full rank, with $|X| = n$, so that
        \begin{itemize}
            \item[(a)] there is some $z \in \F^{\mrk{\ell \times}n}$ so that $\dist(z, \bX v) \leq 1 - \eps+\eta$ for all $v \in \Lambda$, where $\bX \in \F^{n \times d}$ is the matrix with the elements of $X$ as rows; and
            \item[(b)] $\cols(X) \subseteq \cC$.
        \end{itemize}
\end{claim}
\begin{proof}
    Let $\mathbf{G}$ denote the generator matrix of $\cC$, so $\mathbf{G}$ is random.  Then using the definition of $\cC$, we may re-write the probability as
    \[ \PR{ \exists X, \Lambda \suchthat \text{(a), (b)} } = \PR{ \exists \Lambda, \exists z \in \F^{\mrk{\ell \times} n} \text{ s.t. } \forall v \in \Lambda, \dist(z, \mathbf{G}v) \leq 1 - \eps+\eta }. \]
    Let $\Lambda' \subseteq \Lambda$ be a set of $d$ linearly independent vectors in $\Lambda$, so 
    \[ \PR{ \exists X, \Lambda \suchthat \text{ (a),(b) } } \leq \PR{ \exists \Lambda', \exists z \in \F^{\mrk{\ell \times} n} \text{ s.t. } \forall v \in \Lambda', \dist(z, \mathbf{G}v) \leq 1 - \eps +\eta}. \]
    Now, the vectors $\mathbf{G}v$ are independent uniformly random vectors for $v \in \Lambda'$.  For a fixed $z$ and such a vector $\mathbf{G}v$, we have
    \begin{align*} 
    \PR{ \dist(z, \mathbf{G}v ) \leq 1 - \eps+\eta } 
        &= \sum_{i=0}^{(1-\eps+\eta)n}\binom{n}{i}(\ell/q)^{n-i}(1-\ell/q)^i \\
        &= \inparen{ \frac{\ell}{q}}^n\cdot \sum_{i=0}^{(1-\eps+\eta)n}\binom{n}{i}(q/\ell-1)^i\\
     &= \inparen{ \frac{\ell}{q} }^n \cdot \vol_{q/\ell}^n(1 - \eps+\eta) \\
     &\leq q^{-n(\arm{1 - H_{q\mrk{/\ell}}(1 - \eps+\eta) \mrk{ - \log_q(\ell)} }) } 
\end{align*}
    Now, using independence and a union bound over all $q^{n\mrk{\ell}}$ choices for $z$ and all ${q^{Rn} \choose d} \leq q^{Rnd}$ choices of $\Lambda$, we can bound this by
    \begin{align*}
    \PR{ \exists \Lambda', \exists z \text{ s.t. } \forall v \in \Lambda', \dist( z, \mathbf{G} v ) \leq 1 -  \eps+\eta } 
                                                  &\leq q^{n\mrk{\ell}} \cdot { q^{Rn} \choose d } \cdot q^{-dn(\arm{1 - H_{q\mrk{/\ell}}(1-\eps+\eta)\mrk{-\log_q{\ell}}})} \\
                      &\leq q^{n (Rd - d(\arm{1 - H_{q\mrk{/\ell}}(1 -\eps+\eta)\mrk{-\log_q{\ell}}}) + \mrk{\ell}). }
\end{align*}
    Thus, as long as $R < \arm{1 - H_{q\mrk{/\ell}}(1 - \eps+\eta) \mrk{-\log_q{\ell}}} - \xi$, and $d > 2\mrk{\ell}/\xi$, this probability is at most $q^{-\xi d n/2 }.$
\end{proof}

Let $\mathcal{E}$ denote the event that the favorable case of Claim~\ref{claim:vanilla} occurs: that is, that there are no $X,\Lambda$ that satisfy (a) and (b) above.
We will show that if $\mathcal{E}$ occurs, then we are in business. 

\begin{claim}\label{claim:extendtoavg}
Suppose that $\mathcal{E}$ occurs.  Then for all $d > d_0$, if there exists an $(L,d,\eps,\ell)$-average-bad set $X$ so that $\cols(X)\subseteq\cC$, then there exists a $\Lambda'$ such that $(X,\Lambda')$ is  $(L_0, d, \eps_0, \ell)$-average-bad and $\dim(\Lambda')\le d_0$. 
\end{claim}
\begin{proof}  Pick $X \subseteq \F^d$ so that $X$ is full-rank and suppose that $\cols(X) \subseteq \cC$.
    Suppose that there is some set $\Lambda \subset \F^{d}$ of size $L$ and dimension $d$ so that
    \[ \sum_{x \in X} \mrk{\pl^{(\ell)}_x}(\Lambda) \geq \eps n, \]
    that is, so that $(X,\Lambda)$ is $(L,d,\eps\ml)$-average-bad.  As in Proposition~\ref{prop:plrs}, this is equivalent to the condition that there is some $z \in \F^{\mrk{\ell\times}n}$ so that
    \[\frac{1}{L} \sum_{v \in \Lambda} \dist( z, \bX v ) \leq 1 - \eps. \]
    Let $\Lambda_+ \subseteq \Lambda$ denote the set
    \[ \Lambda_+ = \inset{v \in \Lambda \suchthat \dist( \bX v, z ) \leq 1 - \eps+\eta}, \]
        and let $\Lambda_- = \Lambda \setminus \Lambda_+.$
        Then by our assumption that $\mathcal{E}$ holds, $\Lambda_+$ must have dimension less than $2\mrk{\ell}/\nmrk{\xi}$.
    \begin{subclaim} $|\Lambda_+| \geq L_0$.\end{subclaim}
        \begin{proof}
                We have
    \begin{align*}
        (1 - \eps) L &\geq \sum_{v \in \Lambda} \dist( z, \bX v ) \\
            &= \sum_{v \in \Lambda_+ } \dist( z, \bX v ) + \sum_{v \in \Lambda_- } \dist(z, \bX v) \\
        &\geq \sum_{v \in \Lambda_{-}} \dist( z, \bX v ) \\
         &\geq (1 - \eps+\eta) (L - |\Lambda_+| ) 
    \end{align*}
    and rearranging we have
    \begin{align*}
        \frac{ \eta }{ 1 - \eps+\eta } &\leq \frac{ |\Lambda_+|} {L},
\end{align*}
which implies that $|\Lambda_+| \geq \frac{\eta L }{1-\eps+\eta}$ as claimed.
        \end{proof}
        Now, $\Lambda_+$ has size at least $\eta L /(1 - \eps + \eta) = L_0$ and dimension at most $2\mrk{\ell}/\xi = d_0$.  Further, it  has by definition that
        \[ \frac{1}{L} \sum_{v \in \Lambda_+} \dist( \bX v, z ) \leq 1 - \eps+\eta \]
hence
        \begin{equation}\label{eq:stillbad}
         \sum_{x \in X} \mrk{\pl^{(\ell)}_x}(\Lambda_+) \geq n(\eps-\eta). 
    \end{equation}
    Recalling that $\eps_0 = \eps - \eta$, we see that $(X, \Lambda_+)$ is $(L_0, d, \eps_0, \ell)$-average-bad, and $\Lambda_+$ has dimension at most $d_0$.
\end{proof}

Finally, we complete the proof of Lemma~\ref{lem:start}.
Assume that $X$ is $(L,d,\eps,\ell)$-average-bad with $\cols(X)\subseteq \cC$. Then 
 if $\cE$ occurs (which by Claim~\ref{claim:vanilla} is with probability at least $1 - q^{-\xi n/2}$), then Claim~\ref{claim:extendtoavg} shows there exists $(X, \Lambda)$ that is $(L_0, d, \eps_0, \ell)$-average-bad with $\dim(\Lambda) \leq d_0$.  Then by  Lemma~\ref{lem:projection}, these can be projected down to obtain $(X', \Lambda')$ that are $(L_0, d_0, \eps_0, \ell)$-average-bad, as desired.
    \end{proof}

Lemma~\ref{lem:start} implies that, in our quest to show that there are no bad sets $X$, we may consider only the $X$ that are $(L_0, d, \eps'\ml)$-bad, for $d \leq d_0$ and $\eps' \leq \eps_0$.  To do this, we will follow an inductive argument.

With the definitions of $d_0, L_0$, and $\eps_0$ as in \eqref{eq:naughts}, let
\[j_{\max} = \frac{\log(d_0/\dmax)}{\log \inparen{ \frac{1}{1 - \zeta}} } \qquad \text{for}\qquad  \dmax := \frac{2}{\zeta} \inparen{ 1 + \log_q(2/\zeta)}. \]
We suppose without loss of generality (it will affect only the lower-order terms) that $j_{\max}$ is an integer.
Further define, for $j = 1, \ldots, j_{\max}$, 
\begin{equation}\label{eq:js}
	d_j = (1-\zeta)^j\cdot d_0 \qquad L_j = \alpha^{j} \inparen{ \eps_0(1 - \zeta) }^{2j/\gamma} L_0 \qquad \eps_j = (1 - \gamma)^j \eps_0,
 \end{equation}
where
\[ \gamma := \frac{\zeta}{j_{\max}}, \qquad \alpha := \min \inset{ \frac{1}{2 d_0 q^2}, \frac{ \zeta }{ e q } }. \] 
Notice that our definition of $\dmax$ is consistent between our two definitions.
\begin{proposition}\label{prop:base}
	Suppose that 
	\[ L_0 > q^{\dmax}\inparen{ \frac{1}{\alpha} }^{j_{\max}} \inparen{ \frac{1}{\eps_0(1 -\zeta)} }^{2j_{\max}^2/\zeta}. \]
	Then there are no $X \subseteq \F^{j_{\max}}$ that are $(L_{j_{\max}}, d_{j_{\max}}, \eps_{j_{\max}}\ml)$-average-bad.
\end{proposition}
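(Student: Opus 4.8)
The plan is to follow the base case of the zero-error argument (Proposition~\ref{prop:easybase}): the claim will reduce to the trivial fact that a \emph{set} of vectors in a space of dimension $d$ has at most $q^d$ elements, so that condition (d) of average-badness (Definition~\ref{def:bad}) never even enters. (Note the statement should read $X \subseteq \F^{d_{j_{\max}}}$ rather than $\F^{j_{\max}}$.)

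First I would unwind the definitions to check that the hypothesis on $L_0$ is precisely equivalent to $L_{j_{\max}} > q^{d_{j_{\max}}}$. From \eqref{eq:js}, $L_{j_{\max}} = \alpha^{j_{\max}}\inparen{\eps_0(1-\zeta)}^{2j_{\max}/\gamma}L_0$, and since $\gamma = \zeta/j_{\max}$ the exponent $2j_{\max}/\gamma$ equals $2j_{\max}^2/\zeta$; multiplying the assumed lower bound on $L_0$ through by $\alpha^{j_{\max}}\inparen{\eps_0(1-\zeta)}^{2j_{\max}^2/\zeta}$ yields $L_{j_{\max}} > q^{\dmax}$. Moreover, by the choice $j_{\max} = \log(d_0/\dmax)/\log\inparen{1/(1-\zeta)}$ we have $(1-\zeta)^{j_{\max}} = \dmax/d_0$, so $d_{j_{\max}} = (1-\zeta)^{j_{\max}}d_0 = \dmax$; hence the hypothesis reads exactly $L_{j_{\max}} > q^{d_{j_{\max}}}$.

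Second, suppose toward a contradiction that some $X \subseteq \F^{d_{j_{\max}}}$ were $(L_{j_{\max}}, d_{j_{\max}}, \eps_{j_{\max}}\ml)$-average-bad, and let $\Lambda \subseteq \F^{d_{j_{\max}}}$ be a witness. By Definition~\ref{def:bad}(b), $|\Lambda| \geq L_{j_{\max}} > q^{d_{j_{\max}}}$; but $\Lambda$ is a set of vectors in $\F^{d_{j_{\max}}}$, which has only $q^{d_{j_{\max}}}$ elements, so $|\Lambda| \leq q^{d_{j_{\max}}}$, a contradiction. Hence no such $X$ exists. There is no genuine obstacle in this proposition; the only thing to be careful about is the bookkeeping that makes the indexed parameters $L_j, d_j, \eps_j, \gamma, \alpha, \dmax$ line up so that the stated hypothesis on $L_0$ is exactly the tight condition $L_{j_{\max}} > q^{d_{j_{\max}}}$.
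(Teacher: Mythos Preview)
Your proposal is correct and follows essentially the same approach as the paper's own proof: both reduce to showing that the hypothesis on $L_0$ forces $L_{j_{\max}} > q^{\dmax} = q^{d_{j_{\max}}}$, so no witness $\Lambda \subseteq \F^{d_{j_{\max}}}$ of size $\geq L_{j_{\max}}$ can exist. Your unwinding of the parameters (in particular the observation that $2j_{\max}/\gamma = 2j_{\max}^2/\zeta$ and $d_{j_{\max}} = \dmax$) matches the paper's computation, and your note about the typo $\F^{j_{\max}}$ versus $\F^{d_{j_{\max}}}$ is well taken.
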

\begin{proof}
We will show that there are no bad $X$ that are $(L_{j_{\max}}, d_{j_{\max}}, \eps_{j_{\max}}\ml)$-average-bad, simply because $L_{j_{\max}} > q^{\dmax}$; thus there are no sets $\Lambda$ that could possibly witness this badness.
More precisely, any set $\Lambda \subseteq \F^{\dmax}$ has
$|\Lambda| \leq q^{\dmax}$, and so if $\Lambda$ had size at least $L_{\max}$ we would have
\begin{align*}
q^{\dmax} \geq |\Lambda| &\geq L_{j_{\max}}\\
& \geq L_0 \cdot \alpha^{j_{\max}} \inparen{ \eps_0(1 - \zeta)}^{2j_{\max}/\gamma} \\
&=
 L_0 \cdot \alpha^{j_{\max}} \inparen{ \eps_0 (1 - \zeta)}^{2j_{\max}^2/\zeta} .
\end{align*}
Our choice of $L_0$ shows that this does not happen.
\end{proof}

With the base case out of the way, we proceed by induction, decrementing $j$.  We maintain the following inductive hypothesis.

\begin{quote}
\textbf{Hypothesis$(j)$}:
There is no
$(L_{j}, d_{j}, \eps_{j}\ml)$-average-bad set $X$ so that $\cols(X) \subseteq \cC$. 
\end{quote}
We've already established \textbf{Hypothesis}$(j_{\max})$, and the following lemma will allow us to bootstrap this up to \textbf{Hypothesis}$(0)$ with high probability.

\begin{lemma}\label{lem:mainavg} \nmrk{Let $\barmu\in\inset{0,\centermu}$ and $\beta=\arm{(q+1)}^{1/(d(1-\zeta))}$ if $\barmu=0$ and $\beta=2$ otherwise.} Let $j \in \inset{0,\ldots,j_{\max}}$.  There is a constant $C > 0$ so that the following hold.  Suppose that
\[ R \le ( \eps_j - \nmrk{\beta\cdot\centermu-\barmu})(1 - 4\zeta) \qquad \text{and} \qquad q \ge  \ell^{2/\zeta}. \] 
	Then with probability at least $1 - q^{ -C R \zeta \cdot n }$, at least one of the following holds:
\begin{itemize}
	\item	\textbf{Hypothesis}$(j)$ holds; or
	\item	 there is some $i > j$ so that \textbf{Hypothesis}$(i)$ does \em not \em hold.
\end{itemize}
\end{lemma}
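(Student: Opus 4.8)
The plan is to follow the blueprint of Lemma~\ref{lem:induct} from the warm-up in Section~\ref{sec:easy}: given a witness $\Lambda$ for the average-badness of an $X$ with $\cols(X)\subseteq\cC$, split into the cases where $\Lambda$ is \emph{good} (meaning $\sigma_p(\Lambda)$ meets the upper bounds in the hypothesis of Lemma~\ref{lem:sigma} for all $p$) and where it is \emph{bad}. I would then show: (1) with probability at least $1-q^{-CR\zeta n}$ there is no $(L_j,d_j,\eps_j,\ell)$-average-bad $X$ with $\cols(X)\subseteq\cC$ admitting a good witness; and (2) if there \emph{is} such an $X$ whose witness is bad, then \textbf{Hypothesis}$(j+1)$ fails. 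Together these give the lemma (with $i=j+1$). The two new features relative to Section~\ref{sec:easy} are that the crude union-bound/Markov step there becomes a Chernoff-type estimate, and that average-badness is not automatically inherited by subsets.

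For (1), fix a good $\Lambda$ and draw the rows $x_1,\dots,x_n$ of $X$ independently and uniformly from $\F^{d_j}$. The crucial input is the moment bound of Claim~\ref{claim:smallp} — the $\pl^{(\ell)}$-analog of Claim~\ref{claim:smallp1} — which converts the smallness of $\sigma_p(\Lambda)$ into smallness of $\EE_x\big(\pl^{(\ell)}_x(\Lambda)\big)^p$ for all $p$ up to roughly $(1-\zeta)d_j$ (with the complementary range of $p$ handled by the second case of Lemma~\ref{lem:sigma}). From these moments, $\pl^{(\ell)}_x(\Lambda)$ concentrates below about $\beta\cdot\ell/q+\overline{\mu}$, the two admissible choices of $\overline{\mu}$ (and the matching $\beta$) recording a crude bound valid always and a sharper one that improves as $d_j$ grows. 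Since $\sum_{x\in X}\pl^{(\ell)}_x(\Lambda)$ is a sum of $n$ independent bounded variables, I would feed the moment bounds into an exponential-moment (Chernoff) estimate to obtain
\[
\PR{\textstyle\sum_{x\in X}\pl^{(\ell)}_x(\Lambda)\ge\eps_j n}\le q^{-\theta n},\qquad \theta\gtrsim (\eps_j-\beta\ell/q-\overline{\mu})\,d_j .
\]
There are at most $q^{d_jL_j}$ good $\Lambda\subseteq\F^{d_j}$, at most $q^{d_jn-\theta n}$ full-rank $X$ that are average-bad with a given witness, and $\PR{\cols(X)\subseteq\cC}\le q^{-(1-R)d_jn}$ for each fixed full-rank $X$; multiplying, using the hypothesis $R\le(\eps_j-\beta\ell/q-\overline{\mu})(1-4\zeta)$, and taking $n$ large relative to $L_j,d_j,j_{\max}$ bounds the bad event by $q^{-CR\zeta n}$.

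For (2), suppose $X$ is $(L_j,d_j,\eps_j,\ell)$-average-bad with $\cols(X)\subseteq\cC$ and a bad witness $\Lambda$. By Proposition~\ref{prop:plrs} there is a ``center'' $z\in\F^{\ell\times n}$ with $\frac1{|\Lambda|}\sum_{v\in\Lambda}\dist(\bX v,z)\le 1-\eps_j$. As in Claim~\ref{claim:extendtoavg}, a reverse-Markov argument lets me pass to $\Lambda_+=\{v\in\Lambda:\dist(\bX v,z)\le 1-\eps_{j+1}\}$, losing only a controlled factor in the size (so $|\Lambda_+|\ge L_{j+1}/\alpha$), with the key feature that \emph{every} $v\in\Lambda_+$ agrees with the \emph{fixed} center $z$ on at least $\eps_{j+1}n$ coordinates — a pointwise property, hence inherited by every subset of $\Lambda_+$. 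Now either $\Lambda_+$ is good, in which case it is a good witness of exactly the kind ruled out (with high probability) in (1) applied with the level-$(j+1)$ parameters; or $\Lambda_+$ is bad, and Lemma~\ref{lem:sigma} produces $\Gamma\subseteq\Lambda_+$ with $|\Gamma|\ge\alpha|\Lambda_+|\ge L_{j+1}$ and $\dim(\spn(\Gamma))\le(1-\zeta)d_j=d_{j+1}$. Since $\Gamma\subseteq\Lambda_+$, the same $z$ certifies that $(X,\Gamma)$ is $(L_{j+1},d_{j+1},\eps_{j+1},\ell)$-average-bad, and Lemma~\ref{lem:projection} projects this to $\F^{d_{j+1}}$ while keeping $\cols(\cdot)\subseteq\cC$, so \textbf{Hypothesis}$(j+1)$ fails. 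The geometric schedules $\eps_{j+1}=(1-\gamma)\eps_j$ and $L_{j+1}=\alpha(\eps_0(1-\zeta))^{2/\gamma}L_j$ are calibrated precisely so that these per-step losses are absorbed and the total loss over all $j_{\max}$ levels is only a $(1-O(\zeta))$ factor.

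The step with no analog in Section~\ref{sec:easy}, and the one I expect to be the main obstacle, is exactly this non-monotonicity: the large low-dimensional set that Lemma~\ref{lem:sigma} returns need not itself be average-bad, so the clean ``every subset of a witness is a witness'' used in Claim~\ref{claim:easyversion} is false here. The device above — restrict to the center-agreeing part $\Lambda_+$ \emph{before} invoking Lemma~\ref{lem:sigma}, so that ``agreeing with the center'' becomes a hereditary property of each message vector — is what I would use, and making it mesh with the good/bad dichotomy (re-running the count of (1) on $\Lambda_+$ rather than $\Lambda$, and checking its size/agreement parameters are still adequate) is the delicate part. The second place needing genuine work is quantitative: checking that the Chernoff exponent $\theta$ in (1) really dominates $(1-R)d_jn$ across the whole range $\dmax\le d_j\le d_0$ and for both choices of $\overline{\mu}$ — this is where the assumption $q\ge\ell^{2/\zeta}$, the exact form of $\beta$, and the precise rate bound all get used.
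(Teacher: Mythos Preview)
Your plan for (1) --- the good-witness case --- matches the paper: a Chernoff-style bound on $\sum_x\pl_x^{(\ell)}(\Lambda)$ driven by the moment estimates of Claims~\ref{claim:smallp}--\ref{claim:bigp}, then a union bound over $\Lambda$ and $X$ (this is Lemma~\ref{lem:goodlambda}).

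For (2), however, your route diverges from the paper's. The paper does \emph{not} pass to a pointwise-agreeing $\Lambda_+$; it runs an inner ``increment'' induction (Lemma~\ref{lem:badlambdas}). Assuming for contradiction that the failure conclusion (B) does not occur, Hypothesis$(j{+}1)$ holds, so (Claim~\ref{claim:stillbad}) every set $\Gamma$ with $\dim(\Gamma)\le d_{j+1}$ and $|\Gamma|\ge L_{j+1}$ satisfies $\sum_x\pl_x^{(\ell)}(\Gamma)\le(1-\gamma)\eps_j n$. Hence removing such a $\Gamma$ from $\tilde\Lambda$ \emph{increases} the average plurality of the remainder by a factor $(1+\gamma\alpha)$ (Claim~\ref{claim:stillstillbad}). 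Iterating, the plurality of $\tilde\Lambda$ grows geometrically while its size shrinks by $(1-\alpha)$ per step, until either $\tilde\Lambda$ becomes good --- and then (1) applies with the \emph{unchanged} agreement parameter $\eps_j$ --- or the plurality exceeds $n$, a contradiction. The schedule $L_{j+1}=\alpha(\eps_0(1-\zeta))^{2/\gamma}L_j$ is calibrated for this inner loop, not for a single reverse-Markov step.

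Your $\Lambda_+$ device is simpler (no inner iteration) and the hereditary-agreement observation is correct, but it has a real gap exactly where you flag it. The reverse-Markov step necessarily backs off from $\eps_j$ to some $\eps_{j+1}<\eps_j$ (thresholding at $1-\eps_j$ would give no size lower bound on $\Lambda_+$), so when $\Lambda_+$ turns out to be good, re-running (1) requires $R\le(\eps_{j+1}-\beta\cdot\ell/q-\barmu)(1-4\zeta)$, strictly stronger than the lemma's hypothesis at level $j$. The paper's increment argument sidesteps this precisely because peeling off the low-dimensional pieces keeps (indeed, boosts) the agreement at $\eps_j$, so Lemma~\ref{lem:goodlambda} can be invoked with the hypothesis as stated. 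Your version can likely be patched by assuming the rate bound at level $j{+}1$ as well --- harmless for the theorem, since that bound is verified for all $j$ anyway --- but it does not prove Lemma~\ref{lem:mainavg} as written.
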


Before we prove the lemma, we show how it suffices to complete the proof of Theorem~\ref{thm:avgrad}. The base case, Proposition~\ref{prop:base}, implies that 
\textbf{Hypothesis$(j_{\max})$} holds; there is no $(L_{j_{\max}}, d_{j_{\max}}, \eps_{j_{\max}}\ml)$-bad set with columns in $\cC$.

First, we establish that the hypotheses of Lemma~\ref{lem:mainavg} hold for all $j$.  
We recall that for $j = j_{\max}$, we have
\[ \eps_{j_{\max}} = (1 - \gamma)^{j_{\max}} \eps_0 \geq \eps_0(1 - \zeta) = (\eps - \eta)(1 - \zeta). \]
Thus, for all $j$, 
\[ \eps \leq \inparen{ \frac{\eps_j}{1 - \zeta} } + \eta.\]
Our first condition on $R$ from the statement of Theorem~\ref{thm:avgrad} implies that
\begin{align*}
R &< (\eps - \nmrk{\beta\cdot\centermu-\barmu})(1 - 5\zeta) - \eta \\
&\leq \inparen{ \eta + \frac{ \eps_j }{1 - \zeta} -  \nmrk{\beta\cdot\centermu-\barmu} }(1 - 5\zeta) - \eta\\
&\leq ( \eta + \eps_j -  \nmrk{\beta\cdot\centermu-\barmu}) ( 1 - 4\zeta ) - \eta \\
&\leq (\eps_j -  \nmrk{\beta\cdot\centermu-\barmu})(1 - 4\zeta),
\end{align*}
as needed by Lemma~\ref{lem:mainavg}.  The second condition on $R$ in Theorem~\ref{thm:avgrad} also immediately implies that the condition on $R$ in Lemma~\ref{lem:start} is satisfied.
The choice of $q$ similarly immediately satisfies the requirements of both Lemma~\ref{lem:mainavg} and Lemma~\ref{lem:start}. \nmrk{Further, the choice of $\beta$ in Theorem~\ref{thm:avgrad} implies the choice of $\beta$ in Lemma~\ref{lem:mainavg}-- in particular, note that $\arm{(q+1)}^{1/(d(1-\zeta))}\le \arm{(q+1)}^{1/(d_{\max}(1-\zeta))}\le \arm{(q+1)}^{\zeta/(2(1-\zeta))}$.}
Finally, from the assumption on $L$ and the definition \eqref{eq:naughts} of $L_0$, we have (for an appropriate choice of $C'$)
\begin{align*}
L_0 &\geq \exp_{q\nmrk{\cdot \ell/\xi}}\inparen{ \frac{ C' \log(\ell \zeta/\xi ) }{\zeta } } \cdot \inparen{\frac{1}{ \eps - \eta } }^{C' \log^2(\ell \zeta/\xi )/\zeta^3 }\\
&>
q^{\dmax} \inparen{ \frac{1}{\alpha} }^{\lg(d_0/\dmax)/\zeta} \inparen{ \frac{ 1}{\eps_0(1 - \zeta)} }^{\lg^2(d_0/\dmax)/\zeta^3} \\
&= q^{\dmax} \inparen{ \frac{1}{\alpha}}^{j_{\max}} \inparen{ \frac{ 1 }{ \eps_0 (1 - \zeta) } }^{ 2j_{\max}^2/\zeta },
\end{align*}
which means that Proposition~\ref{prop:base} holds.

Now, we proceed inductively.  For every $j$, Lemma~\ref{lem:mainavg} shows that
\begin{align*}
	\PR{ \mbox{\textbf{Hypothesis}($j$) does not hold} } &\leq q^{-CR\zeta n} + \PR{ \exists i > j, \mbox{\textbf{Hypothesis}($i$) does not hold} }\\
	&\leq q^{-CR\zeta n} + \sum_{i > j} \PR{ \mbox{\textbf{Hypothesis}($i$) does not hold } }.
\end{align*}
Using the fact (Proposition~\ref{prop:base}) that
\[ \PR{ \text{\textbf{Hypothesis}}(j_{\max}) } = 0, \]
we conclude that
\[ \PR{ \text{ \textbf{Hypothesis}$(0)$ does not hold} } \leq j_{\max}\cdot q^{-C R \zeta n } \leq q^{-C' R\zeta n}, \]
using the fact that $n$ is sufficiently large compared to $d_0$.  
Finally, Lemma~\ref{lem:start} implies that
\[ \PR{ \exists d \leq Rn, X \in \F^d, \cols(X) \subseteq \cC, X \text{ is } (L, d, \eps\ml)\text{-average-bad} } \leq q^{-\xi n/2 } + q^{- C' \zeta R n }. \] 
Applying Proposition~\ref{prop:plrs} completes the proof of Theorem~\ref{thm:avgrad}, modulo the proof of Lemma~\ref{lem:mainavg}.

\subsection{Proof of Lemma~\ref{lem:mainavg}}

We wish to show that, with high probability, either \textbf{Hypothesis}$(j)$ holds, or else \textbf{Hypothesis}$(i)$ does not hold for some $i > j$.
To do this, we study $\sigma_p(\Lambda)$, as in Definition~\ref{def:sigma}.  Lemma~\ref{lem:sigma} implies that if $\sigma_p(\Lambda)$ is large for some $p$, then we can find a large low-dimensional subset of $\Lambda$; this will mean that it's likely that \textbf{Hypothesis}$(i)$ does not hold for some $i > j$.  On the other hand, if $\sigma_p(\Lambda)$ is always small, then the probability of a bad set $X$ being contained in $\cC$ with witness $\Lambda$, is also small, and hence \textbf{Hypothesis}$(j)$ is likely to hold.
\paragraph{Notation:} For notational simplicity, we will drop the subscripts $j$ and $j+1$ throughout this section, and we will adopt the notation
$ d_j \to d, \eps_j \to \eps, L_j \to L.$  This overloads the original parameters $\eps$ and $L$ in the statement of Theorem~\ref{thm:avgrad}.
	Similarly, for this section only we denote $L_{j+1}, d_{j+1}, \eps_{j+1}$ by $L', d', \eps'$, so $L' = \alpha L (\eps_0(1 -\zeta))^{2/\gamma}$, $d' = d(1 -\zeta)$, and $\eps' = (1 - \gamma)\eps$.
\newline
	
Say that $\Lambda \subseteq \F^d$ is \textbf{good} if 
 \[\sigma_p(\Lambda) \le \begin{cases} 
\frac{ 1 + 1/q }{q^p} & p \leq (1 - \zeta)d \\
\frac{d}{q^{d(1 - 2\zeta)} \cdot \ell^p} & (1 - \zeta)d < p \leq d   \end{cases} \]
In order to count the number of bad $X$'s, we first bound the probability of the event that a random $X$ is bad with witness $\Lambda$, assuming that $\Lambda$ is good; later, we will handle the case for bad witnesses $\Lambda$.


\begin{lemma}\label{lem:prbadX}
\nmrk{
Let $\barmu\in \inset{ 0, \centermu}$. Define
\[\beta=\begin{cases}
\arm{(q+1)}^{\frac{1}{d(1-\zeta)}} & \text{ if } \barmu=0\\
2 & \text{ if } \barmu=\centermu
\end{cases}.
\]
}
Fix $\eps > \mu$ and set
	 $\Delta = \eps - \nmrk{\barmu}$.
	Suppose that $\Lambda \subseteq \F^d$ is good and has size $L$, and
	suppose that $X \subseteq \F^{d}$ of size $n$ is drawn uniformly at random with replacement.  Choose $\lambda>0$. 
Then
	\begin{align*}
		\mathbb{P}_X \inset{ (X,\Lambda) \text{ is } (L, d, \eps\ml)\text{-average-bad} } 
&\leq \inparen{ \exp\inparen{\lambda \nmrk{\inparen{ \beta\cdot\centermu-\eps+\barmu  } }} +\frac{ qd }{q^{d(1 - 2\zeta)}}\cdot  \exp(\lambda(1 - \Delta))}^n \\
&=: (p_0 + p_1)^n. 
\end{align*}
\end{lemma}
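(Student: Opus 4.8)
The plan is a Chernoff / moment-generating-function argument over the rows of $X$. Since $\Lambda$ is fixed and the $n$ rows $x_1,\dots,x_n$ of $X$ are i.i.d.\ uniform on $\F^d$, the values $Y_i := \pl_{x_i}^{(\ell)}(\Lambda)$ are i.i.d., and each lies in $[\ell/q,1]$: the upper bound is clear, and the lower bound holds because the $\ell$ largest of the at most $q$ symbol-counts of $\Lambda$ carry at least an $\ell/q$ fraction of $|\Lambda|$. Since being $(L,d,\eps,\ell)$-average-bad entails condition (d) of Definition~\ref{def:bad}, namely $\sum_i Y_i \ge \eps n$,
\[
\mathbb{P}_X\inset{ (X,\Lambda)\ \text{is}\ (L,d,\eps,\ell)\text{-average-bad} }
\ \le\ \PR{ \textstyle\sum_{i=1}^n Y_i \ge \eps n }
\ \le\ \inparen{ e^{-\lambda\eps}\,\E{e^{\lambda Y}} }^{n}
\]
for every $\lambda>0$, where $Y$ is a generic copy. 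So it suffices to prove $e^{-\lambda\eps}\,\E{e^{\lambda Y}} \le p_0 + p_1$.

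To do this I would set $Z := Y - \barmu$, which is $\ge 0$ since $Y = \pl_x^{(\ell)}(\Lambda) \ge \ell/q \ge \barmu$, write $\E{e^{\lambda Y}} = e^{\lambda\barmu}\,\E{e^{\lambda Z}} = e^{\lambda\barmu}\sum_{p\ge0}\frac{\lambda^p}{p!}\E{Z^p}$, and split the series at $P := (1-\zeta)d$. For the tail $p > P$ I would bound, for $P < p \le d$,
\[
\E{Z^p} \ \le\ \E{ (\pl_x^{(\ell)}(\Lambda))^p } \ \le\ q\,\ell^{p}\,\sigma_p(\Lambda)\ \le\ \frac{qd}{q^{d(1-2\zeta)}},
\]
where the middle inequality is the moment computation sketched below and the last uses the ``large-$p$'' part of goodness, $\sigma_p(\Lambda) \le d\,q^{-d(1-2\zeta)}\ell^{-p}$; for $p > d$ the bound $Z \le 1$ gives $\E{Z^p} \le \E{Z^d} \le \tfrac{qd}{q^{d(1-2\zeta)}}$. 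Summing, $\sum_{p>P}\frac{\lambda^p}{p!}\E{Z^p} \le \frac{qd}{q^{d(1-2\zeta)}}\,e^{\lambda}$. For the head $p \le P$ I would invoke the key moment bound (Claim~\ref{claim:smallp}, the top-$\ell$ generalization of Claim~\ref{claim:smallp1}): when $\Lambda$ is good,
\[
\E{ (\pl_x^{(\ell)}(\Lambda) - \barmu)^{p} } \ \le\ \inparen{ \beta\cdot\tfrac{\ell}{q} }^{p} \qquad \text{for } p \le P,
\]
whence $\sum_{p\le P}\frac{\lambda^p}{p!}\E{Z^p} \le e^{\lambda\beta\ell/q}$. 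Combining, $\E{e^{\lambda Y}} \le e^{\lambda(\barmu + \beta\ell/q)} + \frac{qd}{q^{d(1-2\zeta)}}\,e^{\lambda(1+\barmu)}$, and multiplying through by $e^{-\lambda\eps}$ and using $1 + \barmu - \eps = 1 - \Delta$ yields exactly $p_0 + p_1$.

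For the tail moment computation: writing $n_\alpha(x) = \inabs{\inset{v\in\Lambda : \ip xv = \alpha}}$ and using convexity of $t\mapsto t^p$ on the $\ell$ largest of the $n_\alpha$'s, $(\pl_x^{(\ell)}(\Lambda))^p \le \ell^{p-1}L^{-p}\sum_{\alpha}n_\alpha(x)^p$; expanding the power and swapping sum and expectation,
\[
\E{ \textstyle\sum_\alpha n_\alpha(x)^p } = \sum_{v_1,\dots,v_p\in\Lambda} \mathbb{P}_x\inset{ \ip x{v_1} = \dots = \ip x{v_p} } = \sum_{v_1,\dots,v_p} q^{-\dim(\spn(v_2-v_1,\dots,v_p-v_1))} \le q\,L^p\,\sigma_p(\Lambda),
\]
the last step since $\dim(\spn(v_2-v_1,\dots,v_p-v_1)) \ge \dim(\spn(v_1,\dots,v_p)) - 1$; this gives $\E{(\pl_x^{(\ell)}(\Lambda))^p} \le q\,\ell^{p}\,\sigma_p(\Lambda)$, and the head bound comes from the same kind of calculation combined with the ``small-$p$'' part of goodness, $\sigma_p(\Lambda) \le (1+1/q)q^{-p}$. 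The centering by $\barmu\in\inset{0,\ell/q}$ is what distinguishes the two values of $\beta$: for $\barmu = \ell/q$ one works with the mean-zero variables $\ind{\ip xv\in S} - \ell/q$ and $\beta = 2$ suffices; for $\barmu = 0$ the bound on $\E{(\pl_x^{(\ell)}(\Lambda))^p}$ must be essentially tight (no constant-factor slack), and $\beta = (q+1)^{1/(d(1-\zeta))}$ is chosen precisely so that $\beta^{P} = q+1$, which is the slack left by the affine-versus-linear dimension loss at the extreme exponent $p = P$, with $\beta^{p} = (q+1)^{p/P}$ growing only gradually for $p < P$, as it must.

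The routine parts here are the Chernoff step and the summation of the two truncated exponential series. I expect the main obstacle to be the head moment bound (Claim~\ref{claim:smallp}), in the uncentered case $\barmu = 0$: the top-$\ell$ plurality is a maximum over the $\binom q\ell$ size-$\ell$ sets $S$, so one must argue at the level of the sorted symbol-counts rather than union-bounding over $S$ (which would cost an unaffordable $q^\ell$), and one needs $\E{(\pl_x^{(\ell)}(\Lambda))^p} \le (\beta\ell/q)^p$ for \emph{every} $p \le P$, not merely $p = P$ --- essentially near-optimal concentration of $\pl_x^{(\ell)}(\Lambda)$ about $\ell/q$ with exactly the right power of $q$ in the error term --- which requires careful control of the contributions of the low-dimensional configurations $(v_1,\dots,v_p)$ to the moments.
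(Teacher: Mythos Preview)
Your proposal is correct and follows essentially the same Chernoff-plus-moment-split argument as the paper: the i.i.d.\ reduction, the power-series expansion of the MGF, the split at $P=(1-\zeta)d$, the tail bound via $\E{(\pl_x^{(\ell)}(\Lambda))^p}\le q\ell^p\sigma_p(\Lambda)$ combined with the large-$p$ goodness hypothesis, and the head bound via the centered/uncentered moment estimates are all exactly what the paper does.

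The one place where your anticipated route diverges from the paper's is the uncentered head bound ($\barmu=0$). You flag this as the main obstacle and expect it to require ``careful control of the contributions of the low-dimensional configurations $(v_1,\dots,v_p)$ to the moments.'' The paper sidesteps any such fine analysis with a one-line Jensen trick: since $t\mapsto t^{p/P}$ is concave for $p\le P$,
\[
\E{(\pl_x^{(\ell)}(\Lambda))^p}\ \le\ \bigl(\E{(\pl_x^{(\ell)}(\Lambda))^{P}}\bigr)^{p/P}\ \le\ \bigl(q\ell^{P}\sigma_{P}(\Lambda)\bigr)^{p/P}\ \le\ \bigl((q+1)(\ell/q)^{P}\bigr)^{p/P}\ =\ (\beta\,\ell/q)^{p},
\]
using only the small-$p$ goodness bound $\sigma_{P}(\Lambda)\le(1+1/q)q^{-P}$ at the single extreme exponent $p=P$. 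This is precisely why $\beta$ is defined the way it is, as you correctly observed, but the point is that no further configuration-by-configuration bookkeeping is needed: Jensen transfers the $p=P$ bound to all smaller $p$ for free. Your direct route might well be made to work, but the paper's is shorter.
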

\begin{proof}
	We will bound the probability of a random $X$ being bad with certificate $\Lambda$ by examining the moments of $\pl_x^{(\ell)}(\Lambda)$, for a random $x$.
	For a set $X \subset \F^d$ of size $n$ whose elements are drawn independently, uniformly at random from $\F^d$, we have that for all $\lambda > 0$,
	\begin{align}
		\PR{ \sum_{x \in X} \mrk{\pl^{(\ell)}_x}(\Lambda)  \ge n\eps }
		&= \PR{ \sum_{x \in X} (\mrk{\pl^{(\ell)}_x(\Lambda)} - \nmrk{\barmu}) \ge n\Delta} \notag\\
		&= \PR{ \exp\inparen{ \lambda \sum_{x \in X} (\mrk{\pl^{(\ell)}_x}(\Lambda) - \nmrk{\barmu}) } \ge \exp(\lambda n\Delta ) }\notag \\
		&\leq \exp(-\lambda n \Delta ) \EE_X \exp \inparen{ \lambda  \sum_{x \in X} (\mrk{\pl^{(\ell)}_x}(\Lambda) - \nmrk{\barmu}) }\notag \\
		&= \exp(-\lambda n \Delta ) \prod_{x \in X} \inparen{\EE_x \exp( \lambda (\mrk{\pl^{(\ell)}_x}(\Lambda) - \nmrk{\barmu})) }\notag\\
		&= \exp(-\lambda n \Delta) \prod_{x \in X}\inparen{ \sum_{p=0}^\infty \EE_x \frac{\lambda^p (\mrk{\pl^{(\ell)}_x}(\Lambda) - \nmrk{\barmu})^p }{p!} } \label{eq:genbound}
	\end{align}
	using Markov's inequality and the series expansion of $\exp(x)$ about $x = 0$. 
Now, we will bound these moments by expressions involving $\sigma_p(\Lambda)$, and use our assumptions of goodness to bound those.  First, we handle $p \leq d(1 - \zeta)$.
\begin{claim}\label{claim:smallp}
Suppose that $\Lambda$ is good. Then 
for all $p < d(1 - \zeta)$ 
we have
    \[ \EE_x \inparen{ \pl{(\ell)}_x(\Lambda) - \nmrk{\centermu} }^p \leq \nmrk{\inparen{2\cdot\centermu}^p} \]
\end{claim}
\begin{proof}
    Suppose $x \in \F^d$ is chosen uniformly at random.

\begin{subclaim}\label{claim:itsallmu} 
For any $p > 0$,
    \[ \EE_X (\pl_x^{(\ell)}(\Lambda) - \barmu)^p \leq \max \inset{ \ell^p \EE(\pl_x(\Lambda) - \barmu/\ell)^p, \barmu^p }. \]
\end{subclaim}\begin{proof}
    To see this, consider $(\pl_x^{(\ell)}(\Lambda) - \barmu)$ (without the expectation).  Notice that if $\pl_x^{(\ell)}(\Lambda) \leq \barmu$, then this quantity is negative, and thus is clearly at most $\barmu$.  On the other hand, if $\pl_x^{(\ell)}(\Lambda) > \barmu$, then this quantity is positive, and since
\[ \pl_x^{(\ell)}(\Lambda) \leq \ell \cdot \pl_x(\Lambda), \]
we have
\[ (\pl_x^{(\ell)}(\Lambda) - \barmu)^p \leq \ell^p (\pl_x(\Lambda) - \barmu/\ell)^p. \]
Then the claim follows.
\end{proof}
Next, we will show that the first element inside the maximum is small:
\begin{subclaim}\label{claim:itsallmutoo}  Suppose that $\Lambda$ is good.  For all $p \leq (1 - \zeta)d$,
\[     \ell^p \EE(\pl_x(\Lambda) - \nmrk{1/q})^p \leq \nmrk{\inparen{2\cdot\centermu}^p}.\] 
\end{subclaim}
\begin{proof}
We compute
    \begin{align}
 \ell^p \EE\inparen{ \pl_x(\Lambda) - \onebyq}^p
        & = \ell^p \EE_x \inparen{\frac{1}{L}\cdot \sum_{v \in \Lambda} \inparen{\ind{ \ip{x}{v} = \mrk{\argmax(x,\Lambda)} } - \onebyq}}^p \notag\\
        &=
        \frac{\ell^p}{L^p} \EE_x \sum_{v_1,\ldots,v_p \in \Lambda} \prod_{i=1}^p \inparen{\ind{ \ip{v_i}{x} = \argmax(x,\Lambda)} - \onebyq} \notag\\
        &\leq \frac{\ell^p}{L^p} \EE_x  \sum_{\alpha \in \F} \sum_{v_1,\ldots,v_p \in \Lambda} \prod_{i=1}^p \inparen{\ind{ \ip{v_i}{x} = \alpha } - \onebyq} \notag\\
        &= \ell^p \sum_{\alpha \in \F}  \EE_{v_1,\ldots,v_p \in \Lambda} \EE_x \prod_{i=1}^p \inparen{\ind{ \ip{v_i}{x} = \alpha } - \onebyq}.  \label{eq:expand}
\end{align}
Expanding the product, we bound the above by
\begin{align*}
         \ell^p \sum_{\alpha \in \F}  &\EE_{v_1,\ldots,v_p \in \Lambda} \EE_x \prod_{i=1}^p \inparen{\ind{ \ip{v_i}{x} = \alpha } - \onebyq}\\
        &= \ell^p \sum_{\alpha \in \F}  \EE_{v_1,\ldots,v_p \in \Lambda} \EE_x \sum_{S \subseteq [p]} \prod_{i \in S}\inparen{\ind{ \ip{v_i}{x} = \alpha }\cdot\inparen{-\onebyq}^{p - |S|} } \\
        &= \ell^p \sum_{\alpha \in \F} \sum_{S \subseteq [p]} \inparen{-\onebyq}^{p - |S|} \EE_{v_1,\ldots, v_p \in \Lambda} \EE_x \prod_{i \in S} \ind{ \ip{v_i}{x} = \alpha} \\
        &\leq q \ell^p \sum_{S \subseteq [p]} \inparen{-\onebyq}^{p - |S|} \EE_{v_1,\ldots, v_p \in \Lambda} q^{-\dim(v_1,\ldots, v_p)} \\
        &= q \ell^p  \sum_{S \subseteq[p]} \inparen{-\onebyq}^{p - |S|} \sigma_{|S|}(\Lambda).
\end{align*}
Now, the fact that $\Lambda$ is good and that $|S| \leq p \leq (1 - \zeta)d$ implies that
\[ \frac{1}{q^{|S|}} \leq \sigma_{|S|}(\Lambda) \leq \frac{1}{q^{|S|}}\inparen{1 + \frac{1}{q}} \]
where the lower bound follows from the definition of $\sigma_p$. 
Thus, we can bound the above by
\begin{align*}
         q \ell^p  \sum_{S \subseteq[p]} \inparen{-\onebyq}^{p - |S|} \sigma_{|S|}(\Lambda)
        &\leq q\ell^p \inparen{ \inabs{\sum_{S \subseteq[p]} \inparen{-\onebyq}^{p- |S|} \inparen{ \frac{1}{q}}^{|S|} }
+ \inabs{ \sum_{S \subseteq [p] } \inparen{\onebyq}^{p - |S|} \inparen{\frac{1}{q}}^{|S| + 1} }} \\
        &= q \ell^p \inabs{ \frac{1}{q} - \onebyq}^p + \ell^p \inparen{ \frac{1}{q} + \onebyq }^p \\
	&\nmrk{=  \inparen{ 2\cdot \centermu }^p}.
\end{align*}
\end{proof}

Finally, Sub-Claims~\ref{claim:itsallmu} and \ref{claim:itsallmutoo} imply that
\[ \EE\inparen{ \pl^{(\ell)}_x(\Lambda) - \centermu }^p \leq \max\nmrk{\inset{  \inparen{2\cdot \centermu}^p, \inparen{\centermu}^p }  \le \inparen{2\cdot \centermu}^p}. \]
%
This proves Claim~\ref{claim:smallp}.
\end{proof}

\nmrk{
Next, we prove a bound on the uncentered moments:
\begin{claim} 
\label{clm:gen-p-centered-mom}
For any $p>0$, we have
\[\EE_x (\mrk{\pl^{(\ell)}_x}(\Lambda))^p \le q\ell^p \sigma_p(\Lambda).\]
\end{claim}
\begin{proof}
Similarly to the above, we may bound the uncentered moments of $\pl^{(\ell)}_x(\Lambda)$ in terms of $\sigma_p(\Lambda)$; more precisely, we have for all $p > 0$,
        \begin{align*}
                \EE_x (\mrk{\pl^{(\ell)}_x}(\Lambda))^p &\leq \ell^p \EE( \pl_x(\Lambda) )^p\\
                & = \ell^p \EE_x \inparen{\frac{1}{L}\cdot \sum_{v \in \Lambda} \ind{ \ip{x}{v} = \mrk{\argmax(x,\Lambda)} } }^p \\
                &=  
                \frac{\ell^p}{L^p} \EE_x \sum_{v_1,\ldots,v_p \in \Lambda} \prod_{i=1}^p \ind{ \ip{v_i}{x} = \argmax(x,\Lambda)}  \\
                &\leq \frac{\ell^p}{L^p} \EE_x \sum_{\alpha \in \F}  \sum_{v_1,\ldots,v_p \in \Lambda} \prod_{i=1}^p \ind{ \ip{v_i}{x} = \alpha }  \\ 
                &= \ell^p \sum_{\alpha \in \F}  \EE_{v_1,\ldots,v_p \in \Lambda} \EE_x \prod_{i=1}^p \ind{ \ip{v_i}{x} = \alpha }  \\
                &\leq \ell^p \sum_{\alpha \in \F}  \EE_{v_1,\ldots,v_p} q^{-\dim(v_1,\ldots,v_p)} \\
                &= q\ell^p \sigma_p(\Lambda).
        \end{align*}
\end{proof}
We now use the above to prove a sharper bound on the small $p$ uncentered moments:
\begin{claim} 
\label{clm:small-p-centered-mom} Suppose that $\Lambda$ is good.
For any $p\le d(1-\zeta)$, we have
\[\EE_x (\mrk{\pl^{(\ell)}_x}(\Lambda))^p \le (q+1)^{p/(d(1-\zeta))}\inparen{\centermu}^p.\]
\end{claim}
\begin{proof}
For notational convenience, define $r=d(1-\zeta)$. We will first relate the $p$th moment in terms of the $r$th moment bound. In particular,
{\allowdisplaybreaks
\begin{align*}
\EE_x (\mrk{\pl^{(\ell)}_x}(\Lambda))^p 
& \le \inparen{\EE_x (\mrk{\pl^{(\ell)}_x}(\Lambda))^r}^{p/r}\\
& \le \inparen{q\cdot\ell^r\cdot \sigma_r(\Lambda)}^{p/r}\\
& \le \inparen{\frac{(q+1)\ell^r}{q^r}}^{p/r}\\
& = (q+1)^{p/r}\cdot \inparen{\centermu}^p,
\end{align*}
}
as desired. In the above, the first inequality follows from Jensen's inequality (note that $z^{p/r}$ for $p<r$ is convex), the second inequality follows from Claim~\ref{clm:gen-p-centered-mom} while the final inequality follows from the fact that $\Lambda$ is good and $p\le r=d(1-\zeta)$.
\end{proof}
}

Next, we handle the remaining terms, where $p$ is large.
\begin{claim}\label{claim:bigp}
Let $\Lambda$ be good.
For all $p > (1 - \zeta)d$, 
\[ \EE_x \inparen{ \pl^{(\ell)}_x(\Lambda) - \centermu }^p \leq \max\inset{\inparen{\centermu}^p, d \cdot q^{1 - d(1 -2\zeta)}} . \]
\end{claim}
\begin{proof}
We first consider the uncentered moments $\EE_x \pl^{(\ell)}_x(\Lambda)^p$. \nmrk{By Claim~\ref{clm:gen-p-centered-mom}, we have 
\[\EE_x (\mrk{\pl^{(\ell)}_x}(\Lambda))^p \leq  q\ell^p \sigma_p(\Lambda).\]
}
In particular, for $d(1 - \zeta) < p \leq d$, we have
\begin{align*}
\EE_x (\pl^{(\ell)}_x(\Lambda))^p &\leq q \ell^p \cdot \frac{ d}{q^{d(1 - 2\zeta)}\cdot \ell^p } \\
&= d \cdot q^{1 - d(1 -2\zeta)},
\end{align*}
using the assumption that $\Lambda$ is good.
Since $\pl_x^{(\ell)}(\Lambda) \in (0,1)$, for any $p_2 > p_1$, we have
\[ \pl_x^{(\ell)}(\Lambda)^{p_1} \leq \pl_x^{(\ell)}(\Lambda)^{p_2}. \]
Thus, we may extend the above to all $p > d(1 - \zeta)$, and conclude that for all such $p$,
\begin{equation}\label{eq:bigp2}
 \EE_x(\pl^{(\ell)}_x(\Lambda))^p \leq d \cdot q^{1 - d(1 - 2\zeta)}. 
\end{equation}
We translate the centered moments above into the assumptions on $\sigma_p(\Lambda)$ (which are not centered) via the following observation:
\begin{subclaim}\label{claim:centering}
Let $Z$ be a positive random variable.  Then \nmrk{for any $\barmu>0$,}
\[ \EE(Z - \barmu)^p \leq \max\inset{ \barmu^p, \EE(Z^p) }. \]
\end{subclaim}
\begin{proof}
For any $A,B$, we have $(A - B)^p \leq |A^p - B^p|$.  Thus, 
\begin{align*}
\EE(Z - \barmu)^p &\leq \EE| Z^p - \barmu^p | 
\leq \max \inset{ \EE(Z^p), \barmu^p },
\end{align*}
where we have used in the final inequality that $Z$ and $\barmu$ are positive.
\end{proof}

Thus,
using Subclaim~\ref{claim:centering} with $Z = \pl^{(\ell)}_x(\Lambda)$ and \eqref{eq:bigp2},
we establish that for all $p > d(1 - \zeta)$ we have
\begin{align*}
 \EE_x\inparen{\pl^{(\ell)}_x(\Lambda) - \centermu}^p &\leq \max \inset{ d \cdot q^{1 - d(1 - 2\zeta)} , \inparen{\centermu}^p }.
\end{align*}
\end{proof}

Claims~\ref{claim:smallp},~\ref{clm:gen-p-centered-mom},~\ref{clm:small-p-centered-mom}, and~\ref{claim:bigp} show that 
\[ \EE_x \inparen{\mrk{\pl^{(\ell)}_x}(\Lambda) - \barmu }^p \leq 
\begin{cases} 
\nmrk{ \inparen{\beta\cdot\centermu}^p } & p \leq d(1 - \zeta) \\
 \nmrk{\inparen{\tau\cdot\centermu}^p} + d \cdot q^{1-d(1 - 2\zeta)}  & p > d(1 - \zeta),
\end{cases} \]
\nmrk{
where 
\[\tau=\begin{cases}
0 & \text{ if } \barmu=0\\
1 & \text{ if } \barmu=\centermu
\end{cases}.\]
Note that $\tau\le \beta$.
}
Plugging the above into \eqref{eq:genbound}, we obtain
	\begin{align*}
		&\PR{ \sum_{x \in X} \mrk{\pl^{(\ell)}_x}(\Lambda)  > n\eps }\\
		&\leq \exp(-\lambda n \Delta) \inparen{ \sum_{p=0}^{\infty} \inparen{ \frac{\nmrk{\inparen{\beta\cdot \centermu}^p} \lambda^p }{p!} } 
+ \frac{qd}{q^{d(1 - 2\zeta)}} \sum_{p>d(1 - \zeta)} \frac{\lambda^p}{p!} }^n \\
		&\leq \exp( -\lambda n \Delta) \inparen{ \exp\inparen{\nmrk{\beta\cdot \centermu\cdot}\lambda}
+ \frac{qd}{q^{d(1 - 2\zeta)}} \exp(\lambda) }^n\\
		&= \inparen{ \exp\inparen{ \lambda\inparen{ \nmrk{\beta\cdot\centermu - \eps+\barmu}}} +   \frac{ qd }{q^{d(1 - 2\zeta)}} \exp( \lambda(1 - \Delta) )}^n.
	\end{align*}
This establishes the Lemma.
\end{proof}

Now we may pick parameters to instantiate Lemma~\ref{lem:prbadX} to show that the good $\Lambda$'s do not pose a problem:
\begin{lemma}\label{lem:goodlambda}
There are constants $C,C'$ so that the following holds for sufficiently large $n$.
Choose $\eps$ so that $\nmrk{\barmu+\beta\cdot\centermu} < \eps < 1$, and choose any $\zeta \in (0,1/20)$ (where $\beta$ is as defined earlier). 
Suppose that
\[\nmrk{ R \le \inparen{\eps - \beta\cdot \centermu-\barmu}(1 - \nmrk{4}\zeta)}. \] 
Then the probability that there exists a good $\Lambda \subset \F^d$ of size $L$ and a set $X \subseteq \F^d$ so that $(X,\Lambda)$ is $({L},d,\eps\ml)$-bad and $\cols(X) \subseteq \cC$ is at most $q^{- C R\zeta n}$, for sufficiently large $n$.
\end{lemma}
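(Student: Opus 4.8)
The plan is a union-bound / first-moment argument. Fix a good set $\Lambda\subseteq\F^d$ of size $L$; use Lemma~\ref{lem:prbadX} to count the $X$ that are $(L,d,\eps\ml)$-average-bad with witness $\Lambda$; multiply by the probability that $\cols(X)$ lands inside $\cC$ for one such $X$; and then sum over all good $\Lambda$, of which there are few. The hypothesis $R\le(\eps-\beta\centermu-\barmu)(1-4\zeta)$ will be used precisely to make a good choice of the free parameter $\lambda$ in Lemma~\ref{lem:prbadX} possible.

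Carrying this out: by Lemma~\ref{lem:prbadX}, for every $\lambda>0$ a uniformly random $X\in(\F^d)^n$ is $(L,d,\eps\ml)$-average-bad with witness $\Lambda$ with probability at most $(p_0+p_1)^n$, where $p_0=\exp\inparen{-\lambda(\eps-\barmu-\beta\centermu)}$ and $p_1=\frac{qd}{q^{d(1-2\zeta)}}\exp\inparen{\lambda(1-\eps+\barmu)}$; here $\eps-\barmu-\beta\centermu>0$ by the hypothesis $\barmu+\beta\centermu<\eps$, and $0<1-\eps+\barmu<1$ since $\barmu<\eps<1$. Hence at most $q^{dn}(p_0+p_1)^n$ sets $X$ of size $n$ are average-bad with this witness. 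Every such $X$ is full rank, so $\cols(X)$ consists of $d$ linearly independent vectors in $\F^n$, and therefore $\PR{\cols(X)\subseteq\cC}\le q^{-dn(1-R)}$ — the bound already used in Lemma~\ref{lem:base2}. Since there are at most $\binom{q^d}{L}\le q^{dL}$ good $\Lambda$, a union bound over good $\Lambda$ and over the average-bad $X$ with each such witness gives
\[ \PR{\exists\text{ good }\Lambda,\ \exists\text{ an average-bad }X\text{ with witness }\Lambda\text{ such that }\cols(X)\subseteq\cC}\ \le\ q^{dL}\inparen{q^{dR}(p_0+p_1)}^n. \]
As $L=L_j$ and $d=d_j$ do not depend on $n$, it suffices to produce, for a suitable constant $c>0$ independent of $n$, a $\lambda$ with $q^{dR}(p_0+p_1)\le q^{-cR\zeta}$; then $q^{dL}\cdot q^{-cR\zeta n}\le q^{-CR\zeta n}$ once $n$ is large, with $C=c/2$.

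The crux is the choice of $\lambda$. Raising $\lambda$ shrinks $q^{dR}p_0=q^{dR-\lambda(\eps-\barmu-\beta\centermu)/\ln q}$ (we need roughly $\lambda(\eps-\barmu-\beta\centermu)/\ln q>dR$ to absorb the $q^{dR}$ factor) but inflates $q^{dR}p_1=qd\cdot q^{dR-d(1-2\zeta)+\lambda(1-\eps+\barmu)/\ln q}$ through its $\exp(\lambda(1-\eps+\barmu))$ term, which we can afford only up to about $\lambda(1-\eps+\barmu)/\ln q<d(1-2\zeta)-dR-\log_q(qd)$, thanks to the exponential cushion $q^{-d(1-2\zeta)}$ in $p_1$. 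These two demands are compatible: combining them requires $R(1-\eps+\barmu)<\inparen{(1-2\zeta)-R}(\eps-\barmu-\beta\centermu)$ up to a lower-order term, and the hypothesis $R\le(1-4\zeta)(\eps-\barmu-\beta\centermu)$ guarantees this with slack at least $2\zeta(\eps-\barmu-\beta\centermu)\ge2\zeta R$ (using $1-\eps+\barmu\le1-(\eps-\barmu-\beta\centermu)$ and $R\le1-4\zeta$). Picking $\lambda$ in the resulting window — and using $d\ge\dmax=\frac2\zeta\inparen{1+\log_q(2/\zeta)}$, together with $\zeta\le1/20$, so that the $\log_q(qd)$-sized corrections are dominated — yields $q^{dR}(p_0+p_1)\le q^{-cR\zeta}$, which finishes the argument.

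The only genuinely delicate step is this simultaneous control of $p_0$ and $p_1$: one has to check that the window for $\lambda$ is nonempty, which hinges on the rate hypothesis keeping $R/(\eps-\barmu-\beta\centermu)\le1-4\zeta$ and on $\dmax$ being large enough that the $\log_q(qd)$-sized corrections are swallowed — ultimately a ``$\log$ versus linear'' inequality that is valid for $\zeta\le1/20$ and $q\ge2$. A secondary point is to be careful with the constant factor lost in passing from $p_0+p_1$ to $2\max\{p_0,p_1\}$; this is absorbed by aiming slightly below $q^{-cR\zeta}$ for each of $p_0$ and $p_1$ separately, exploiting the $\Omega(\zeta(\eps-\barmu-\beta\centermu))$ slack identified above.
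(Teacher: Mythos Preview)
Your proposal is correct and follows essentially the same approach as the paper: a first-moment/union-bound argument using Lemma~\ref{lem:prbadX} to count bad $X$'s per good witness $\Lambda$, multiplying by $\PR{\cols(X)\subseteq\cC}\le q^{-(1-R)dn}$, and summing over at most $\binom{q^d}{L}$ good $\Lambda$'s to arrive at $q^{dL}\inparen{q^{dR}(p_0+p_1)}^n$. The only cosmetic difference is that the paper commits to the explicit choice $\lambda=\frac{R(1+\zeta)d\ln q}{\eps-\beta\centermu-\barmu}$ (making $q^{dR}p_0=q^{-dR\zeta}$ on the nose) and then verifies $q^{dR}p_1\le qd\cdot q^{-d\zeta}$ using the rate hypothesis, whereas you argue the existence of a suitable $\lambda$ in a window; both lead to the same conclusion via $d\ge\dmax$.
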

\begin{proof}
Let $\Delta = \eps - \barmu$ as above.
We choose 
\[\lambda = \frac{R(1  + \zeta) d \log(q) }{\nmrk{\eps - \beta\cdot \centermu-\barmu}}\]
and apply Lemma~\ref{lem:prbadX}.
This implies
that for any good $\Lambda$,  the probability of a random  $X$ being $({L},d,\eps\ml)$-average-bad with good witness $\Lambda$ is at most
\[ p_{bad}^n := (p_0 + p_1 )^n \]
where the $p_i$'s are as defined in the conclusion of Lemma~\ref{lem:prbadX}.
Thus, the number of such bad $X$ is at most $q^{dn}\cdot p_{bad}^n$.
Now, for any fixed set $X$, the probability that $\cols(X)$ are contained in a random linear code $\cC \subseteq \F^n$ of rate $R$ is 
\[ \mathbb{P}_{\cC} \inset{ \cols(X) \subseteq \cC } \leq q^{-(1 - R)dn }. \]
Applying a union bound over all sets $X$ that are $(L,d,\eps\ml)$-average-bad for a fixed good witness $\Lambda$, and then a union bound over the at most ${q^d \choose L}$ good sets $\Lambda \subset \F^d$ of size $L$, 
\begin{align*}
	&\PR{ \exists\  (L,d,\eps\ml)\text{-average-bad } X \text{ with good witness } \Lambda \text{ and with } \cols(X) \subseteq \cC } \\
	&\qquad\leq {q^d \choose L} \cdot q^{dn} \cdot p_{bad}^n \cdot q^{-(1 - R)dn } \\
&\qquad \leq q^{dL} \cdot \inparen{q^{dR} \cdot p_{bad}}^n \\
&\qquad = q^{dL}\cdot \inparen{ q^{dR}p_0 + q^{dR}p_1 }^n. 
\end{align*} 
We will bound this by first bounding each of $q^{dR}p_0$ and $q^{dR}p_1$.  For the first term, we have
\begin{align}
q^{dR} p_0 &= \exp\inparen{-\lambda\inparen{ \nmrk{\eps -\beta\cdot\centermu-\barmu} }}\cdot q^{dR} \notag\\
&= q^{dR -d(R(1 + \zeta)) ) } \notag\\
&= q^{-dR\zeta}.\label{eq:final1}
\end{align}

For the second term,
\begin{align}
q^{dR} p_1 &=
\frac{qd}{q^{d(1 - 2\zeta)}} q^{dR} \exp( \lambda( 1 - \Delta ) )  \notag\\
&= qd \exp_q\inparen{ dR - d(1 - 2\zeta) + dR(1 + \zeta)\inparen{ \frac{ 1 - \Delta }{\nmrk{\Delta -\beta\cdot\centermu}  } } } \notag\\
&= qd \exp_q\inparen{ d \inparen{R\inparen{ 1 + \frac{(1 + \zeta)(1 - \Delta)}{\nmrk{\Delta -\beta\cdot\centermu}  } } -1 + 2\zeta  }} \notag\\
&\leq qd\exp_q \inparen{ d \inparen{ \nmrk{\inparen{\Delta-\beta\cdot\centermu}}(1 - 4\zeta)\inparen{1 + \frac{(1 + \zeta)(1 - \Delta)}{\nmrk{\Delta -\beta\cdot\centermu} }} - 1 + 2\zeta  }}\label{eq:userate}\\
&\nmrk{\le} qd \exp_q \inparen{ d \inparen{  (1 - 4\zeta)(1 + \zeta(1 - \Delta) ) -1  +2 \zeta }} \notag\\
&\leq qd\exp_q\inparen{d \inparen{  (1 - 4\zeta)(1  + \zeta) - 1 + 2\zeta }} \notag\\
&\leq qd \exp_q\inparen{ -d\zeta }\label{eq:final2}\\
\end{align}
where in \eqref{eq:userate} we have used the assumption that $R < (1 - 4\zeta)\inparen{\Delta-\beta\cdot\centermu}$.
Together, we have
\begin{align}
	&\PR{ \exists\  (L,d,\eps\ml)\text{-average-bad } X \text{ with good witness } \Lambda \text{ and with } \cols(X) \subseteq \cC } \notag\\
&\qquad \leq q^{dL} \cdot \inparen{ q^{dR}p_0 + q^{dR}p_1 }^n\notag\\
&\qquad \leq q^{dL} \inparen{ q^{-dR\zeta} + qd q^{-d\zeta }}^n \label{eq:usingprev}\\
&\qquad \leq q^{dL} \inparen{  q^{-C'dR\zeta }}^n \label{eq:usingzeta}\\
&\qquad \leq q^{-C R d \zeta n}, \label{eq:usingn}
\end{align}
for some constants $C',C$ and for sufficiently large $n$. 
Above, in \eqref{eq:usingprev} we have used \eqref{eq:final1} and \eqref{eq:final2}; in \eqref{eq:usingzeta}, we have used the fact that $d \geq d_{j_{\max}}$ (recall that for this proof, $d$ is equal to $d_j$, which is at least $d_{j_{\max}}$), as well as the definition of $\dmax$ and the fact that $\zeta$ is sufficiently small.
In the final line \eqref{eq:usingn}, we used the fact that since $p_{bad}$ does not depend on $n$, we may take $n$ sufficiently large (compared to $\eps, \ell, d$) to bound $q^{dL} p_{bad}^n \leq q^{-C R d \zeta n}$ for some constant $C$ which does not depend on $n$.
\end{proof}

Thus, as long as $\Lambda$ is good, we are in good shape: the probability that there is a bad $X$ with a good witness is very small.  We now turn our attention to the bad $\Lambda$'s.  We will show that if there is some bad $X$ with a bad witness, then with high probability, either \textbf{Hypothesis}$(i)$ is violated for some $i > j$, or else \textbf{Hypothesis}$(j)$ holds.
\begin{lemma}\label{lem:badlambdas}  There is a constant $C$ so that the following holds for sufficiently large $n$.
	Suppose that
	the hypotheses of Lemma~\ref{lem:mainavg} hold, and that $q \geq \ell^{2/\zeta}$.  Let $\cC \subset \F_q^n$ be a random linear code of rate $R$.
	Let $\Lambda \subseteq \F^d$ be bad. 
Then with probability at least $1 - q^{-C\zeta R n }$, at least one of the following is true:
\begin{itemize}
	\item[(A)] there is no $X \subseteq \F^d$ of size $n$ so that $(X,\Lambda)$ is $(L,d,\eps\ml)$-average-bad; or
	\item[(B)] \textbf{Hypothesis}$(i)$ fails to hold for $\cC$ for some $i \geq j$.
\end{itemize}
\end{lemma}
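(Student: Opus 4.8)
The plan is to interpolate between the two situations already under control: the good-witness case (Lemma~\ref{lem:goodlambda}) and the case of a genuinely low-dimensional witness, which --- as in Claim~\ref{claim:easyversion} --- violates \textbf{Hypothesis}$(i)$ for some $i>j$ via Lemma~\ref{lem:projection}. The obstacle, and the reason this needs more than the one-line argument of Claim~\ref{claim:easyversion}, is exactly complication (3) from the start of Section~\ref{sec:avgproof}: average-badness is not inherited by subsets, so a large low-dimensional $\Gamma\subseteq\Lambda$ extracted from Lemma~\ref{lem:sigma} need not be a witness for $X$'s average-badness at all.

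First I would introduce the failure event $\mathcal{B}$: there is a \textbf{good} set $\Gamma\subseteq\F^d$ with $L'/\alpha\le|\Gamma|\le L$ and a set $X\subseteq\F^d$ of size $n$ with $\cols(X)\subseteq\cC$ so that $(X,\Gamma)$ is $(|\Gamma|,d,\eps\ml)$-average-bad. The moment estimates for good sets (Claims~\ref{claim:smallp}, \ref{clm:gen-p-centered-mom}, \ref{clm:small-p-centered-mom}, \ref{claim:bigp}), and hence the entire Chernoff computation behind Lemma~\ref{lem:goodlambda}, are insensitive to $|\Gamma|$, so the same argument gives $\PR{\mathcal{B}}\le q^{-C\zeta Rn}$ after a harmless union bound over the at most $L$ admissible values of $|\Gamma|$. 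I will then show that on $\mathcal{B}^{c}$, either (A) or (B) holds.

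So suppose (A) fails: there is a full-rank $X_0\subseteq\F^d$ of size $n$ with $\cols(X_0)\subseteq\cC$ and $(X_0,\Lambda)$ $(L,d,\eps\ml)$-average-bad. Fix the optimal center $z_0\in\F^{\ell\times n}$ as in Proposition~\ref{prop:plrs} and set $w(v):=\agr(z_0,X_0v)\in[0,1]$, so $\sum_{v\in\Lambda}w(v)\ge\eps L$. Now run the following peeling procedure with $\Lambda^{(0)}=\Lambda$: while $|\Lambda^{(t)}|\ge L'/\alpha$ and $\Lambda^{(t)}$ is bad, use Lemma~\ref{lem:sigma} to extract $\Gamma^{(t)}\subseteq\Lambda^{(t)}$ with $\dim(\spn(\Gamma^{(t)}))\le d'$ and $|\Gamma^{(t)}|\ge\alpha|\Lambda^{(t)}|$ (legal since $\dim(\spn(\Lambda^{(t)}))\le d$ and $\alpha\le\min\{(2dq^2)^{-1},\zeta/(qe)\}$); if $\sum_{v\in\Gamma^{(t)}}w(v)\ge\eps'|\Gamma^{(t)}|$ stop (\emph{case (a)}), and otherwise put $\Lambda^{(t+1)}=\Lambda^{(t)}\setminus\Gamma^{(t)}$ and continue. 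The point of the weight test is the elementary fact that deleting from a set a subset whose average $w$-value is below $\eps'<\eps$ can only raise the remaining average; hence the average of $w$ over $\Lambda^{(t)}$ stays $\ge\eps$ throughout. This is what makes the loop terminate correctly: if it ran until $|\Lambda^{(t)}|<L'/\alpha$, then on one hand $\sum_{v\in\Lambda^{(t)}}w(v)\le|\Lambda^{(t)}|<L'/\alpha$, while on the other hand, all deleted chunks having $w$-average below $\eps'=(1-\gamma)\eps$ forces $\sum_{v\in\Lambda^{(t)}}w(v)>\eps L-\eps'(L-|\Lambda^{(t)}|)\ge\gamma\eps L$; since $L'/\alpha<\gamma\eps L$ (the exponent $2/\gamma$ in the definition of $L'$ in~\eqref{eq:js} is large, so $L'/\alpha$ is negligible compared to $\gamma\eps L$), this is a contradiction. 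So the loop ends either in case (a) or because $\Lambda^{(t)}$ became \textbf{good} (\emph{case (b)}), necessarily with $|\Lambda^{(t)}|\ge L'/\alpha$.

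In case (a), $\Gamma^{(t)}$ satisfies $|\Gamma^{(t)}|\ge\alpha|\Lambda^{(t)}|\ge L'$, $\dim(\spn(\Gamma^{(t)}))\le d'$, and $\sum_{x}\pl^{(\ell)}_x(\Gamma^{(t)})\ge\eps' n$ (by the identity from Proposition~\ref{prop:plrs} and the choice of $z_0$), so $(X_0,\Gamma^{(t)})$ is $(L',d',\eps'\ml)$-average-bad; Lemma~\ref{lem:projection} then yields a full-rank $X_0'\subseteq\F^{d'}$ of size $n$ with $\cols(X_0')\subseteq\spn(\cols(X_0))\subseteq\cC$ that is $(L',d',\eps'\ml)$-average-bad, and since $(L',d',\eps')=(L_{j+1},d_{j+1},\eps_{j+1})$ this means \textbf{Hypothesis}$(j+1)$ fails, so (B) holds with $i=j+1>j$ --- deterministically. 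In case (b), $\Lambda^{(t)}$ is a good set with $L'/\alpha\le|\Lambda^{(t)}|\le L$, and since the average of $w$ over it is $\ge\eps$ we get $\sum_x\pl^{(\ell)}_x(\Lambda^{(t)})\ge\eps n$, i.e.\ $(X_0,\Lambda^{(t)})$ is $(|\Lambda^{(t)}|,d,\eps\ml)$-average-bad with $\cols(X_0)\subseteq\cC$; that is precisely the event $\mathcal{B}$, which is excluded. Hence on $\mathcal{B}^{c}$, either (A) holds or case (a) occurs and (B) holds, which proves the lemma. I expect the genuine work to be exactly the bookkeeping that makes the peeling close: matching the stopping threshold $L'/\alpha$ and the weight threshold $\eps'=(1-\gamma)\eps$ to the definition of $L'$ so that the terminal chunk in case (a) is still of size $\ge L'$ and the terminal good set in case (b) is still of size $\ge L'/\alpha$ --- which is what dictates the extra $(\eps_0(1-\zeta))^{2/\gamma}$ factor in $L_{j+1}$ in~\eqref{eq:js} --- together with verifying that the Chernoff bound of Lemma~\ref{lem:goodlambda} really is uniform in the size of the good witness.
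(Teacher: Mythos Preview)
Your approach is correct and is a genuine variant of the paper's argument, not a rephrasing. Both arguments are peeling procedures driven by Lemma~\ref{lem:sigma}, but they differ in the bookkeeping invariant and in how the dichotomy ``(B) holds vs.\ good witness found'' is detected.

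The paper assumes (B) fails at the outset, so by \textbf{Hypothesis}$(j+1)$ every low-dimensional chunk $\Gamma$ of size $\ge L'$ automatically has $\sum_x\pl^{(\ell)}_x(\Gamma)\le(1-\gamma)\eps n$; removing such a $\Gamma$ then forces the \emph{plurality} of the remainder to grow multiplicatively by $(1+\gamma\alpha)$, and the process terminates because after $s_{\max}=\log(\alpha L/L')/\alpha$ steps the plurality would exceed $n$. You instead freeze the optimal center $z_0$ for the original $\Lambda$ and work with the fixed weight $w(v)=\agr(z_0,X_0v)$, testing each extracted $\Gamma^{(t)}$ directly: high $w$-average gives (B), low $w$-average lets you delete while keeping the remainder's $w$-average $\ge\eps$, and termination comes from additive mass conservation ($|\Lambda^{(t)}|(1-\eps')\ge\gamma\eps L$). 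The advantage of your route is exactly that $w$ is defined once and is trivially compatible with passing to subsets (since $\sum_{v\in\Gamma}w(v)\le|\Gamma|\sum_x\pl^{(\ell)}_x(\Gamma)/n$ for every $\Gamma$), so you sidestep the subtlety that $\pl^{(\ell)}_x$ itself changes with the subset; the paper handles this same subtlety via Claim~\ref{claim:stillstillbad}. The advantage of the paper's route is that the multiplicative termination $(1+\gamma\alpha)^{s_{\max}}>1/\eps$ is a clean one-line check, whereas your additive termination requires the inequality $(\eps_0(1-\zeta))^{2/\gamma}<\gamma\eps$, which you correctly flag as ``the genuine work'' --- it does follow from the parameter choices (roughly, $2/\gamma=2j_{\max}/\zeta$ is large enough that $(1-\zeta)^{2/\gamma}$ beats $\gamma\eps$), but it is the same calibration that drives the paper's $s_{\max}$ bound. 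Your union bound over $|\Gamma|\in[L'/\alpha,L]$ to define $\mathcal{B}$ is also exactly what the paper does implicitly when it invokes Lemma~\ref{lem:goodlambda} on a $\tilde\Lambda$ of unspecified size $\ge L'/\alpha$.
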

Before we prove the Lemma, we briefly recall parameters and give an overview of the proof.
We recall from \eqref{eq:js} that 
\[ L' = \alpha \inparen{  \eps_0 (1 - \zeta) }^{2/\gamma} L. \]
Now we will see the reason for the complicated expression for $L'$:  Lemma~\ref{lem:badlambdas} is performing the role of Claim~\ref{claim:easyversion} in Section~\ref{sec:easy}.  In that case, the argument was straightforward: if there were some all-bad set $X$ with a bad witness $\Lambda$ (violating (A)), then there was a large, low-dimensional subset of $\Lambda$; then this subset was also a witness for the all-badness of $X$, and we were done.

However, with average-badness, it is not so straightforward.  Indeed, it is not the case that if $(X, \Lambda)$ is average-bad, then $(X, \Lambda')$ is also average-bad for some subset $\Lambda' \subseteq \Lambda$: since $(X,\Lambda)$ is only bad on average, $\Lambda' \subseteq \Lambda$ might pick out the ``good" part.  In order to get around this, we will have to do a more complicated argument, with an extra layer of induction.

This inner induction will cause $L$ to shrink by a factor of $\alpha$ at each step for at most $\log(\alpha L/L')/\alpha$ steps; with each step, we will increase the average-badness of our list.  The choice of $L'$ is optimized so that when we do this (reaching a list size of $L'/\alpha$), we will have sufficient badness to reach a conclusion.
This will bring us to a list size that is at least $L'/\alpha$; the outer layer of induction will cause an additional $\alpha$ fraction of shrinkage, which will land the list size at $L'$, establishing Lemma~\ref{lem:mainavg}.

With the intuition out of the way, we proceed with the proof of Lemma~\ref{lem:badlambdas}.
\begin{proof}[Proof of Lemma~\ref{lem:badlambdas}]
The proof proceeds by induction with the following inductive hypothesis.
	\begin{quote}
\textbf{sub-Hypothesis}$(s)$:
Suppose that neither (A) nor (B) occur, and let $X \subseteq \F^d$ be a $(L,d,\eps,\ell)$-average-bad set with witness $\Lambda$ that violates (A).  Then one of the following two things occurs.
	\begin{itemize}
		\item[(a)] There is some $X' \subseteq \F^d$ of size $n$ so that $X'$ is 
\[ ( L'/\alpha, d, \eps\ml)\text{-average-bad} \]
with a \em good \em witness $\Lambda'$, and $\cols(X') \subseteq \cC$.
		\item[(b)] There is a subset $\tilde{\Lambda} \subseteq \Lambda$ with
			$|\tilde{\Lambda}| \geq (1 - \alpha)^sL$, 
	\[ \sum_{x \in X}\mrk{\pl^{(\ell)}_x}(\tilde{\Lambda}) \geq \eps n (1 + \gamma \alpha)^{s}. \]
	\end{itemize}
\end{quote}
	
	For the base case, $s = 0$, we choose $\tilde{\Lambda} = \Lambda$.  Then the second case (b) occurs: $\tilde{\Lambda}$ has size $L$ and the fact that $X$ is $(L,d,\eps\ml)$-average-bad implies that
	\[ \sum_{x \in X} \mrk{\pl^{(\ell)}_x}(\tilde{\Lambda})\geq \eps n. \]
	Thus, \textbf{sub-Hypothesis}$(0)$(b) holds.

	We will inductively peel sets $\Gamma$ off of $\tilde{\Lambda}$, using the following claims.
	\begin{claim} \label{claim:stillbad}
	If conclusion (B) does not occur, then 
	for all $X \subset \F^d$ of size $n$ with $\cols(X) \subseteq \cC$, and for all sets $\tilde{\Gamma}$ of dimension at most $d(1 -\zeta)$ and size at least $L'$,
	we have
	\[ \sum_{x \in X} \mrk{\pl^{(\ell)}_x}(\tilde{\Gamma}) \leq (1 - \gamma)\eps n. \]
\end{claim}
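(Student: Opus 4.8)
The plan is to prove the contrapositive: if the displayed inequality fails for some $X$ and $\tilde{\Gamma}$ as in the statement, then \textbf{Hypothesis}$(j+1)$ fails, and since $j+1 \geq j$ this is exactly conclusion (B). So suppose there is a set $X \subseteq \F^d$ of size $n$ with $\cols(X) \subseteq \cC$ (full rank, as is always the case in our applications, since $X$ comes from an average-bad pair), together with a set $\tilde{\Gamma}$ satisfying $\dim(\spn(\tilde{\Gamma})) \leq d(1 - \zeta)$, $|\tilde{\Gamma}| \geq L'$, and $\sum_{x \in X} \pl_x^{(\ell)}(\tilde{\Gamma}) > (1 - \gamma)\eps n$.

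First I would unpack the parameters. Using the section-local convention ($d = d_j$, $\eps = \eps_j$, $L = L_j$, and $d' = d_{j+1}$, $\eps' = \eps_{j+1}$, $L' = L_{j+1}$) together with the definitions in \eqref{eq:js}, we have $(1-\gamma)\eps = \eps_{j+1}$, $d(1-\zeta) = d_{j+1}$, and $L' = L_{j+1}$. Hence $(X, \tilde{\Gamma})$ already satisfies conditions (a)--(d) of Definition~\ref{def:bad} with parameters $(L_{j+1}, d_{j+1}, \eps_{j+1}\ml)$; the only thing preventing it from literally being an obstruction to \textbf{Hypothesis}$(j+1)$ is that $X$ and $\tilde{\Gamma}$ live in $\F^d$ rather than in the smaller space $\F^{d_{j+1}}$.

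That discrepancy is exactly what Lemma~\ref{lem:projection} is for. Applying it with $\tilde{\Gamma}$ in the role of $\Lambda$ and $d'' := \dim(\spn(\tilde{\Gamma})) \leq d_{j+1}$, we obtain $X' = \inset{Ax \suchthat x \in X} \subseteq \F^{d''}$ and $\tilde{\Gamma}' \subseteq \F^{d''}$ with $\ip{x}{v} = \ip{x'}{v'}$ for all corresponding pairs. I would then read the four consequences I need straight off the bulleted conclusions of that lemma: the ordered list $\inset{\pl_x^{(\ell)}(\tilde{\Gamma}) \suchthat x \in X}$ is preserved, so $\sum_{x' \in X'}\pl_{x'}^{(\ell)}(\tilde{\Gamma}') > \eps_{j+1}n$; $X'$ is full rank since $X$ is; $\cols(X') \subseteq \spn(\cols(X)) \subseteq \cC$ because $\cC$ is linear; and $|\tilde{\Gamma}'| = |\tilde{\Gamma}| \geq L_{j+1}$ with $\dim(\spn(\tilde{\Gamma}')) = d'' \leq d_{j+1}$. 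Re-embedding $\F^{d''} \hookrightarrow \F^{d_{j+1}}$ coordinate-wise (harmless, since Definition~\ref{def:bad} only requires $\dim(\spn(\cdot)) \leq d$) then exhibits $(X', \tilde{\Gamma}')$ as a $(L_{j+1}, d_{j+1}, \eps_{j+1}\ml)$-average-bad set with columns in $\cC$, contradicting \textbf{Hypothesis}$(j+1)$ and establishing (B).

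The hard part here is really just bookkeeping rather than any new idea: one must carefully line up the section-local parameter names with the global definitions in \eqref{eq:js}, invoke precisely the right bulleted conclusion of Lemma~\ref{lem:projection} (in particular that $\pl_x^{(\ell)}$ is projection-invariant), and check that a full-rank $X$ with $\cols(X) \subseteq \cC$ remains full-rank with columns in $\cC$ after projecting. This is the \emph{easy} half of the ``transitivity of problematic-ness'' phenomenon the authors flag as subtle: here we are already handed a concrete witness $\tilde{\Gamma}$ of excess agreement and only have to transport it downward, whereas the genuinely delicate steps of Lemma~\ref{lem:badlambdas}---producing such a $\tilde{\Gamma}$ in the first place, via the inner induction on $s$ and the subsequent claims---lie elsewhere.
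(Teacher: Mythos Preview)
Your proof is correct and follows essentially the same approach as the paper: argue the contrapositive, then invoke Lemma~\ref{lem:projection} to project the offending pair $(X,\tilde{\Gamma})$ down to dimension $d_{j+1}$, yielding an $(L_{j+1},d_{j+1},\eps_{j+1},\ell)$-average-bad $X'$ with $\cols(X')\subseteq\cC$ and hence a failure of \textbf{Hypothesis}$(j+1)$, which is an instance of (B). Your write-up is in fact more careful than the paper's two-sentence proof (which even writes ``\textbf{Hypothesis}$(j)$'' where $(j+1)$ is meant), and you correctly flag the full-rank assumption on $X$ that the claim statement leaves implicit but the projection step needs.
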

\begin{proof}
	If (B) does not hold, then \textbf{Hypothesis}$(j)$ does hold; so there are no $(L',d',\eps'\ml) = (L', d(1-\zeta), \eps(1- \gamma)\ml)$-average-bad sets $X'$ with $\cols(X') \subseteq \cC$.  
Thus, Lemma~\ref{lem:projection} gives the conclusion: there cannot be an $X, \tilde{\Gamma}$ with $\sum_{x \in X} \pl_x^{(\ell)}(\tilde{\Gamma})$ too small, or else we could project it down to find such an $X'$.

\end{proof}

	\begin{claim}\label{claim:stillstillbad} 
		Suppose that conclusion (B) does not occur, and
		suppose that $X$ is $(\tilde{L},d,\eps\ml)$-average-bad with witness $\tilde{\Lambda}$, for some $\tilde{L} \geq L'/\alpha$.  
		Let ${\Gamma} \subset \tilde{\Lambda}$ so that $|{\Gamma}| \geq |\tilde{\Lambda}|\alpha$, and $\dim({\Gamma}) \leq d(1 - \zeta).$
		Then
		\[ \sum_{x \in X} \mrk{\pl^{(\ell)}_x}(\tilde{\Lambda} \setminus {\Gamma}) \geq \eps n \inparen{1 + \gamma \alpha}.\]
	\end{claim}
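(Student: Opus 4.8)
The plan is to use the fact that conclusion (B) is assumed false: by Claim~\ref{claim:stillbad}, for the set $X$ at hand (which satisfies $\cols(X)\subseteq\cC$), every set of dimension at most $d(1-\zeta)$ and of size at least $L'$ has top-$\ell$ plurality sum at most $(1-\gamma)\eps n$ against $X$. I would combine this with an elementary splitting inequality for top-$\ell$ pluralities and one line of arithmetic.

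First the splitting inequality. For a finite set $\Psi\subseteq\F^d$ with a partition $\Psi=\Psi_1\cup\Psi_2$ and any $x\in\F^d$, let $T\subseteq\F$, $|T|\le\ell$, realize $\pl_x^{(\ell)}(\Psi)$; then
\[ |\Psi|\,\pl_x^{(\ell)}(\Psi)=\sum_{\alpha\in T}\inabs{\inset{v\in\Psi:\ip{x}{v}=\alpha}}\ \le\ |\Psi_1|\,\pl_x^{(\ell)}(\Psi_1)+|\Psi_2|\,\pl_x^{(\ell)}(\Psi_2), \]
since for each $\Psi_i$ a sum of at most $\ell$ of its bucket sizes $\inabs{\inset{v\in\Psi_i:\ip{x}{v}=\alpha}}$ cannot exceed the sum of its $\ell$ largest bucket sizes, which is exactly $|\Psi_i|\,\pl_x^{(\ell)}(\Psi_i)$. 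Applying this with $\Psi=\tilde\Lambda$, $\Psi_1=\tilde\Lambda\setminus\Gamma$, $\Psi_2=\Gamma$ and summing over the $n$ vectors $x\in X$:
\[ |\tilde\Lambda|\sum_{x\in X}\pl_x^{(\ell)}(\tilde\Lambda)\ \le\ |\tilde\Lambda\setminus\Gamma|\sum_{x\in X}\pl_x^{(\ell)}(\tilde\Lambda\setminus\Gamma)+|\Gamma|\sum_{x\in X}\pl_x^{(\ell)}(\Gamma). \]

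Now I would plug in the two bounds already available. On the left, average-badness of $X$ with witness $\tilde\Lambda$ gives $\sum_{x\in X}\pl_x^{(\ell)}(\tilde\Lambda)\ge\eps n$. For $\Gamma$: by hypothesis $\dim(\Gamma)\le d(1-\zeta)$, and $|\Gamma|\ge\alpha|\tilde\Lambda|\ge\alpha\tilde L\ge L'$ because $\tilde L\ge L'/\alpha$, so Claim~\ref{claim:stillbad} gives $\sum_{x\in X}\pl_x^{(\ell)}(\Gamma)\le(1-\gamma)\eps n$. (Note $\tilde\Lambda\setminus\Gamma\neq\emptyset$: otherwise $\tilde\Lambda=\Gamma$ itself would have dimension $\le d(1-\zeta)$ and size $\ge L'$, and Claim~\ref{claim:stillbad} applied to $\tilde\Lambda$ would force $\sum_{x\in X}\pl_x^{(\ell)}(\tilde\Lambda)\le(1-\gamma)\eps n<\eps n$, a contradiction.) Substituting, dividing by $|\tilde\Lambda\setminus\Gamma|=|\tilde\Lambda|-|\Gamma|>0$, and writing $t:=|\Gamma|/|\tilde\Lambda|\in[\alpha,1)$,
\[ \sum_{x\in X}\pl_x^{(\ell)}(\tilde\Lambda\setminus\Gamma)\ \ge\ \eps n\cdot\frac{|\tilde\Lambda|-(1-\gamma)|\Gamma|}{|\tilde\Lambda|-|\Gamma|}\ =\ \eps n\inparen{1+\frac{\gamma t}{1-t}}\ \ge\ \eps n\,(1+\gamma\alpha), \]
using $1-t\le 1$ and $t\ge\alpha$ in the last step; this is exactly the claim.

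The only nontrivial point is invoking Claim~\ref{claim:stillbad} to bound $\sum_{x\in X}\pl_x^{(\ell)}(\Gamma)$: the crude estimate $\sum_{x\in X}\pl_x^{(\ell)}(\Gamma)\le n|\Gamma|$ would only yield a lower bound on $\sum_{x\in X}\pl_x^{(\ell)}(\tilde\Lambda\setminus\Gamma)$ that falls below $\eps n$ and is useless. Forcing $\Gamma$ to carry badness at most $(1-\gamma)\eps n$ is precisely where the assumption that \textbf{Hypothesis}$(j)$ holds (equivalently, that (B) fails) is consumed, and it is what produces the multiplicative gain $(1+\gamma\alpha)$ needed to advance the inner induction on $s$; everything else is the definition of $\pl_x^{(\ell)}$ and elementary algebra.
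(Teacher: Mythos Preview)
Your proof is correct and follows essentially the same approach as the paper: establish the splitting inequality $|\tilde\Lambda|\pl_x^{(\ell)}(\tilde\Lambda)\le|\Gamma|\pl_x^{(\ell)}(\Gamma)+|\tilde\Lambda\setminus\Gamma|\pl_x^{(\ell)}(\tilde\Lambda\setminus\Gamma)$, apply Claim~\ref{claim:stillbad} to bound the $\Gamma$ term by $(1-\gamma)\eps n$, and rearrange. You are in fact slightly more careful than the paper, spelling out the justification for the splitting inequality and handling the degenerate case $\tilde\Lambda\setminus\Gamma=\emptyset$, which the paper glosses over.
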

	\begin{proof}
		Let $z \in \F^{\mrk{\ell\times}n}$ be the center of $\tilde{\Lambda}$, so that $z[j][i] =\argtp_i(x_j,\tilde{\Lambda})$.  Then by the definition of average-badness,
		\begin{align*}
			\eps n \tilde{L} 
&\leq \sum_{x \in X} \pl_x^{(\ell)}(\tilde{\Lambda})\cdot|\tilde{\Lambda}| \\
&\le |\Gamma| \sum_{x \in X} \pl_x^{(\ell)}(\Gamma) + ( |\tilde{\Lambda}| - |\Gamma| ) \sum_{x \in X} \pl_x^{(\ell)}(\tilde{\Lambda} \setminus \Gamma) \\
&\leq (1 - \gamma) \eps n |{\Gamma}| + (| \tilde{\Lambda}| - |\Gamma| ) \sum_{x \in X} \mrk{\pl^{(\ell)}_x}(\tilde{\Lambda} \setminus {\Gamma}),
		\end{align*}
		where we have used the fact that $|{\Gamma}| \geq \alpha |\tilde{\Lambda}| \geq L'$, and thus we can apply Claim~\ref{claim:stillbad}.
		Rearranging we have
		\[ \sum_{x \in X} \mrk{\pl^{(\ell)}_x} (\tilde{\Lambda} \setminus {\Gamma}) \geq \eps n \inparen{ 1+\frac{ \gamma (|{\Gamma}|/|\tilde{\Lambda}|) }{1 - |{\Gamma}|/|\tilde{\Lambda}| } } \geq \eps n \inparen{ 1 + \gamma \alpha}. \]
	\end{proof}

	Now, suppose that \textbf{sub-Hypothesis}$(s-1)$ holds, for 
	\[ 1 \leq s \leq \frac{ \log(\alpha L/L') }{\alpha}. \]
	If (a) holds for $s-1$, then it continues to hold for $s$, and so we establish \textbf{sub-Hypothesis}$(s)$.
	On the other hand, if (b) holds, there are two cases; either the set $\tilde{\Lambda}$ guaranteed by \textbf{sub-Hypothesis}$(s-1)$(b) is bad or it is good.
	\begin{itemize}
		\item Suppose $\tilde{\Lambda}$ is good.   By \textbf{sub-Hypothesis}$(s-1)$(b), $\tilde{\Lambda}$ is a witness for the $(|\tilde{\Lambda}|, d, \eps\ml)$-average-badness of $X$.  Notice that the dimension of $\tilde{\Lambda}$ is at most $d$ (since $\tilde{\Lambda} \subseteq \F^d$) and the size is at least 
\[ |\tilde{\Lambda}| \geq (1 - \alpha)^{s-1}L \geq \frac{L'}{\alpha} \]
using the bounds on $s$.
Thus, this establishes \textbf{sub-Hypothesis}$(s)$(a).
				
			\item Suppose $\tilde{\Lambda}$ is not good; thus, $(X,\tilde{\Lambda})$ is $(\tilde{L},d,\eps(1+\gamma\alpha)^{s-1}\ml)$-average-bad where $\tilde{L}\ge (1-\alpha)^{s-1}L \geq L'/\alpha$.
In this case, we may apply Lemma~\ref{lem:sigma} to find a set $\Gamma \subset \tilde{\Lambda}$ with size at least $|\tilde{\Lambda}|\alpha$ 
and with dimension at most $d(1 - \zeta)$. (Note that this is where we use the requirement that $q\ge \ell^{2/\zeta}$, which is required to apply Lemma~\ref{lem:sigma}.)  Notice that our hypotheses ensure that $q$ is large enough to use Lemma~\ref{lem:sigma}.  By Claim~\ref{claim:stillstillbad}, with this $\Gamma$, 
				\[ \sum_{x \in X} \mrk{\pl^{(\ell)}_x}(\tilde{\Lambda} \setminus \Gamma) \geq \eps \inparen{1 + \gamma \alpha}^{s}.\]
Further, since $|\tilde{\Lambda}| \geq (1 - \alpha)^{s-1} L $, we have 
\[|\tilde{\Lambda} \setminus \Gamma| = |\tilde{\Lambda}|(1 - \alpha) \geq (1 - \alpha)^s L. \]
		Updating $\tilde{\Lambda}$ to $\tilde{\Lambda} \setminus \Gamma$ establishes \textbf{sub-Hypothesis}$(s)$(b).
	\end{itemize}
Now, we carry out the induction up to	
	\[ s_{\max} = \log(\alpha L/L') / \alpha \]
	(at which point the inductive step stops working).
	We conclude that if neither (A) nor (B) hold, then
	either \textbf{sub-Hypothesis}$(s_{\max})$(a) or (b) holds.  We will show that this is unlikely, and conclude that at least one of (A) or (B) hold.

	First, by Lemma~\ref{lem:goodlambda}, the probability that (a) holds is at most $q^{-C \zeta R n}$ for some constant $C$.  Notice that we may apply Lemma~\ref{lem:goodlambda} to  $\tilde{\Lambda}$ since all of the relevant parameters ($d, \eps$) remain the same. 
Second, we claim that (b) cannot hold: 
using the definition
\[ L' = \alpha L \inparen{ \eps_0(1 - \zeta) }^{2/\gamma} \le \alpha L {\eps}^{2/\gamma}, \]
(recalling for this proof that $\eps = \eps_j \geq \eps_0(1 - \zeta)$), we see
\[ s_{\max} = \frac{ \log(\alpha L /L') }{\alpha} \geq \frac{ \log \inparen{ \frac{ \alpha L }{ \alpha L \eps^{2/\gamma} } } }{\alpha} = \frac{2}{\gamma \alpha} \log(1/\eps), \]
which in turn means that (b) reads as
	\[ \sum_x \mrk{\pl^{(\ell)}_x}(\tilde{\Lambda}) \geq n \eps ( 1 + \gamma \alpha )^{s_{\max}} 
		\geq n \eps \inparen{9/4 }^{ 2\log(1/\eps)  }
	 > n,\]
where above we have assumed that $\gamma < 1/2$ to assert that $(1 + \gamma \alpha)^{1/\gamma \alpha} \geq 9/4$.
	This is a contradiction since $\mrk{\pl_x^{(\ell)}}(\tilde{\Lambda}) \leq 1$, for all $x$, and there are at most $n$ things in the sum.

	This shows that for any bad $\Lambda \subseteq \F^d$, with probability at least $1 - q^{-C\zeta Rn}$, then at least one of (A) or (B) occurs.  This completes the proof of Lemma~\ref{lem:badlambdas}.
\end{proof}

Finally, Lemmas~\ref{lem:goodlambda} and \ref{lem:badlambdas} together imply Lemma~\ref{lem:mainavg}.  
To see this, we first fix a set $\Lambda$.  If $\Lambda$ is good, then by Lemma~\ref{lem:goodlambda}, with probability at least $1- q^{-CR\zeta n}$, \textbf{Hypothesis}$(j)$ holds.
If $\Lambda$ is bad, then by Lemma~\ref{lem:badlambdas}, with probability at least $1-q^{-C\zeta R n}$, either \textbf{Hypothesis}$(j)$ holds or else there is some $i \geq j$ so that \textbf{Hypothesis}$(i)$ does not hold.  Union bounding over all of the ${q^d \choose L}$ sets $\Lambda$ (and taking $n$ to be sufficiently large compared to $L, q, \ell, \zeta$) establishes Lemma~\ref{lem:mainavg}. 

\section{Conclusion}
We consider several instantiations of Question~\ref{q:bigq}---including list-decoding, average-radius list-decoding, and list-recovery, in a variety of parameter regimes---and provide a single argument that works for all of these.   Our argument can obtain improved results for list-decoding and list-recovery of low-rate codes, can establish list-recoverability of high-rate codes, and can also establish optimal average-radius list-decodability for constant-rate codes.  

Our work leaves several open questions. 
First, the obvious technical question is to reduce the quasipolynomial dependence on $\ell$ and the gap to capacity to optimal.   It seems as though this may be possible through a more careful treatment of our recursive argument.
A second and more philosophical question is about the nature of Question~\ref{q:bigq}.  Are there other instantiations of this question that our techniques can shed light on?  Trying to obtain applications in pseudorandomness is a natural candidate.

\section*{Acknowledgments}
We would like to thank Venkat Guruswami for helpful discussions.
\bibliographystyle{alpha}
\bibliography{refs}
\end{document}